\newtheorem{theorem}{Theorem}
\newtheorem*{theorem*}{Theorem}
\newtheorem{proposition}{Proposition}
\newtheorem{definition}{Definition}
\newtheorem{algorithm}{Algorithm}
\newtheorem{lemma}{Lemma}
\newtheorem{conjecture}{Conjecture}
\newtheorem{fact}{Fact}
\def\calC{{\cal C}}
\def\F{\mathbb{F}}
\def\poly{{\rm poly}}
\def\log{{\rm log}}
\def\Span{{\rm Span}}
\def\tr{{\rm tr}}
\newcommand{\be}{\begin{eqnarray}}
\newcommand{\ee}{\end{eqnarray}}
\newcommand\ceil[1]{{\lceil #1 \rceil}}
\newcommand\ket[1]{{ |{#1} \rangle }}
\newcommand\bra[1]{{ \langle {#1} | }}
\newcommand\ketbra[1]{{\ket{#1}\bra{#1}}}
\def\P{{\sf{P}}}
\def\NLTS{{\sf{NLTS}}}
\def\qLTC{{\sf{qLTC}}}
\def\sLTC{{\sf{sLTC}}}
\def\LTC{{\sf{LTC}}}
\newcommand{\eps}{\varepsilon}
\renewcommand{\epsilon}{\varepsilon}
\title{Robust Quantum Entanglement at (nearly) Room Temperature}
\begin{document}

\author{Lior Eldar}

\maketitle

\abstract{
We formulate an average-case analog of the $\NLTS$ conjecture of Freedman and Hastings (QIC 2014)
by asking whether there exist topologically ordered systems with corresponding local Hamiltonians 
for which the thermal Gibbs state for constant temperature cannot even be approximated by shallow quantum circuits.

We then prove this conjecture for nearly optimal parameters:
we construct a quantum error correcting code whose corresponding (log) local Hamiltonian has the following property:
for nearly constant temperature (temperature decays as $1/\log^2\log(n)$) 
the thermal Gibbs state of that Hamiltonian cannot be approximated by any circuit
of depth less than $\log(n)$.
In fact, we show a stronger statement: that one can recover a bona-fide code-state from the Gibbs state by applying a shallow decoder.
In particular, it implies that appropriately chosen local Hamiltonians can give rise to ground-state long-range entanglement
which can survive without active error correction at temperatures which are nearly independent of the system size:
thereby improving exponentially over previously known bounds.

The proof introduces a new approach for placing lower bounds on the depth of quantum circuits that approximate quantum states - by demonstrating a shallow decoder for quantum error correcting codes.
This adds to the very few techniques available for showing such bounds (see e.g. Eldar and Harrow, FOCS 2017) and hence might be useful elsewhere.
Specifically,
the construction and proof combine quantum codes that arise from high-dimensional manifolds from the works of Hastings (ITCS 2017) and
Leverrier et al. (QIP 2019),
the local-decoding approach to quantum codes by Leverrier et al. (FOCS 2015)
and Fawzi et al. (STOC 2018)
and quantum locally-testable codes by Aharonov and Eldar (SICOMP 2013).

}

\section{General}

\item
In order to perform universal quantum computation, one should at the very least be able to store
quantum states for long periods of time.
While the Fault Tolerance theorem \cite{AB96} makes this possible using active error correction,
in parallel, and in part due to the limitations of the FT theorem (see e.g. \cite{AHH+02})
a huge research effort was devoted to finding quantum systems that can retain quantum information
passively - namely a self-correcting quantum memory.

Self-correcting quantum memories are often referred to as topologically-ordered systems (or TQO) which is a phase of matter
that exhibits long-range entanglement at $0$ temperature.
Since $0$ temperature states are essentially theoretical objects that one does not expect to encounter
in the lab, the race was on to find TQO systems whose long-range entanglement can survive
at very high temperatures - ideally at a constant temperature $T>0$ that is independent of the system size.

In recent years there has been progress in ruling out such robustness for low-dimensional
systems like the 2-D and 3-D Toric Code, but there has been an indication that perhaps in 4 dimensions and above,
robustness is more likely (see Section \ref{sec:beta} for a summary of these results).
Intriguingly, it seems that quantum mechanics does not fundamentally limit the ability
to store quantum states for long times, at least for high-dimensional systems.
Despite that
there remains today a large gap between our physical intuition and our ability to provide
formal proofs on the existence of robust systems. Hence
the problem of establishing the existence of robust TQO systems, even for high dimensions
is wide open.
In this work we focus on narrowing this gap.

\section{Topological Quantum Order (TQO)}


TQO is a phase of matter
(i.e. in addition to the traditional gas, liquid, and solid states)
defined as the zero eigenspace 
of a local Hamiltonian 
which is "robust" in the sense that
there can be no transition between orthogonal zero eigenstate without a phase transition
(See survey of TQO in \cite{Wen12}).
Formally one says that a system is $\eps$-TQO if any sufficiently local observable $O$ is unable
to discern orthogonal states of the groundspace - i.e. there exists some $z\neq 0$ such that
$$
\left\| P O P - z P \right\| \leq \eps
$$

In the language of quantum computing, TQO is mostly synonymous with quantum error correcting codes,
namely topological quantum codes - these are $\eps$-TQO systems with $\eps=0$.
Under error-correction terminology the TQO property of robustness of ground-state degeneracy is the quantum error-correcting
minimal distance: i.e. the system can retain its logical encoded state in the presence of sufficiently
small errors.

Thus, TQO systems have the promise that at zero temperature, their entanglement can passively sustain itself 
(i.e. without active error-correction) as 
a form of self-correcting quantum memory.  It is this stability that brought forth the 
immensely influential paradigm of the topological quantum
computer by Kitaev \cite{Kit03, FKLW02}, and even initiated large-scale
engineering efforts in trying to build such a set-up \cite{Nature16}.

Yet to every silver lining there is a cloud: while TQO is a zero-temperature phenomenon by definition, 
a physicist attempting to prepare a TQO state in a lab can only expect to encounter
a Gibbs state (i.e. a thermal state) of the
Hamiltonian governing the TQO for some low temperature.
Namely, instead of finding a ground-state of a TQO Hamiltonian $H$, she usually
encounters a state $e^{-\beta H}$ for some $\beta = 1 / \kappa T$, for some low temperature $T \to 0$, where $\kappa$
is the Boltzmann constant.

Hence, for TQO to serve as a self-correcting quantum memory a necessary property
for such a system is that it retains its long-range entanglement at some non-zero temperature $T>0$
that is independent of the size of the system.


A natural treatment of robustness of TQO systems can be made using quantum circuit lower bounds,
a form of analysis initially considered in \cite{Has11}.
Under TQO terminology, topologically ordered states {\it cannot} be generated from a tensor product
state using a shallow circuit, whereas a state is said to be "trivial" if it {\it can} be generated
from product states by shallow circuits - namely it is nearly equivalent to a product state
in its lack of quantum entanglement.	
With this terminology in mind we consider the following conjecture:

\begin{mdframed}
\begin{conjecture}\label{conj:1}

\textbf{Robust Circuit Depth for Topologically Ordered Systems}

\noindent
There exists a number $\beta>0$ and a family of 
topologically-ordered systems (local Hamiltonians) $\left\{ H_n \right\}_n$ on $n$ qubits
such that
for all $\gamma \geq \beta$ we have:
Any quantum circuit $U$ that approximates the thermal Gibbs state $e^{-\gamma H}$
to vanishing trace distance has depth $\Omega(\log(n))$.
\end{conjecture}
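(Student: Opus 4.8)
The plan is to establish the conjecture --- in the near-optimal form in which the threshold $\beta$ is permitted to grow extremely slowly, like $\poly(\log\log n)$, and the Hamiltonian is allowed to be $(\log)$-local rather than strictly $O(1)$-local --- by taking $H_n$ to be the code Hamiltonian of a carefully engineered quantum LDPC code and deriving the depth bound from two ingredients: (i) a genuine code-state of a high-distance quantum code cannot be produced by a circuit of depth $o(\log n)$, and (ii) there is a \emph{shallow} ``decoder'' circuit that maps the thermal Gibbs state back onto a code-state. Composing the hypothetical shallow circuit that prepares the Gibbs state with the shallow decoder would then build a code-state by a shallow circuit, contradicting (i). Concretely, I would take the quantum codes from high-dimensional manifolds (Hastings; Leverrier et al.), arranged to have distance $d\ge n^{\Omega(1)}$, check weight and qubit-degree at most $\poly(\log n)$, and --- the decisive extra property --- the quantum local-testability of Aharonov--Eldar: every state that is $\delta$-far from the codespace violates an $\Omega(\rho\,\delta)$-fraction of the checks for a soundness $\rho$ decaying only mildly in $n$. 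Then $H_n=-\sum_i C_i$ over the rescaled ($\{0,1\}$-valued) stabilizer checks is $(\log)$-local, its ground space is the codespace, and the codespace is topologically ordered by the distance property.

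Next I would analyze the Gibbs state $\rho_\gamma=e^{-\gamma H_n}/Z$. By local-testability a syndrome supported on $t$ checks carries energy $\Omega(\rho\,t)$, so the Boltzmann weights suppress high-weight syndromes; a first-moment (entropy-versus-energy) estimate then shows that once $\gamma$ exceeds a threshold $\beta$ that is only polynomial in $\log\log n$ --- temperature of order $1/\log^2\log n$ for the concrete family --- the state $\rho_\gamma$ is within vanishing trace distance of a mixture of states $E\,\ketbra{\psi}\,E^\dagger$ with $\ket{\psi}$ a bona-fide codeword and $E$ a Pauli error whose syndrome, and hence (for an LDPC expander code) whose minimal-weight correction, is sparse and well-spread.

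On exactly this class of errors I would run the local / single-shot decoders for quantum expander codes of Leverrier et al. and Fawzi et al.: measuring all checks is one layer of $\poly(\log n)$-local gates, and because the errors in $\rho_\gamma$ are sparse and well-spread, only $o(\log n)$ --- indeed $\poly(\log\log n)$ --- parallel rounds of small-set local repair suffice to return to the codespace, producing a decoder $D$ of depth $o(\log n)$ that maps $\rho_\gamma$ to within $\epsilon$ of a code-state. Suppose now $U$ has depth $o(\log n)$ and prepares a state within vanishing trace distance of $\rho_\gamma$; then $D\circ U$ has depth $o(\log n)$ and prepares a state $\epsilon$-close to a code-state of a code of distance $n^{\Omega(1)}$. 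But a depth-$o(\log n)$ circuit has a causal lightcone of size $n^{o(1)}\ll d$, so the reduced density matrix on every region smaller than $d$ cannot carry the correlations of a true code-state --- one can exhibit a local observable that distinguishes two orthogonal codewords, contradicting the distance. (This is the topological-order obstruction underlying Bravyi--Hastings--Verstraete and Eldar--Harrow.) Hence $U$ has depth $\Omega(\log n)$, and the same holds for every $\gamma\ge\beta$ since $\rho_\gamma$ only concentrates further as $\gamma$ grows.

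The hard part, I expect, is the seam between the second and third steps: certifying that the \emph{actual} Boltzmann distribution of syndromes in $\rho_\gamma$ --- not merely worst-case sparse errors --- lies within the decoding radius of a genuinely $o(\log n)$-depth local decoder, while keeping the product of check weight, qubit-degree, inverse soundness and inverse decoding fraction small enough that the required $\gamma$ is only $\poly(\log\log n)$ and not $\poly(\log n)$. This is exactly where the fine geometry of the high-dimensional-manifold codes (near-constant expansion and soundness together with merely $\poly(\log n)$ locality) must be exploited quantitatively; it is further complicated by having to propagate the ``only trace-close to $\rho_\gamma$'' slack through the decoder and by the Hamiltonian being $(\log)$-local rather than strictly local, both in the Gibbs analysis and in the final lightcone argument.
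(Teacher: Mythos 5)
Your proposal follows essentially the same route as the paper: a locally testable, high-distance, $\log$-local code from high-dimensional manifolds (the Leverrier et al.\ projective code, with soundness amplified to a constant by taking random unions of checks), an energy-to-error translation via local testability, a shallow small-set local decoder, and the composition argument against the folklore $\Omega(\log n)$ lower bound for code-states. The one step you flag as the hard part is indeed where the paper's technical work lies: a first-moment entropy-versus-energy bound only controls the total error weight (which at these temperatures is $n/\poly\log(n)$, far above the distance), so the paper instead models the Gibbs state as a truncated Metropolis--Hastings walk, proves the resulting errors are locally stochastic, and invokes a percolation bound to get logarithmic-size, well-separated clusters that the Sipser--Spielman-type decoder can remove in depth $O(\log^2\log n)$.
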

\end{mdframed}

In words: the conjecture posits the existence of a TQO system (say, a quantum error-correcting code) for which
one can show a circuit lower bound for the thermal state for all $T$ from $0$ (i.e. the ground-state) up to some constant temperature.
Such a system exhibits "robustness" in the sense that the circuit lower bound for approximating its Gibbs state 
does not collapse when temperature is increased.
The "global" nature of entanglement that a TQO system state might carry is captured in the requirement that the
minimal circuit depth is lower-bounded by a number that is logarithmic in the number of qubits, to allow potentially
the coupling of {\it any} pair of qubits in the system, using a locally-defined quantum circuit.

Similar variants of this conjecture have been studied in physics literature: for example, in \cite{Yoshida11}
Yoshida provides a negative solution to a similar conjecture for codes embeddable in $2$ or $3$ dimensional lattices.
On the other hand, in \cite{Has11, HWM14} the authors provided an indication to the affirmative of this conjecture by considering the $4$-dimensional 
Toric Code: for example,  Hastings \cite{Has11}  assumes the existence of certain error operators from which he derived a related property.
In \cite{AHH+08} the authors use an approximation of thermal systems called the weak Markovian limit,
and conclude that certain topological measurements are preserved at constant temperatures for exponentially
long time in the system size (exponential coherence times).
However, to the best of our knowledge there is no formal proof of conjecture \ref{conj:1}.

Here, we make progress towards affirming this conjecture formally by proving conjecture \ref{conj:1}
for nearly-optimal parameters:
\begin{mdframed}
\begin{theorem*}\label{thm:sketch}(sketch)

\noindent
\textbf{Robust TQO at nearly Room Temperature}

\noindent
There exists a $\log$-local family of 
quantum error-correcting codes $\left\{ {\cal C}_n \right\}_n$
and corresponding family of commuting local
Hamiltonians $\left\{ H_n \right\}_n$
such that for any 
$\beta_n = O(\log^2 \log(n))$ the following holds:
any quantum circuit $V$ that acts on $a \geq n$ qubits and 
approximates the thermal state at temperature 
at most $T_n = 1/(\kappa \beta_n)$
on a set of qubits $S$, $|S| = n$:
$$
\left\| \frac{1}{Z} e^{- \beta_n H_n} - \tr_{-S}( V \ketbra{0^{\otimes a}} V^{\dag}) \right\|_1 = o(1),
\quad
Z = {\rm tr}(e^{-\beta_n H_n})
$$
satisfies a circuit lower bound:
$$
d(V) = \Omega(\ln(n)).
$$
\end{theorem*}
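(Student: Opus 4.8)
The plan is to let $H_n$ be the stabilizer Hamiltonian of a carefully chosen CSS code $\calC_n$: each check projector becomes a local term, so $H_n$ is commuting, $\log$-local, and its ground space is exactly the code space. The code must simultaneously enjoy three features, assembled from the cited constructions. (i) \emph{Polynomial distance}, $\dist(\calC_n)\ge n^{\Omega(1)}$, provided by the high-dimensional manifold codes of Hastings and of Leverrier et al. (ii) \emph{Quantum local testability} (soundness) in the sense of Aharonov and Eldar: $H_n\succeq\eps\cdot\widehat{D}$, where $\widehat{D}$ is a suitably averaged distance-to-code-space functional, so that a syndrome of weight $w$ is carried only by states within distance $O(w/\eps)$ of the code space --- i.e.\ small syndrome weight forces a small correcting error. (iii) A \emph{shallow decoder}: a quantum circuit $\mathcal{D}$ of depth $o(\ln n)$, obtained by parallelizing the small-set-flip / local decoders of Leverrier et al.\ and Fawzi et al.\ and exploiting that each $\log$-weight check is measured and a local correction applied in depth $O(\log\log n)$, which, on any syndrome in which every check is independently violated with probability below some threshold $p_0$, returns the state to the code space. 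For $\log$-local manifold codes one expects $p_0 = 1/\poly(\log n)$ (up to lower-order factors), and the temperature is fixed so that $e^{-\beta_n}<p_0$; this is precisely what $\beta_n = \Theta(\log^2\log n)$ buys.

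First I would pin down the structure of the Gibbs state. Since $H_n$ is a sum of commuting Pauli projectors, $\tfrac1Z e^{-\beta_n H_n}$ is diagonal in the stabilizer eigenbasis and equals the classical mixture $\sum_s p(s)\,\Pi_s/2^{k}$ over syndromes $s$, where $\Pi_s$ projects onto the ($2^{k}$-dimensional) syndrome-$s$ eigenspace and $p(s)\propto e^{-\beta_n w(s)}$ with $w(s)$ the number of violated checks. Because the energy depends only on $w(s)$ and all syndrome sectors have equal dimension, $p$ is a \emph{product} distribution: each check is violated independently with probability $e^{-\beta_n}/(1+e^{-\beta_n})\le e^{-\beta_n}<p_0$. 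A Chernoff bound then shows the number of violated checks exceeds $p_0 m$ only with probability $o(1)$; more to the point, by soundness (ii) the decoder of (iii) is applicable exactly to this regime, and a Gibbs-typical syndrome --- being low weight and, as a random spread over the checks, sufficiently diffuse --- is corrected except with probability $o(1)$ over $s\sim p$, via the percolation-style cluster analysis behind the local decoder. Hence, up to trace distance $o(1)$, $\tfrac1Z e^{-\beta_n H_n}$ is supported on syndromes that $\mathcal{D}$ corrects.

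Applying $\mathcal{D}$ then yields a state within trace distance $o(1)$ of the maximally mixed code-space state $\Pi_{\calC_n}/2^{k}$; note that we do not need $\mathcal{D}$ to choose the correct coset representative, since $\Pi_{\calC_n}/2^{k}$ is invariant under logical operators, so merely returning to the code space suffices. Now suppose, toward a contradiction, that $V$ acts on $a\ge n$ qubits with $\tr_{-S}(V\ketbra{0^{\otimes a}}V^{\dag})$ within trace distance $o(1)$ of the Gibbs state on $S$, $|S|=n$. Then $W:=\mathcal{D}_S\circ V$ is a circuit of depth $d(W)\le d(V)+d(\mathcal{D})=d(V)+o(\ln n)$ whose output, reduced to $S$, is $o(1)$-close to $\Pi_{\calC_n}/2^{k}$. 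But it is known (see e.g.\ Eldar and Harrow, and the light-cone / operator-cleaning arguments going back to Bravyi, Hastings and Verstraete) that any circuit whose output --- even on a subsystem, even up to $o(1)$ trace distance --- reproduces the maximally mixed state of a code of distance $\delta$ must have depth $\Omega(\log\delta)$: a depth-$d$ circuit has backward light cones of size $2^{O(d)}$, so for $d=o(\log\delta)$ the encoded information would be recoverable from fewer than $\delta$ qubits, contradicting the distance. With $\delta=\dist(\calC_n)\ge n^{\Omega(1)}$ this forces $d(W)=\Omega(\ln n)$, hence $d(V)=\Omega(\ln n)$.

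The main obstacle is exactly the second paragraph together with the construction of $\mathcal{D}$: at temperature $T_n\sim 1/\log^2\log n$ the effective error density $e^{-\beta_n}$ is only sub-polylogarithmically small, so the Gibbs state is \emph{not} close to the ground space --- a typical configuration already carries $n^{1-o(1)}$ violated checks --- and one cannot argue via proximity to a code state. Everything rests on showing that these numerous-but-low-density errors are nonetheless correctable by a \emph{genuinely shallow} decoder, and it is the soundness (quantum local testability) of the manifold codes that bridges ``sparse syndrome'' (what the Gibbs state hands us) to ``sparse, diffuse error'' (what the local decoder consumes), while the small-set (co)boundary expansion of these codes drives the percolation analysis of the decoder's success probability. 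Establishing soundness for the $\log$-local manifold codes, parallelizing their decoder down to depth $o(\ln n)$, and matching the decoder's notion of a benign error to the Gibbs distribution are the delicate steps --- and it is these constraints that pin down the precise super-constant growth rate $\beta_n=\Theta(\log^2\log n)$.
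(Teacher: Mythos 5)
Your overall architecture matches the paper's (shallow decoder composed with the circuit, then the CSS-distance lower bound of Eldar--Harrow applied to the resulting code state), and your closing light-cone argument is essentially the paper's Lemma \ref{lem:depth}. But the middle of your argument --- the passage from the Gibbs state to a decodable error pattern --- contains a genuine gap, and it is precisely the step the paper spends most of its effort on. Your claim that the syndrome distribution is a \emph{product} distribution, with each check independently violated with probability $e^{-\beta}/(1+e^{-\beta})$, is false for the codes in question: the construction necessarily uses a redundant check set ($m \gg n$ checks, made worse by the soundness-amplification step), so the achievable syndromes form a proper subspace of $\F_2^m$ and the induced distribution on syndromes is heavily correlated. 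More fundamentally, even granting a sparse syndrome, soundness only converts low energy into low \emph{error weight} ($\sim n/\poly\log n$, which still vastly exceeds the code distance $n^c$), not into the ``sparse and diffuse'' cluster structure the local decoder needs. The paper establishes that structure by an entirely different route: it realizes the Gibbs state as the stationary distribution of a Metropolis--Hastings random walk on errors modulo the stabilizer, truncates the walk at a weight cutoff justified by the $\qLTC$ soundness (Proposition \ref{prop:trnc}), and then proves by induction on the walk that the error is locally stochastic (Lemma \ref{lem:localstoc}), which feeds into the percolation bound of Fawzi et al. You assert the conclusion of this analysis (``a Gibbs-typical syndrome \dots is corrected \dots via the percolation-style cluster analysis'') without supplying it.

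Two further consequences of skipping that analysis. First, local stochasticity in the paper requires an \emph{upper} bound on $\beta$ (equivalently $\lambda \geq \beta\ln n$): at very low temperature a qubit adjacent to existing errors is \emph{more} likely to be hit again, which destroys local stochasticity; the paper therefore splits into three temperature regimes ($T=0$; very small $T$ where the absolute error weight is tiny; and the locally-stochastic regime), whereas your product-syndrome picture suggests, incorrectly for this argument, that lower temperature can only help. Second, your heuristic for why $\beta_n=\Theta(\log^2\log n)$ suffices ($e^{-\beta_n}$ below a threshold $p_0=1/\poly\log n$) misses the amplification step: with the raw projective code's soundness $s=1/\log^2 n$ the argument only works for $\beta=\Omega(\poly\log n)$, and it is the randomized amplification to constant soundness (Proposition \ref{prop:avg}), traded against an increased qubit degree $D=\poly\log n$, that yields the condition $\beta \gtrsim (1/\alpha)\ln(D)/s = \Theta(\log^2\log n)$ with $\alpha = 1/\Theta(\log\log n)$ coming from the number of decoder iterations. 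A related subtlety you would need to address: the amplified checks are used only for \emph{testing} (reining in the error weight), while the decoder runs on the \emph{original} checks, for which small-set expansion of the projective cube is proved.
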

\end{mdframed}
Our proof will actually show a stronger statement: namely that the thermal Gibbs state
of these codes, for sufficiently low, yet nearly constant temperature, can be decoded using
a shallow circuit to a bona-fide quantum code state.
This implies that the Gibbs state retains topological order (up to a shallow decoder) at very high temperatures:
if we initialize our system in some ground-state of the TQO, and allow it to thermalize, we can later
recover that code-state with little extra cost (see Section \ref{sec:imp} discussing the possible implementation error of such a set-up).
In particular, it implies that appropriately chosen local Hamiltonians can give rise to ground-state multi-partite entanglement
which can survive without active error correction at nearly-constant temperatures.
%
%

Thus the system above is, in a way, a third variant of the quantum error-correcting paradigm:
on one hand unlike active error correction it does not require active error-correction during storage, 
but on the other hand, and unlike passive error correction, it does require a single application of active error-correction before usage.

\subsection{Some Perspective}

\subsubsection{The Thermal Gibbs State}

This study explores quantum circuit lower bounds on arguably the most natural of physical states - namely the
thermal Gibbs state.
This state has been the subject of intense research in statistical physics, and in particular in quantum-mechanics. 
The thermal Gibbs state is a quantum state that can be thought of as the equilibrium state of a system formed by coupling a
ground-state of a physical system to a "heat bath" - meaning it is allowed to interact indefinitely with
an environment to which we have no access to.
\begin{definition}

\textbf{The Thermal Gibbs State}

\noindent
Let $H$ be a Hamiltonian and let $\left\{\ket{\psi_i}\right\}_i$ be an eigen-basis of $H$ with corresponding eigenvalues $E_i$.
For $T>0$ the thermal state of $H$, denoted by $e^{-\beta H}/Z$, $\beta = 1/\kappa T$ (where $\kappa$ is the Boltzmann constant) 
is a mixture of eigenstates of $H$ where
the probability of sampling $\ket{\psi_i}$ is proportional to $e^{- \beta E_i}$.
For $T = 0$ the Gibbs state can be any $\rho \in \ker(H)$.
\end{definition}
%

\subsubsection{The Regimes of "Inverse-Temperature" $\beta$}\label{sec:beta}

We consider here a summary of prior art: the temperature at which one can establish
an $\Omega(\log(n))$ (i.e. "global") circuit lower bound for the Gibbs state of a Hamiltonian:
\begin{center}
    \begin{tabular}{| l | l | l | l |}
    \hline
    Hamiltonian & Temperature & Result & Comments \\ \hline
    2 or 3-D systems & $O(1/\poly(n))$ & \cite{Yoshida11} & No-go theorem \\\hline
    Projective Code & $\Omega(1/\log(n))$ & This work & By definition, Without amplification \\ \hline
    Amplified Projective Code & $\Omega(1/\log\log(n))$ & This work & With amplification \\ \hline
     4-D Toric Code & $\Omega(1)$ & \cite{Has11} & Heuristic argument. \\ \hline

    \hline
    \end{tabular}
\end{center}

Our main theorem establishes the existence of log-local Hamiltonians for which the thermal state $e^{-\beta H}$
for $\beta = (\ln\ln(n))^2$ requires a logarithmic circuit depth.
Therefore it improves exponentially on previous work in terms of the provable highest temperature as a function of system size $n$ at which
circuit lower bounds can be maintained.

Notably, observe that the rate of errors experienced by
quantum states from this ensemble scales like $n/\poly\log(n)$ - i.e. a nearly linear fraction.
Such error rates result in error patterns whose weight is much larger than the minimal error-correcting distance of the quantum code, and hence it is
not immediately clear, at least from an information-theoretic perspective, whether these states - that formally cannot protect
quantum information - can be assigned circuit lower bounds.

One may try to artificially generate an example where such an error rate leads to non-trivial entanglement:
say by considering $n/\poly\log(n)$ tensor-product copies of a "good" quantum error-correcting code, where each copy is defined
on $\poly\log(n)$ qubits.
For such a system - a typical error would leave at least some good copy of the code intact, thereby leading to a circuit
lower-bound.
However, one can immediately see that such a bound can at best scale as $\log\log(n)$ in the number of qubits - 
because it is in no way a global phenomenon of the system - but rather a local "artifact" of the system.

Another simple example, leading to a much tighter bound, is the following: for $\beta = \log^3(n)$, and using a
$\qLTC$ with mild, say $1/\log^2(n)$ soundness (such as the code of Leverrier et al. \cite{LLZ18} that we use
here as the basis for our construction) the probability of sampling a bona-fide quantum code-state is
overwhelming.
One can then argue that this fact alone is sufficient to establish a circuit lower-bound.
Our theorem handles much lower values of $\beta$, namely $\log^2\log(n)$
where the typical error can have huge size, albeit still not a constant fraction of the total system.

\subsubsection{Is it Entangled ?}

Any quantum system satisfying Conjecture \ref{conj:1} has a highly entangled ground-space, because 
for $T=0$
a Gibbs state can be any (possibly pure) ground-state of a topologically ordered system.

That said, for $T>0$ 
the circuit lower bound is applied to mixed states: assigning a quantum circuit lower bound for the task of approximating a quantum mixed state
(as opposed to a pure state)
does not necessarily indicate the
existence of quantum correlations but rather the presence of long-range correlations, which may or may not be quantum.

However, as noted above even this somewhat weaker notion of a quantum system with a
highly-entangled ground-space that retains a quantum
circuit lower bound at very high temperature isn't known to exist (at least formally),
and is related to major open questions in quantum complexity theory.
See in this context the NLTS conjecture discussed in the section \ref{sec:NLTS}.

We stress again that the proof of Theorem \ref{thm:main} will establish a much stronger statement: namely that the thermal Gibbs state is in fact approximately
equivalent to a topologically-ordered state: all we need to do to recover it is to apply a very shallow decoder.
Therefore for all sufficiently low temperatures the thermal Gibbs state of the constructed Hamiltonian
is in fact highly entangled in a well-defined way.

\subsubsection{Quantum Circuit Lower Bounds}

Our current knowledge of unconditional quantum circuit lower bounds is very limited despite several works (see e.g. \cite{Nielsen06, EH17}), even compared
to the classical case.
For example, there are no known quantum analogs of structural lemmas on shallow classical circuits
like Hastad's switching lemma.
Recently there has been interest in demonstrating {\it classical} circuit lower bounds for quantum search problems \cite{Watts19},
but notably these are {\it classical} bounds that do not attempt to capture a quantum property of the circuit in question.

This work adds to the set of available tools for showing quantum circuit lower bounds - by combining 
quantum locally-testable codes, an analysis of the thermal state as a truncated Markov chain,
and a local decoder that relies on these two properties to decode a thermal state to a bona-fide
quantum code-state which can be assigned a circuit lower bound by information-theoretic arguments.
Previous works have either used quantum locally testable codes \cite{EH17} to argue direct circuit lower bounds,
or local decoders \cite{Has11, LLZ18} but as far as we know these strategies were never used
in conjunction.
We outline our strategy in more detail in Section \ref{sec:strategy}

\subsubsection{Energy versus error}

An important distinction that one needs to make early on is that having a quantum state with low-energy
{\it does not} necessarily imply it is generated by applying few errors to a ground-state.
This is only true if the Hamiltonian governing the quantum state is a so-called $\qLTC$  \cite{AE13}.
$\qLTC$'s are quantum analogs
of locally-testable codes (and see Definition \ref{def:qltc}).

Like their classical counterparts $\qLTC$'s are (local) Hamiltonians for which large errors necessarily result in a large number of violations
from a set of local check terms.

To give an example - consider Kitaev's 2-dimensional Toric Code \cite{Kit03} at very low-temperature, say $T = O(1/\sqrt{n})$.
At that temperature the probability of an error of weight $\sqrt{n}$, at least one which is composed of strings
of weight $\sqrt{n}$ is proportional to the probability of observing an error of constant energy, i.e.:
$$
e^{-\beta O(1)}
$$
i.e. comparable to the probability of a single error.
So, unless additional structure of the problem is used, for all we know the number of errors
could be $\Omega(n)$.
The reason for the above is that the Toric Code is known to have very poor soundness as a locally testable code: in fact one can have 
an error of size $\sqrt{n}$ with only two violations.

\subsubsection{The relation to $\NLTS$}\label{sec:NLTS}

\noindent
Conjecture \ref{conj:1} above is a mixed-state analog (albeit with a slightly more stringent requirement
on the circuit depth) of the $\NLTS$ conjecture due to Freedman and Hastings \cite{FH13} - which posits the existence of local Hamiltonians
for which {\it any} low-energy state can only be generated by circuits of diverging depth.

As far as we know neither conjecture is stronger than the other: 
On one hand, circuit lower bounds on pure-states cannot be used to deduce
circuit lower bounds on their convex mixtures (implying $\NLTS$ is at least as strong as conjecture \ref{conj:1}):
trivially,
one can consider a highly entangled eigenbasis of the entire Hilbert space.
The uniform mixture on such an eigenbasis is merely the completely mixed state which can be prepared by very shallow (classical) circuits.

On the other hand,
it is not clear how a quantum circuit lower bound on approximating a mixed-state implies
a similar bound for approximating a pure-state in its support
(implying that conjecture \ref{conj:1} is at least as strong as $\NLTS$).
Hence, to the best of our knowledge these two conjectures are very related, but formally incomparable.

Arguably, the only known strategy to establishing the $\NLTS$ conjecture,  outlined in \cite{EH17}, is to show a construction of
quantum locally-testable codes
($\qLTC$'s) with constant soundness and minimal quantum error correcting distance
scaling linearly in the number of qubits.
However, such a statement by itself requires the construction
of quantum LDPC codes with distance growing linearly in the number of qubits - a conjecture now open for nearly 40 years.
Thus our inability to make progress on ${\rm qLDPC}$ is a significant barrier to any progress on the $\NLTS$ conjecture.

In this work, we show that by considering a mixed-state (or average-case) analog of $\NLTS$
(while still placing a more stringent requirement on the circuit depth)
one can break away from this strategy using
the tools we already have today - namely $\qLTC$'s with $1/\poly\log$ soundness 
and code distance which is sub-linear in $n$, in this case $\sqrt{n}$,
and achieve a construction with nearly optimal parameters.
Nevertheless, it could be the case that the construction provided here is in fact $\NLTS$
- meaning there are no trivial states of the code Hamiltonian for sufficiently small constant temperature.

\subsubsection{Implementation Error}\label{sec:imp}

Above, we mentioned the ability of the proposed system to allow thermalization of an initialized state, and still be able to recover that state using a local decoder.
Arguably, one can argue against such a statement that one also needs to account for the implementation error of the Hamiltonian governing the TQO state,
as well as the decoding Hamiltonian.
There exist analogous claims against active fault-tolerance in the form of implementation error of the error-correcting unitary circuits.

However, we conjecture that the system we construct, 
insofar as the check terms of the Hamiltonians are concerned $\{ C_i\}$, 
is in fact robust against implementation error by virtue of local testability.
Recall that a locally testable code (see Definition \ref{def:qltc}) satisfies the following operator inequality:
\be
 \frac{1}{m}\sum_{i=1}^m C_i 
 \succeq   \frac{ s}{n}   D_{\calC}.
\ee
where $D_{\calC}$ is an operator that relates a state to its distance from the code-space.

On can check that if the LHS above suffers from an additive error quantified by a Hermitian error operator ${\cal E}$, 
$$
\frac{1}{m}\sum_{i=1}^m C_i 
+
{\cal E}
$$
then using standard results about stability of Hermitian operators under Hermitian perturbation,
the resulting code will still be,
for sufficiently weak error ${\cal E}$, a locally testable code albeit with slightly worse parameters, and for a smaller range of distances from the codespace:
it will not be able to faithfully test very small errors, but only large errors.

Still, this code will possess the key property that we use here to argue the main theorem: that for sufficiently low temperature (depending on $\|{\cal E}\|$) - the code reins in the error weight to small weights, and these errors are far apart to allow for local error correction.
Notably, the actual shallow decoder used in the argument will also suffer from implementation noise, undoubtedly, but as a theoretic argument about entanglement, it is only important to account for implementation error of the Hamiltonian, and not the decoding circuit.

Hence the system proposed has apparently two advantages: not only it is able to sustain long-range entanglement 
for high temperatures as established in Theorem \ref{thm:main}, one doesn't even need to implement it precisely to gain the first advantage.
We leave for future research to quantify precisely the degree to which this system is robust against implementation error.

\subsection{Some Open Questions}

We end this section with several questions for further research.
First, it is desirable to improve (reduce) the value of $\beta$ and improve (reduce) the locality of checks (currently they are log-local).

We note that a limiting factor to decreasing $\beta$ is the maximal size set for which one can show near-optimal expansion,
for $\qLTC$'s.  
In this work, using a constant-soundess $\qLTC$ we were only able to achieve such expansion for very small sets - sets of logarithmic size.
This is in part due to the fact that the underlying manifold of the code is the $n$-dimensional cube.
Perhaps then applying Hastings' construction to a different manifold - say
a cellulation of a hyperbolic manifold will yield a better soundness - expansion trade-off:
Given the successful use of hyperbolic manifolds in generating bipartite expanders (see e.g. \cite{L11})
it is conceivable that for such manifolds one can establish small-set expansion even for linear-size sets, while still maintaining a non-negligible soundness $1/\poly\log(n)$ similar to the original construction.

In order to reduce the locality of checks one could attempt to use a locality-reducing transformation proposed by Hastings \cite{Has16}.
However, such a transformation alas reduces the soundness of the $\qLTC$ hence defeating the purpose of the initial amplification,
which was crucial in gaining another exponential improvement (reduction) of $\beta$.
Hence, to reduce locality while retaining $\beta$ - a possible venue may be to 
break away from the space of commuting Hamiltonians and
approximate the log-local checks using perturbation gadgets,
which are non-commuting.

Second, we observe that our proof makes no particular use of the thermal Gibbs distribution except at a single point regarding the truncated Markov chain.  Hence we conjecture that our proof applies to a more general setting of distributions which are "sub-exponential" namely
$$
P(\tau) \leq e^{-\beta E(\tau)}.
$$
We raise as an open question what other classes of distributions can be assigned circuit lower-bounds using our techniques, possibly augmented by new ideas.

An interesting extension of this work is to extend it to actual quantum information - namely show the system can store an arbitrary quantum state
for long periods of time: notably here we have only showed that one can recover the uniform distribution on code-states, but it is not immediately
clear that it can preserve a single arbitrarily encoded code-state.
In addition, it would be insightful to understand the actual coherence time of such a system as a self-correcting quantum memory - we conjecture
that it is polynomial in $n$.

Finally, one could explore the possibility that the constructed code in fact satisfies the $\NLTS$ condition: namely that any low-energy state
is highly entangled.

\subsection{Overall Strategy}\label{sec:strategy}

In figure \ref{fig:flow} we outline the main steps of our argument.
To recall, the main goal of this study is to demonstrate that the thermal state $e^{-\beta H}/Z$
is hard to produce for sufficiently small $\beta$, and show the same for any ground-state of $H$ - mixed or pure.

Our overall strategy is to demonstrate a {\it shallow} quantum circuit
that allows to correct this thermal state
to some code-state of a quantum code with large minimal distance.
For a quantum code with large minimal error-correcting distance it is a folklore fact (made formal here)
that any quantum state in that codespace is hard to approximate
(the gray-shaded box in Figure \ref{fig:flow}),
thereby satisfying the hardness-of-approximation requirement for groundstates.
Furthermore, together with the existence of a shallow decoder, it implies a lower-bound on the circuit depth for $e^{-\beta H}$
as the lower-bound on a circuit generating a quantum code-state, minus
the depth of the decoder.
Thus, 
working in the diagram of Figure \ref{fig:flow} backwards we translate our overall theorem to demonstrating a shallow error-correcting circuit
from a thermal state to a code-state with polynomial distance.

\subsubsection{Translating Energy to Error}

The strategy outlined above requires us to demonstrate a shallow decoder for thermal states of sufficiently low temperatures.
Here we are faced with a severe obstacle: a thermal Gibbs state is defined in terms of energy, whereas the natural language for decoders is the language of "errors" (whether they are average-case or worst-case).
Hence we need a scheme to argue about the error distribution of the Gibbs state.

To our aid come quantum locally testable codes ($\qLTC$s) \cite{AE13} (and see Definition \ref{def:qltc}).
The main use of locally testable codes is to rein in the error weight of low-energy states.
We use this property in conjunction with the well-known Metropolis-Hastings algorithm (or MH)
on Hamiltonians corresponding to the check terms of $\qLTC$'s.
The MH algorithm is a standard tool in physics to simulate the thermal Gibbs state by a random
walk where transition probabilities between quantum states are dictated by their relative energies 
(see Section \ref{sec:Gibbs}).

Applying this tool to local Hamiltonians corresponding to $\qLTC$s 
we show that the thermal Gibbs state $e^{-\beta H}$,
for $H$ corresponding to a $\qLTC$,
can in fact be approximated by a so-called "truncated" MH process.
This is an MH process
where one truncates the evolution of errors when they reach some maximal error weight.
Hence, the Gibbs state is reformulated as a random walk of errors that is truncated
when the error weight is too large.
These arguments correspond to the top Vanilla-colored boxes in the diagram.

As a general note, as far as we know, no previous work using $\qLTC$'s made such a translation
from energy to error weight:
in \cite{EH17} the authors show that $\qLTC$'s with linear distance are $\NLTS$, but
since such codes are not known to exist, they
end up proving a somewhat weaker version called NLETS thus bypassing the energy-to-error
translation.
On the other hand, such a translation is probably the most natural way to proceed
w.r.t.  quantum codes:  we do not know how to treat "energies" on quantum states,
but if we can model the errors they experience we can leverage our vast knowledge
of quantum {\it error-correction} to handle them.
Hence, we believe that the use of the MH random process is of conceptual importance
and will be useful elsewhere,
since it allows for the first time, to bring the analysis from a point we want to argue
about ("energies") to a point where we have powerful analysis tools ("errors").

\subsubsection{Shallow Decoding from Local Expansion}

To recap the flow of arguments:
the arguments about the MH random process
(Vanilla-colored boxes in Figure \ref{fig:flow})
allow us to argue that the thermal Gibbs state $e^{-\beta H}$,
for a $\qLTC$ Hamiltonian $H$,
 can be simulated by sampling an error according to an MH random walk 
 that is controlled by the inverse temperature $\beta$ and the soundness of the $\qLTC$.
We would now like to leverage that property to demonstrate a shallow decoder for this state.

A key observation towards that end
(corresponding to the bottom Vanilla-colored box) is that while the MH random process is not an i.i.d. process, it does in fact conform to a somewhat weaker characterization called "locally-stochastic"
(or "locally-decaying")  \cite{LTZ15,FGL18,Got14},
which are a main source of inspiration of this work.
A noise is locally-stochastic if the probability of a large cluster of errors decays exponentially in its size (see Definition \ref{def:locallystochastic}).

Concretely, using the truncated MH random process we conclude (orange box) that for sufficiently large $\beta$ (low temperature) the errors experienced by the Gibbs state 
are locally-stochastic, and hence typically form only small clusters whose size is, say, at most $\log(n)$.

In effect, a stronger notion is true: we show that local-stochasticity of these random errors means that their clusters are also far away from each other - in the sense that even if we "blow up" each cluster by a factor of $1/\alpha$ 
(for small $\alpha>0$) they are still at most the size above.
This definition is called $\alpha$-subset and it too, is due to \cite{LTZ15, FGL18}.

In these works, error patterns that are locally-stochastic were shown to be amenable to correction by a local decoder, since intuitively, these errors can be "divided-and-conquered" locally.
In this context, the notion of $\alpha$-subsets
was used to handle the possibility that the decoder can introduce errors to qubits which weren't initially erred,
by arguing that even if such an event occurs it will not cause the initial clusters to aggregate together to form large,
undecodable clusters.

Our choice for a shallow "local-decoder" is to use
a straightforward quantum generalization of the Sipser-Spielman decoder (notably, a variant of this decoder was used in \cite{LTZ15}).
This decoder is desirable since it is able, under certain conditions to decode an error of weight $w$ in time $\log(w)$, and do so locally.
That would imply that for error patches of logarithmic size, the decoder would run in depth $O(\log\log(n))$ - i.e. a very shallow decoder.

However, such a decoder comes attached with a very stringent condition: it requires the Tanner graph of the code to be a very strong bi-partite expander.
That condition is too stringent for our purposes, since we also need the quantum code
to be locally-testable, and it is not known how to make even classical local-testability co-exist with the code's Tanner graph being a bi-partite expander.
\footnote{In fact, in \cite{SS96} the authors already acknowledge that their bi-partite code construction is unlikely to be a classical $\LTC$}.

In our study, we relax the stringent expansion condition, 
and require that only very small errors, i.e. those of logarithmic size which we've shown to be the typical error size for the Gibbs state - those errors are required to expand well ("small-set expansion"),
while requiring nothing for linear-weight errors, which is the regime of interest of the standard Sipser-Spielman decoder.
Hence, we
are able to use a code whose Tanner graph
is not a true expander. 
This while still being able to use the Sipser-Spielman approach to a fast parallel decoder by considering only small sets.
These arguments are outlined in the pink-shaded boxes in the middle of the diagram.

\subsection{The construction}

To recap again,
starting from the previous section, our goal is to find quantum a code,
whose thermal state can be corrected quickly and in parallel.
We've shown that if a code is $\qLTC$ then the Gibbs state can be essentially modeled as an error process that is locally-stochastic.
Locally-stochastic errors can be decoded quickly, if the underlying topology is a good expander - at least for the typical error size.
Hence,  
our interim goal is
to find a quantum code ${\cal C}$
that satisfies simultaneously three requirements:
\begin{enumerate}
\item
It has a minimal quantum error-correcting distance that is some polynomial in the number of qubits $n$, say $\sqrt{n}$ - 
to allow a circuit lower bound for proper code-states.
\item
It is {\it locally-testable} - 
to allow translation from energies to errors in the truncated-MH modeling of the Gibbs state.
\item
Expansion of the bi-partite Tanner graph corresponding
to the checks of the code, for errors of small weight (or "small-set expansion") - 
to allow for shallow decoding using the Sipser-Spielman algorithm.
\end{enumerate}

\subsubsection{The Choice of Quantum Code}

In \cite{Has17} Hastings found a way to make progress on the $\qLTC$ conjecture \cite{AE13}
by considering high-dimensional manifolds: he showed that by tessellating a high-dimensional sphere
using a regular grid (or some other topology for improved rate) the resulting quantum
code on $n$ qubits has soundness $1/\poly\log(n)$.
We make a crucial use of his approach here.

Recently, Leverrier et al. \cite{LLZ18} have proposed the projective code, which is an arguably simpler 
variant of this high-dimensional construction whereby a length-$3$ complex chain of $p$-faces
of the binary $N$-cube (modulo the all-ones vector), for $p = \Omega(N)$ is used to derive
a quantum code on $n$ qubits with distance scaling like $n^c$, for some constant $c>0$.
This code has improved soundness compared to the one in \cite{Has17}.
Our construction is based on the projective code on $n$ qubits, using $p$-faces of the $N$-dimensional
cube for $p = N/2$, where $N = \Theta(\log(n))$.

On one hand, by the minimal distance of the projective code one immediately
gains a circuit lower bound of $\Omega(\ln(n))$ on
the minimal depth circuit generating its ground-state
(corresponding to the gray-shaded block in Figure \ref{fig:flow}.
This satisfies the first requirement above.
It is also a $\qLTC$ with reasonable ($1/\log^2(n)$) soundness, (see navy-shaded block in Figure \ref{fig:flow})
thus satisfying the second requirement.

However, the last critical advantage that we gain by using this code, as opposed to say
the original high-dimensional manifold of Hastings, is not its improved soundness parameter
but rather the underlying structure of the high-dimensional cube:
namely its property of small-set expansion that exists in addition to its non-negligible soundness.
This property of small set expansion is the turnkey for allowing the application
of a shallow decoder to combat the typical errors of a thermal state with large $\beta$ parameter.

More specifically, we make crucial use of the structure of the $N$-cube to
establish the third requirement - namely,
 show that small error sets
expand significantly - i.e. have many unique incident constraints.
This emanates from the fact that small subsets of $p$-faces of the $n$-cube for $p = n/2$
have many adjacent $p+1$-faces and many adjacent $p-1$-faces.
As a technical aside, we note that contrary to Hastings' construction in \cite{Has17} our choice of the parameter $p$ for the $p$-faces
of the code is chosen to be precisely $N/2$, where $N$ is the dimension of the cube.
This, because it facilitates the proof of simultaneous local expansion
 for  the boundary map and  co-boundary maps.
 This corresponds to the green-shaded block in Figure \ref{fig:flow}.

\subsubsection{Degree Reduction to Reduce $\beta$ Exponentially}

The flow of arguments until this point results in lemma \ref{lem:maxconn} which roughly
states that a $\qLTC$ with soundness $s$ and qubit degree $D$ 
has the property that its thermal state 
for sufficiently large inverse temperature:
$$
\beta \geq \log(D) / s.
$$
has error patterns that form 
clusters of only logarithmic size.
Such errors admit a shallow-depth parallel decoding scheme resulting in a circuit lower bound
for approximating this thermal state.

Consider, for example the projective code of \cite{LLZ18}.  We have
$$
D = \log(n), \quad s = 1/\log^2(n)
$$
Using these parameters one would only be able to establish a circuit
lower bound for $\beta = \poly\log(n)$.
Hence, by the behavior of $\beta$ as a function of $D$ and $s$ one sees it is desirable to trade-off increased degree for improved soundness
so long as these two quantities are increased/decreased in a commensurate manner.

The seemingly natural way to do this is to define
a $\qLTC$ where where the set of checks corresponds to all possible subsets of $1/s$ checks,
and defining each check as satisfied only if all checks in its subset are satisfied.
Unfortunately, such a code has degree which scales like
$$
{n \choose 1/s} = 2^{O(\log^3(n))}
$$
which implies that $\beta$ needs to be at least  $\beta \geq \log^3(n)$ for the circuit lower-bound to hold.
However, using a standard probabilistic analysis we show that it is sufficient to choose a random
family of $\Omega( n \log^2(n))$ subsets - each set comprised of $1/s$ checks in order  to achieve a $\qLTC$
with constant soundness and query size $q/s$.
This corresponds to the second navy-shaded box in Figure \ref{fig:flow}.

This amplification procedure results in a somewhat peculiar situation that we'd like to point out:
the thermal Gibbs state $e^{-\beta H}$ is defined w.r.t. the Hamiltonian $H = H({\cal C}_{pa})$ where
${\cal C}_{pa}$ is the result of the amplification of the projective code
formed by choosing a sufficiently large random family of subsets of check terms of size $1/s$ each.
However, the decoding procedure, using the Sipser-Spielman algorithm uses the original checks
of ${\cal C}$ to locate and correct errors, and not the amplified ones: this is because
we do not establish local expansion for the amplified checks, only for the original checks.
Still, both sets of checks share the same code-space - namely the original projective code ${\cal C}$.
Hence, the set of checks used for {\it testing} are not the same as the ones used for {\it correcting errors}.

\begin{figure}
\center{
 \epsfxsize=5in
 \epsfbox{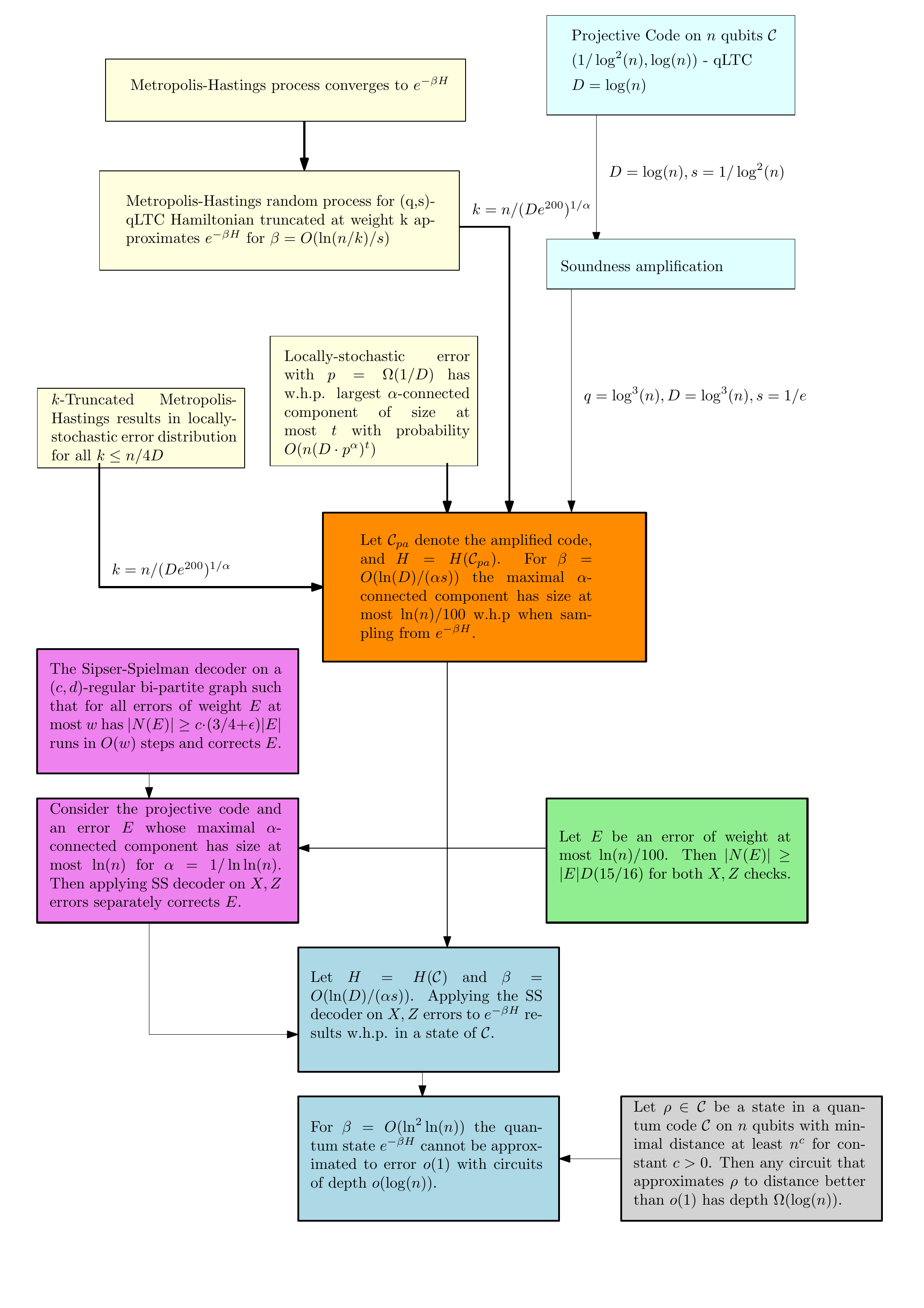}
 \caption{\footnotesize{
 Flow of the main arguments
 }\label{fig:flow}}}
\end{figure}

\section{Notation}

A quantum CSS code on $n$ qubits is a pair of codes ${\cal C} = ({\cal C}_x, {\cal C}_z)$,
where ${\cal C}_x, {\cal C}_z$ are subspaces of $\F_2^n$.
For a thermal state $\rho = (1/Z) e^{-\beta H}$, $\beta$ signifies the "inverse temperature" $\beta = 1/( \kappa T)$ where $\kappa$ is the Boltzmann constant,
and $Z = \tr(e^{-\beta H})$ is the partition function of this state.
For a finite discrete set $S$, $|S|$ denotes the cardinality of the set.
For ${\cal E}\in \{0,1\}^n$ the support of ${\cal E}$ , ${\rm supp}({\cal E})$ is the set of non-zero positions of ${\cal E}$.
$|{\cal E}|$ is the Hamming weight of ${\cal E}$.
For quantum density matrices $A,B$, the trace distance between $A,B$ is denoted by $\|A - B\|_1$,
and the quantum fidelity between these states is denoted by ${\cal F}(A,B)$.
A density matrix $\rho$ of rank $r$ is said to be a {\it uniform mixture} if it can be written as
$$
\rho = \frac{1}{r} \sum_{i=1}^r \ketbra{u_i}
$$
where $\{ \ket{u_i} \}$ are an orthonormal set of vectors.

We say that a quantum circuit $U$ on a set $T$ of $N$ qubits approximates a quantum state $\rho$ on a set $S \subseteq T$ of $n \leq N$ qubits, to error $\delta$
if 
$$
\left\| \tr_{T - S}\left(U \ketbra{0^{\otimes n}} U^{\dag}\right)  - \rho \right\|_1 \leq \delta
$$


In this work, we will consider random error models ${\cal E}$ supported on the $n$-th fold tensor product Pauli group ${\cal P}^n$,
where ${\cal P} = \left\{X,Z,Y,I \right\}$.
For an error ${\cal E}$ 
$$
{\cal E} = {\cal E}_1 \otimes \hdots \otimes {\cal E}_n, \quad {\cal E}_i \in {\cal P}
$$
we denote by $|{\cal E}|$ the Hamming weight of ${\cal E}$ - namely the number of terms ${\cal E}_i$ that are not
equal to $I$.
Often we will use $|{\cal E}|$ to denote the minimal weight of ${\cal E}$ modulo a stabilizer subgroup of ${\cal P}^n$.

For a stabilizer code ${\cal C}$ with local check terms $\{C_i\}_{i=1}^m$, $C_i \in {\cal P}^n$, 
the Hamiltonian $H = H({\cal C})$ is the local Hamiltonian
$\sum_{i=1}^m (I - C_i)/2$ - i.e. its ground-space is the intersection of the $1$-eigenspaces of all check terms $C_i$.

The $N$-cube is the binary cube in $N$ dimensions.  We will use capital $N$ to denote the dimension of the cube.
The projective cube results in a code of $n$ qubits.
When considering $n$ in the context of the projective cube we will use lower-case $n$ to denote the number of qubits, i.e. $n = 2^N$.

The letter $a$ will be used to denote an initial set of qubits $a \geq n$  that also include any ancillary qubits used to generate the state of $n$ qubits, 
i.e. to generate a mixed state on $n$ qubits one applies a unitary transformation on the state $\ketbra{0^{\otimes a}}$, traces out $a-n$ qubits.

Let $G = (V,E)$ be a graph.  For a set $S\subseteq V$ the set $\Gamma(S)\subseteq V$ is the set of all vertices that neighbor
$S$ in $G$.
The degree of a vertex $v\in V$ is the number of edges incident on that vertex.
The degree $D$ of a graph is the maximal degree of any vertex $v\in V$.
A graph is $D$-regular if the degrees of all vertices are equal.

We will use $a \propto b$ to signify that $a = c \cdot b$ for some $c$ that does not depend on $b$.

%

\section{Preliminaries I: Thermal Gibbs State of a Local Hamiltonian}

When considering the thermal Gibbs state for a local Hamiltonian $H = \sum_i H_i$, $\|H_i\|\leq 1$, care needs to be taken
as to how to scale the energy of the Hamiltonian.
On one hand, we would like the Gibbs state of a Hamiltonian $H$ to be invariant under scaling of $H$,
or perhaps rewriting $H$ as a sum of possibly lower-{\rm rank} projections.
On the other hand, we note that it is unreasonable to expect to have a family of local Hamiltonians
$\{H_n\}_n$ with
entanglement at room temperature (i.e. constant $\beta>0$), if the norm of $H_n$ doesn't grow with the number of qubits $n$.
In fact, in physics literature it has become a convention that when $m = O(n)$ the Gibbs state
of $H$ is merely $e^{-\beta H}$ - i.e. we allow $\|H \|$ to grow linearly in the number of qubits $n$.

Hence, we introduce the definition of energy density - which captures the average "energy"  invested
into a qubit in the system: 
\begin{definition}

\textbf{Energy density}

\noindent
A local Hamiltonian on $n$ qubits with $m$ local terms $H = \sum_{i=1}^m H_i$, $\|H_i \| \leq 1$
is said to have energy density $\lambda = m/n$.
\end{definition}

In this work, in the context of the Gibbs state, we will consider only Hamiltonians with {\it unit} energy density, i.e.
given $H = \sum_{i=1}^m H_i$, $\|H_i \| \leq 1$ we will consider instead
$$
\tilde H = \frac{n}{m} H
$$

The thermal Gibbs state is defined for a local Hamiltonian as follows:
\begin{definition}

\textbf{Gibbs state of a local Hamiltonian}

\noindent
Let $ H = \sum_{i\in [m]} H_i$, be a local Hamiltonian on $n$ qubits ${\cal H} = \mathbf{C}^{\otimes n}$, $m$ local terms, and energy density $\lambda = m/n$.
The Gibbs state of $H$ for finite $\beta>0$ is the following density matrix:
$$
\frac{1}{Z} e^{- \beta \tilde H}
$$
where $Z = \tr(e^{-\beta \tilde H})$ and $\tilde H = H / \lambda$.
For $\beta \to \infty$ the Gibbs state is any $\rho \in \ker(H)$.
\end{definition}

\section{Preliminaries II :Quantum  Error-Correcting Codes}

We require the basic definition of stabilizer codes and CSS codes
\begin{definition}\label{def:css}

\textbf{Quantum Stabilizer Code and Quantum CSS Code}

\noindent
A stabilizer group ${\cal G}\subseteq {\cal P}^n$ is an Abelian subgroup of ${\cal P}^n$.
The codespace ${\cal C}$ is then defined as the centralizer of ${\cal G}$, denoted by $C[{\cal G}]$, or equivalently - the mutual $1$-eigenspace of ${\cal G}$.
A CSS code ${\cal C} = ({\cal C}_x, {\cal C}_z)$ is a stabilizer code where the check
terms (i.e. generators of the group) are tensor-products of either only Pauli $X$ or only Pauli $Z$.
In particular regarding ${\cal C}_x, {\cal C}_z$ as $\F_2$ subspaces of $\F_2^n$ we have
${\cal C}_x \subseteq {\cal C}_z^{\perp}$ and vice versa.
\end{definition}

In this work, we will require some bounds on the minimal depth of a quantum
circuit to generate a quantum code state.
We recall a slight rephrasing of Prop. 45 in \cite{EH17} to mixed states:

\begin{lemma}\label{lem:depth}

\textbf{Robust circuit lower bound for CSS code-states}\cite{EH17}

\noindent
Let ${\cal C}$ be a quantum CSS code of non-zero rate $k>1$ on $n$ qubits with minimal distance $n^{\eps}$ for some $\eps>0$.
Let $\rho_{gs}$ be a mixture on a set of code-states of ${\cal C}$ and
let $V$ be a unitary circuit on $N\geq n, N = \poly(n)$ qubits that approximates $\rho_{gs}$:
$$
\rho = {\rm tr}_T\left(V \ketbra{0^{\otimes N}} V^{\dag}\right)
\quad
\|\rho - \rho_{gs}\|_{1} \leq n^{-2}, \quad \rho_{gs}\in {\cal C}
$$
Then the depth of $V$ is $\Omega(\ln(n))$.
\end{lemma}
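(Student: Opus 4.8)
The plan is to argue by contradiction: suppose $V$ has depth $d = o(\ln n)$, and derive that some code-state of ${\cal C}$ can be approximately recovered from a state whose reduced density matrices on small sets carry essentially no information, contradicting the distance $n^\eps$ of the code. The key structural fact I would invoke is the standard light-cone argument for shallow circuits: if $V$ has depth $d$ and is built from $2$-local gates on a geometry of bounded degree (or more generally, if we only assume $2$-locality, the relevant object is the causal cone), then any single output qubit depends on at most $2^{O(d)}$ input qubits, and conversely, for any region $R$ of output qubits the state $\tr_{R^c}(V\ketbra{0^{\otimes N}}V^\dagger)$ is determined by the gates whose light-cones intersect $R$. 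The point is that if $d = o(\ln n)$ then $2^{O(d)} = n^{o(1)}$, which is asymptotically smaller than the distance $n^\eps$.

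The main steps, in order, would be: (1) Reduce to pure states. Since $\rho_{gs}$ is a mixture of code-states, pick (by convexity / an averaging argument) a single code-state $\ket{\psi}\in{\cal C}$ such that $\langle\psi|\rho|\psi\rangle$ is not too small; equivalently use the fidelity bound $\mathcal F(\rho,\rho_{gs})\geq 1 - O(n^{-2})$ coming from $\|\rho-\rho_{gs}\|_1\leq n^{-2}$ and Fuchs--van de Graaf, and purify. (2) Localize the information. For a CSS code of distance $n^\eps$, any logical operator has weight $\geq n^\eps$; in particular, for any region $R$ with $|R| < n^\eps$, the reduced state $\tr_{R^c}\ketbra{\psi}{\psi}$ of a code-state is independent of which logical state $\ket{\psi}$ we chose (this is precisely the error-correction / Knill--Laflamme condition: regions smaller than the distance are "correctable" and carry no logical information, so they look maximally mixed on the relevant support, or at least are the same for all logical code-words). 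Since $k > 1$ there are at least two orthogonal code-states $\ket{\psi_0},\ket{\psi_1}$. (3) Derive the contradiction from shallowness. If $d = o(\ln n)$, build the output one qubit at a time: the first qubit's marginal is produced by a sub-circuit touching only $n^{o(1)} < n^\eps$ input wires; more carefully, take a set $S$ of output qubits of size just below $n^\eps$ whose joint causal cone still has size $< $ (something we can control) — here one uses that each output qubit has a cone of size $n^{o(1)}$, so a set of $n^{\eps/2}$ output qubits has total cone $n^{\eps/2 + o(1)}$. On such $S$, the reduced state $\tr_{S^c}(V\ketbra{0}{0}V^\dagger)$ equals the reduced state on $S$ of the output of a circuit on only $n^{o(1)}$ qubits, hence is close to a product/low-entanglement state and in particular agrees for the two logical states — but by (2) combined with the distance bound, approximating $\rho_{gs}$ to error $n^{-2}$ forces the marginals on $S$ to distinguish, or at least to be globally consistent with a high-distance code, yielding the needed contradiction. (This is the content of Prop.~45 of \cite{EH17}, which we are rephrasing.)

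The cleanest way to execute the final step is probably the entanglement/mutual-information version: a depth-$d$ circuit output $V\ket{0^{\otimes N}}$ has the property that for a bipartition of the output qubits into $A$ and $B$ along a cut crossed by few gate light-cones, the entanglement entropy across the cut is $O(d \cdot |\partial|)$, i.e. bounded by the number of gates straddling the cut. A genuine code-state of a code with distance $n^\eps$ and rate $k>1$, on the other hand, must have large entanglement across essentially every balanced-ish cut — for instance, tracing out a region of size between $n^\eps$ and $n - n^\eps$ leaves a state with entropy $\geq k$ bits from the logical qubits plus typically $\Omega(n^\eps)$ from the physical structure, and more to the point the code-state cannot be $n^{-2}$-close to anything with entanglement entropy $o(\ln n)$ across such a cut. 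Matching "$V$ shallow $\Rightarrow$ low entanglement across cuts" against "code-state $\Rightarrow$ high entanglement across cuts" gives $d = \Omega(\ln n)$.

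The main obstacle I anticipate is handling the ancillas and the trace-out cleanly: $V$ acts on $N = \poly(n)$ qubits and we only see $n$ of them, so the light-cone argument must be run on the $n$-qubit marginal, and one has to make sure that discarding $N - n$ qubits does not let a shallow circuit fake the entanglement structure of a code-state (it cannot, because the marginal of a shallow-circuit output still has bounded entanglement across cuts of the visible register that are crossed by few light-cones, regardless of ancillas — the entropy bound is in terms of the number of straddling gates, which is a property of the circuit, not of which qubits we keep). The second delicate point is the quantitative reduction from the trace-distance hypothesis $\|\rho-\rho_{gs}\|_1\leq n^{-2}$ to a statement about a specific code-state and a specific small region; but since we are explicitly allowed to cite \cite{EH17} (Prop.~45) for the pure-state case, the only real work is the convexity step (1) turning the mixture $\rho_{gs}$ into a single code-state while preserving the approximation up to a polynomial loss, and then quoting their bound. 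So the proof is: (1) convexity to isolate a code-state, with $n^{-2}$ degrading to, say, $n^{-1}$ — harmless for an $\Omega(\ln n)$ conclusion; (2) apply Prop.~45 of \cite{EH17} to that pure code-state; (3) read off $d(V) = \Omega(\ln n)$.
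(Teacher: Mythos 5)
The paper does not actually prove this lemma; it states it as ``a slight rephrasing of Prop.~45 in \cite{EH17} to mixed states'' and only remarks that the proof goes through the \emph{distance-partition} property: the distribution obtained by measuring $\rho_{gs}$ in a suitable tensor-product basis is supported on well-separated clusters, and shallow circuits cannot produce such distributions. Your proposal is not a faithful reconstruction of that route, and its first step contains a genuine gap. You propose to reduce to a single pure code-state by convexity: pick $\ket{\psi}\in{\cal C}$ with $\bra{\psi}\rho\ket{\psi}$ not too small and then invoke the pure-state version of Prop.~45. But closeness of $\rho$ to a \emph{mixture} $\rho_{gs}=\sum_i p_i\ketbra{\psi_i}$ in trace distance (or in fidelity) does not imply closeness to any individual $\ket{\psi_i}$: if $\rho_{gs}$ is, say, the uniform mixture over an orthonormal basis of the codespace, then $\rho$ is at trace distance close to $1$ from every pure code-state, and having $\bra{\psi}\rho\ket{\psi}$ bounded below by a small constant is far too weak to feed into a lower bound that assumes $n^{-2}$-approximation of $\ket{\psi}$. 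Handling mixtures is precisely the nontrivial content of the ``rephrasing,'' and the correct fix is the one the paper alludes to: for a CSS code with $k\geq 1$ and distance $n^{\eps}$, measuring \emph{any} code-state (pure or mixed) in at least one of the $X$/$Z$ product bases yields a classical distribution whose mass is split between sets at Hamming distance $\geq n^{\eps}$; this property is stable under convex mixtures and under $n^{-2}$ trace-distance perturbation, and it is this classical ``two far-apart heavy sets'' property that the light-cone argument of \cite{EH17} rules out for shallow circuits.

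A second, independent problem is your proposed ``cleanest'' finish via entanglement entropy across a cut. The lemma places no geometric-locality assumption on $V$: with all-to-all $2$-local gates, a depth-$1$ circuit already creates $\Omega(n)$ ebits across a balanced bipartition, so bounding entropy by the number of gates straddling a cut gives no depth lower bound here. Likewise, your claim that a code-state must have entropy $\geq k$ plus $\Omega(n^{\eps})$ across ``essentially every'' cut is not justified and is not needed. The light-cone ingredient you do correctly identify (each output qubit depends on $\leq 2^{O(d)}$ inputs) is the right one, but it must be applied to the \emph{measurement distribution} and its clustering into far-apart cosets, not to a pure-state fidelity or an entanglement-entropy criterion. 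As written, the proposal would not compile into a proof without replacing step (1) and the final step by the distance-partition argument.
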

We note that the lemma above doesn't actually use the full error-correction property of the state,
but rather a weaker property called distance-partition, whereby the distribution induced
by measuring $\rho_{gs}$ in some tensor-product basis is partitioned on far-away sets.

\subsection{Quantum Locally Testable Codes}

In \cite{AE13} Aharonov and the author defined quantum locally testable codes ($\qLTC$'s).
We state here a version due to Eldar and Harrow \cite{EH17}:
a quantum locally testable code can be defined by the
property that quantum states on $n$ qubits at distance $d$ to the codespace have energy
$\Omega(d/n)$.  
\begin{definition}\label{def:subspace-distance}
If $V$ is a subspace of $(\mathbf{C}^2)^{\otimes n}$ then define its
$t$-fattening to be 
\be V_t := \Span \{ (A_1 \otimes \cdots \otimes A_n)\ket\psi : \ket\psi \in V,
\# \{i : A_i \neq I\}\leq t\}.\ee
Let $\Pi_{V_t}$ project onto $V_t$.  Then define the distance operator
\be D_V := \sum_{t\geq 1} t (\Pi_{V_t} - \Pi_{V_{t-1}}).\ee
\end{definition}
This reflects the fact that for quantum states, Hamming distance
should be thought of as an observable, meaning a Hermitian operator
where a given state can be a superposition of eigenstates.

\begin{definition}\label{def:qltc}

\textbf{Quantum locally testable code}

\noindent
An $(q,s)$-quantum locally testable code ${\cal C}\subseteq (\mathbb{C}^2)^{\otimes n}$, 
is a quantum code with $q$-local projection $C_1,\ldots,C_m$
such that 
\be
 \frac{1}{m}\sum_{i=1}^m C_i 
 \succeq   \frac{ s}{n}   D_{\calC}.
\ee
$s$ is called the {\it soundness} parameter of the code.
\end{definition}
We note that the soundness parameter $s$ in this definition generalizes
the standard notion of soundness of a classical $\LTC$ as a special case, where all $C_i$'s are
diagonal in the computational basis.
In particular, if the quantum code is a stabilizer code, then the definition of quantum local testability can be further simplified to resemble classical local testability more closely:
\begin{definition}\label{def:sltc}

\textbf{Stabilizer Locally-Testable Codes ($\sLTC$)}

\noindent
An $\sLTC$ is a quantum stabilizer code that is $\qLTC$.
An equivalent group-theoretic of an $\sLTC$ is as follows:
${\cal C}$ is a stabilizer code generated by stabilizer group ${\cal G}$.
It is $(q,s)-\sLTC$ if there exists
a set $S$ of $q$-local words
in the stabilizer group $g_1,\hdots, g_t \in {\cal G}$ such that for  $P\in {\cal P}^n$
we have
$$
\P_{g \sim U[S]}\left(  [g, P] \neq 0 \right) \geq (|P|/n) \cdot s
$$
where $|P|$ is the Hamming weight of $P$ modulo the centralizer of $G$, $C[{\cal G}]$:
$$
|P| = \min_{z\in C[{\cal G}]} wt(P + z)
$$
where for $x\in {\cal P}^n$ $wt(x)$ counts the number of non-identity entries in $x$.
\end{definition}

See \cite{EH17} and \cite{AE13} for the derivation of this definition as a special case of Definition \ref{def:qltc}:
the operator ${\cal D}_{\cal C}$ penalizes a quantum state according to the "weighted" distance of that state from the codespace, whereas in definition \ref{def:sltc} the penalty is defined w.r.t. each Pauli error separately, and as a function of the standard Hamming weight of the error, modulo the code. 

Given a $(q,s)$-$\sLTC$ one can generate a $\sLTC$ with parameters
$(\ceil{q / s}, 1/e)$ by amplification as follows:
\begin{proposition}\label{prop:avg}

\textbf{Randomized Amplification}

\noindent
Given is a $(q,s)$  $\sLTC$ on $n$ qubits with $\poly(n)$ checks.
There exists a $\qLTC$  ${\cal C}_{amp}$ of $\poly(n)$ checks
with parameters 
$$
(\ceil{q /s}, 1/e)
$$
where each qubit is incident on at most
$D' = \ceil{q \log^2(n) / s}$ checks.
\end{proposition}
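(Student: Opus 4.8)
The plan is to prove Proposition~\ref{prop:avg} by a standard probabilistic (first-moment / union-bound) argument, constructing the amplified code as a random selection of "super-checks", each super-check being the conjunction of a random batch of $\ceil{1/s}$ of the original checks. Concretely, set $r = \ceil{q/s}$-locality by choosing each new check $g_j$ to be a product $g_j = \prod_{i \in T_j} g_i$ where $T_j$ is a subset of $\ceil{1/s}$ indices drawn (say, uniformly and independently) from $[m]$, and repeat this $M = \Theta(n \log^2(n))$ times to get checks $g_1,\dots,g_M$. Since each original check $g_i$ is $q$-local, a product of $\ceil{1/s}$ of them is at most $\ceil{q/s}$-local, giving the claimed query size. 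The degree bound $D' = \ceil{q \log^2(n)/s}$ should follow from the same union bound: each qubit lies in the support of an original check with probability governed by the original degree $D$, and a Chernoff/union bound over the $M$ batches controls how many super-checks touch a given qubit with high probability (possibly after a mild truncation or re-sampling step to enforce the bound deterministically).

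The core step is the soundness analysis. Fix any Pauli error $P \in {\cal P}^n$ with (code-reduced) weight $w = |P|$; write $p = w/n$. By the $(q,s)$-$\sLTC$ hypothesis (Definition~\ref{def:sltc}), a uniformly random original check $g_i$ from the testing set anticommutes with $P$ with probability at least $p s$. A new super-check $g_j = \prod_{i \in T_j} g_i$ anticommutes with $P$ iff an \emph{odd} number of the chosen $g_i$ do; the probability that \emph{none} of the $\ceil{1/s}$ independently chosen checks in $T_j$ anticommutes with $P$ is at most $(1 - ps)^{\ceil{1/s}} \le (1-ps)^{1/s} \le e^{-p}$. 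So $\P(g_j \text{ anticommutes with } P) \ge 1 - e^{-p} \ge (1 - 1/e) p$ for $p \in [0,1]$ (using $1 - e^{-x} \ge (1-1/e)x$ on $[0,1]$), which already gives the $1/e$-type soundness constant \emph{in expectation} for each fixed $P$. The remaining work is to upgrade this per-$P$ statement to a statement that holds \emph{simultaneously for all} $P$ with a \emph{fixed} set of super-checks: here one takes the empirical average $\frac1M \sum_{j=1}^M \mathbf{1}[g_j \text{ anticommutes with } P]$, observes it concentrates around its mean (Chernoff, since the $g_j$ are independent), and union-bounds over the $\le 4^n$ Pauli errors $P$. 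Choosing $M = \Theta(n \log^2 n)$ (in fact $M = \Theta(n)$ would suffice for the union bound, but the larger $M$ is what feeds the degree bound and the slack needed to absorb the even/odd cancellation) makes the failure probability $o(1)$, so a good fixed choice of super-checks exists. One should double-check the even-cancellation issue does not bite: a cleaner route is to note $\P(\text{odd number anticommute}) = \tfrac12(1 - (1-2ps)^{\ceil{1/s}})$ when sampling with the symmetric $\pm$ structure, or to simply observe that $\P(\text{at least one anticommutes and the product does}) \ge \P(\text{exactly one anticommutes})$, which is already $\Omega(p)$ for the relevant range; either way the constant can be taken to be $1/e$ after adjusting the batch size by a constant factor.

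The main obstacle I expect is not the counting but the \emph{even/odd anticommutation cancellation}: unlike the classical LTC amplification where one just ANDs predicates, here $g_j$ fails to detect $P$ whenever an even number of its constituents anticommute with $P$, so the naive bound "$g_j$ detects $P$ unless all constituents miss $P$" is not literally correct. The fix is to show that for a random batch the anticommuting count is odd with probability bounded below by a constant times $\min(1, ps \cdot \ceil{1/s}) = \Omega(\min(1,p))$ — this is a small lemma about $\text{Binomial}(\ceil{1/s}, q)$ parity for $q \ge ps$, which holds as long as $q$ is bounded away from $1$ (true here since detection probability for a single check is $\le 1$ but the relevant regime has $q = ps \le$ small constant for small errors, and for large errors one separately argues directly that many checks fire). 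A secondary, more bookkeeping-level obstacle is making the degree bound $D'$ hold \emph{deterministically} rather than just with high probability; this is handled by discarding the (few, by Chernoff) super-checks that would overload a qubit — one must then re-verify soundness survives the deletion of an $o(1)$-fraction of checks, which it does since soundness is an averaged quantity and we have a constant-factor slack from choosing $M$ generously.
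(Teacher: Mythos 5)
Your construction is not the paper's, and the difference is exactly where your proof breaks. The paper does \emph{not} take the new super-check to be the Pauli product $\prod_{i\in T_j} g_i$; it takes the projection $\tilde C_S = I - \prod_{i\in S}(I+C_i)/2$, i.e., the quantum ``AND'' whose $0$-eigenspace is the simultaneous $+1$-eigenspace of all constituents. Such a check is violated iff \emph{at least one} constituent is violated, so the bound $1-(1-s\delta)^{|S|}\geq \delta/2$ is literally correct and there is no parity issue; the price is that the new checks are $q\lceil 1/s\rceil$-local projections rather than stabilizer words, which is precisely why the proposition concludes with a $\qLTC$ and not an $\sLTC$. The rest of the paper's proof is the same Chernoff-plus-union-bound over Pauli errors that you describe, with $m'=n\log^2 n$ batches (note that $M=\Theta(n)$ does \emph{not} suffice: for weight-$1$ errors the deviation probability is only $e^{-\Theta(M/n)}$, so you need $M=\Omega(n\log n)$ just for that stratum).

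The even/odd cancellation you flag is therefore a genuine gap in your route, and your proposed fixes do not close it. For a batch of $k=\lceil 1/s\rceil$ i.i.d. draws, each anticommuting with $P$ with probability $\theta_P$, the product anticommutes with probability $\tfrac12\bigl(1-(1-2\theta_P)^k\bigr)$. The $\sLTC$ hypothesis gives only a \emph{lower} bound $\theta_P\geq s|P|/n$; nothing prevents $\theta_P$ from being close to $1$ for some $P$, in which case with $k$ even this probability is $O(k(1-\theta_P))$ and can be arbitrarily small (exactly $0$ at $\theta_P=1$) even for $|P|=\Theta(n)$, so constant soundness fails. Your fallback bound $\P(\text{exactly one anticommutes})=k\theta_P(1-\theta_P)^{k-1}$ is exponentially small in $\theta_P k$ and hence useless outside the regime $\theta_P=O(s)$; and ``many constituents fire'' does not rescue the super-check, because only the product is measured and it commutes with $P$ whenever an even number fire. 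To repair your argument you would either have to switch to the paper's AND-of-projections (giving up the stabilizer form of the new checks), or add a genuinely new ingredient such as randomizing the batch sizes so that the parity cannot conspire for any fixed $P$. Your treatment of the degree bound (per-qubit Chernoff plus discarding overloaded super-checks) is, for what it is worth, more careful than the paper's, which simply asserts the average-degree count as a max-degree bound.
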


\begin{proof}

Assume w.l.o.g. that the set of checks $\{ C_i \}$ contain no repeats.
Recall that each check $C_i$ partitions the Hilbert space into states with eigenvalue $1$ (satisfying the check)
and eigenvalue $-1$ (violating the check).
Thus $(I + C_i)/2$ is a projection whose $1$-eigenspace is the satisfying space of $C_i$.
For a subset $S\subseteq [m]$ 
let $\tilde C_S$ be the projection operator whose $0$-eigenspace
is the intersection of $1$-eigenspaces of all checks $C_s$, $s\in S$:
$$
\tilde C_S = I -   \prod_{i\in S}  (I + C_i)/2 
$$
Hence, in particular, $\|\tilde C_S\| = 1$ for each $S$.
Fix some error ${\cal E}$, and set a conjugated erroneous state corresponding to ${\cal E}$ as follows:
$$
\tau = {\cal E} \cdot \rho_{gs} \cdot {\cal E} \quad \rho_{gs} \in {\cal C}
$$  
For $i\in [m]$ let $\chi_i$ denote the binary  variable that is $1$ if $\tau$ violates check $C_i$ - i.e. $\tr(C_i \tau) \neq 1$.
For a uniformly random set $S \sim U[ [m]^{1/s} ]$ let $\zeta$ denote
the binary random variable which is $1$ if 
$\tau$ violates $C_S$, i.e. $\tr(\tilde C_S \tau) \neq 0$.

Since ${\cal C}$ is a stabilizer $\qLTC$ then all checks $C_i$ commute so we have:
\be\label{eq:prod1}
\mathbf{E}[\zeta] = \P( \tr(\tilde C_S \tau) \neq 0) = \mathbf{E}_{S \sim U[[m]^{1/s}]}
 \left[ 1 - \prod_{i\in S} \left(1 -\chi_i \right)\right]
\ee
Let $\delta = \delta(\tau,{\cal C})$ denote the normalized distance between $\tau$ and ${\cal C}$, modulo ${\cal C}$:
$$
\delta(\tau,{\cal C}) = \Delta(\tau,{\cal C})/n
$$
Then by the $\qLTC$ condition of the original code ${\cal C}$ we have 
that a random check is violated by $\tau$ with probability
proportional to $\tau$'s distance from ${\cal C}$:
\be\label{eq:qltc1}
\forall \tau \quad
\mathbf{E}_{k \sim U[m]} \left[\chi_k\right] \geq s \cdot \delta(\tau,{\cal C})
\ee
Let $z$ denote a random variable corresponding to $\chi_i$ where $i\sim U[m]$.
Then by Equation \ref{eq:qltc1} $z$ is a Bernoulli random variable with bias
that is lower-bounded by $\tau$'s distance from ${\cal C}$:
$$
\forall \tau \quad
z \sim {\rm Bern}(p), \quad p \geq s \cdot \delta(\tau,{\cal C})
$$
then for a tuple of $|S|$ i.i.d. variables $z_1,\hdots, z_{|S|}$, each distributed as $z$, the random variable $\zeta$
is distributed by Equation \ref{eq:prod1} as:
$$
\zeta \sim 1 - \prod_{i\in [|S|]} \left(1 -z_k \right)
$$
and so
$$
\zeta \sim {\rm Bern}(q), \quad q \geq 1 - (1 - s \delta )^{|S|}
$$
setting $|S| = \ceil{1/s}$ we have:
$$
q \geq
1 - (1 - s \delta )^{|S| \delta / \delta}
\geq
1 - (1/e)^{\delta}
=
1 - e^{-\delta} 
\geq
1 - (1 - \delta/2)
=
\delta/2
$$
So we conclude:
$$
\zeta \sim {\rm Bern}(q), \quad q \geq \delta/2$$
Consider now a uniformly random family ${\cal F}$ of $m'$ checks where each check
is sampled randomly and independently from $[m]^{|S|}$.
By independence of choice of checks we have by the Chernoff-Hoeffding bound:
$$
\P_{\cal F}\left( \sum_{S\in {\cal F}} \zeta_S \leq \delta m'/e \right)
\leq
e^{-D(\delta/e || \delta/2)\cdot m'}
$$
using
$$
D(x || y) \geq \frac{(x-y)^2}{2y}
$$
implies
$$
D( \delta/e || \delta/2 ) \geq \frac{ (\delta \cdot 0.1)^2}{\delta} = \delta / 100
$$
On the other hand, for any $\delta$ the number of minimal-weight Pauli errors of fractional weight $\delta$ is at most
the number of (possibly not minimal) Pauli errors of weight at most $\delta$:
$$
4^{\delta n} \cdot {n \choose \delta n} \leq e^{ 2\delta n \log(n)}
$$
It follows by the union bound over all errors of fixed weight $\Delta n$ -  that if $m' \geq n \log^2(n)$ then 
the probability 
that some error of fractional weight $\delta$
has less than $\delta m'/e$
violations
is at most:
$$
e^{ 2\delta n \log(n)} \cdot e^{- \delta \cdot m' / 100} = 2^{-\Omega(n \log^2(n) \delta)}
$$
Since $\delta \geq 1/n$ it follows by the union bound over all values $\delta$ it follows there exists a family ${\cal F}_0$ such that
$$
\forall \tau 
\quad
\sum_{S\in {\cal F}} \zeta_S \geq \delta(\tau,{\cal C}) \cdot  m'/e
$$
Hence ${\cal F}_0$ is $\qLTC$ with soundness at least $1/e$, query size $\ceil{q/s}$ and at most $n\log^2(n)$ checks.
The degree of each qubit in ${\cal F}_0$
is at most 
$\ceil{q m' / (s n)}$.
\end{proof}

\section{Preliminaries III: Expansion of Small Errors on the Projective Hypercube}

The main observation of this section is that while the projective code is a $\qLTC$
with a mild soundness parameter $1/\log^2(n)$, the soundness parameter for {\it small}
errors is much better, and in fact for very small errors, their boundary (i.e. the Hamming weight
of their image) is very close to maximal.

We begin with a couple of standard definitions the first of which are the definitions
of the combinatorial upper and lower shadow of subsets of $r$ elements from a set of size $[n]$:
\begin{definition}

\textbf{Shadow}

\noindent
Let $[n]^r$ denote the set of all $r$-subsets of $[n]$,
and let ${\cal A} \subseteq [n]^r$.
The lower shadow of ${\cal A}$ is the set of all $r-1$ subsets
which are contained in at least one element of ${\cal A}$:
$$
\partial^{-}{\cal A} = \{A - \{i\} : A\in {\cal A}, i\in A\}
$$
and the upper shadow of ${\cal A}$ is the set of all $r+1$ subsets
that contain at least one element of ${\cal A}$:
$$
\partial^{+}{\cal A} = \{A + \{i\} : A\in {\cal A}, i\notin A\}
$$
\end{definition}

\noindent
We define $p$-faces as follows:
\begin{definition}

\textbf{$p$-face, set of $p$-faces, subspaces of $p$-faces}

\noindent
For integer $n\geq 1$ a $p$-face is a word in $\{0,1,*\}^n$ that contains exactly $p$ positions with $*$.
We denote by ${\cal K}_p^N$ as the set of $p$-faces of the $n$-th cube.
Let $C_p^N$ denote the space spanned
by ${\cal K}_p^N$ with coefficients from $\F_2$.
\end{definition}

\noindent
One can think about a $p$-face as a subset of $\{0,1\}^n$ of all points that are equal to the $p$-face in its non-$*$ positions.
Under this notation one can naturally define upper and lower shadow of $p$-faces as follows:

\begin{definition}

\textbf{Shadow of $p$-faces of ${\cal K}_p^N$}

\noindent
The lower-shadow $\partial^-$ of a $p$-face $f$ is the set of all $p-1$ faces derived
by replacing any $*$ entry with either $0$ or $1$.
The upper-shadow $\partial^+$ of a $p$-face $f$ is the set of all $p+1$ faces that can be derived
by replacing any non-$*$ entry of $p$ with $*$.
\end{definition}

To connect the definitions above, note that 
the $\F_2$-boundary operator $\partial_{p+1}$ associated with the $\F_2$-complex chain $\{ C_{p}^n\}_p$ maps each $p+1$-face $f$
to a summation over the set of $p$-faces $\partial^- f$ with coefficient $1$ in $\F_2$,
whereas the co-boundary map $\partial_p^T$ sends each $p-1$ face $f$ to a summation
over the set of $p$-faces $\partial^+ f$ with coefficient $1$.

Importantly, in this work, we will focus on the $p$-faces of the projective cube as the combinatorial set ${\cal K}_p^N$,
and not on the corresponding $\F_2$-space $C_p^N$.
This is because we are interested in establishing a combinatorial expansion property of the boundary maps $\partial^+, \partial^-$,
to be later used in conjunction with the Sipser-Spielman decoder.

However, we will use, in a black box fashion, the properties of these maps, as maps over an $\F_2$ complex chain that appeared 
in $\cite{LLZ18}$: these properties are namely the soundness and
minimal distance of a quantum code derived by the pair $(\partial^+, \partial^-)$.

For completeness, we mention some of the main results pertaining to the expansion of $\partial^+, \partial^-$
as combinatorial sets.
A central result in extremal combinatorics is the Kruskal Katona theorem.
That theorem asserts, in a version due to Lovasz, that when considering a family ${\cal A}$ of subsets of $[n]$ of size $r$, 
the size of the lower shadow of ${\cal A}$ behaves essentially like choosing subsets of elements of ${\cal A}$ of size
$r-1$ without repetition.
\begin{lemma}\label{lem:KK}

\textbf{Kruskal-Katona theorem}

\noindent
Let ${\cal A} \subseteq [n]^r = \{1,2,\hdots, n\}^{(r)}$, and $x\geq 0$ such that $|{\cal A}| = {x \choose i}$.
Then the lower shadow of $A$ satisfies
$$
|\partial^- {\cal A}| \geq {x \choose i-1}.
$$
\end{lemma}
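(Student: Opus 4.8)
The plan is to establish the statement in the sharp real‑parameter form due to Lov\'asz: writing $\binom{t}{k}=t(t-1)\cdots(t-k+1)/k!$ for real $t\ge k$, if $\mathcal{A}\subseteq[n]^r$ with $|\mathcal{A}|=\binom{x}{r}$ for some real $x\ge r$, then $|\partial^-\mathcal{A}|\ge\binom{x}{r-1}$ (here $i=r$ in the notation of the statement). I would argue by induction on the pair $(r,|\mathcal{A}|)$ ordered lexicographically over all ground sets; the base case $r=1$ is immediate, since then $\mathcal{A}$ is a set of $x$ singletons and $\partial^-\mathcal{A}=\{\emptyset\}$ has size $1=\binom{x}{0}$. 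Throughout one may assume, by discarding unused points, that the ground set of $\mathcal{A}$ is exactly $\bigcup_{A\in\mathcal{A}}A$.

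The first step is to reduce to a structured family by \emph{compression}. For $i<j$ let $C_{ij}$ replace every $A\in\mathcal{A}$ with $j\in A$, $i\notin A$ by $(A\setminus\{j\})\cup\{i\}$ whenever the latter is not already in the family, and fix $A$ otherwise; it is routine that $|C_{ij}\mathcal{A}|=|\mathcal{A}|$ and $|\partial^-(C_{ij}\mathcal{A})|\le|\partial^-\mathcal{A}|$. Applying compressions until the family is stable under all $C_{ij}$ yields a left‑compressed family $\mathcal{B}$ with $|\mathcal{B}|=|\mathcal{A}|$ and $|\partial^-\mathcal{B}|\le|\partial^-\mathcal{A}|$, so it suffices to prove the bound for $\mathcal{B}$.

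Now split $\mathcal{B}$ on the largest ground element $n$: let $\mathcal{B}_0=\{A\in\mathcal{B}:n\notin A\}\subseteq[n-1]^r$ be the deletion and $\mathcal{B}'=\{A\setminus\{n\}:n\in A\in\mathcal{B}\}\subseteq[n-1]^{r-1}$ the link (both again left‑compressed, and $\mathcal{B}'\neq\emptyset$, $|\mathcal{B}_0|<|\mathcal{B}|$). Classifying the $(r-1)$‑sets in $\partial^-\mathcal{B}$ by whether they contain $n$ gives the disjoint decomposition
$$|\partial^-\mathcal{B}|=|\partial^-\mathcal{B}_0\cup\mathcal{B}'|+|\partial^-\mathcal{B}'|.$$
The key combinatorial fact is that for a left‑compressed family $\mathcal{B}'\subseteq\partial^-\mathcal{B}_0$: given $B\in\mathcal{B}'$, i.e. $B\cup\{n\}\in\mathcal{B}$, pick any $c\in[n-1]\setminus B$; stability under $C_{c,n}$ forces $B\cup\{c\}\in\mathcal{B}$ (otherwise $C_{c,n}$ would move $B\cup\{n\}$), and $B\cup\{c\}\in\mathcal{B}_0$. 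Hence $|\partial^-\mathcal{B}|=|\partial^-\mathcal{B}_0|+|\partial^-\mathcal{B}'|$. Writing $|\mathcal{B}'|=\binom{y}{r-1}$ and $|\mathcal{B}_0|=\binom{z}{r}$ and invoking the inductive hypothesis at $r-1$ for $\mathcal{B}'$ and at $(r,|\mathcal{B}_0|)$ for $\mathcal{B}_0$, we obtain $|\partial^-\mathcal{B}|\ge\binom{z}{r-1}+\binom{y}{r-2}$, while $\binom{x}{r}=|\mathcal{B}|=\binom{z}{r}+\binom{y}{r-1}$.

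It then remains to deduce $\binom{z}{r-1}+\binom{y}{r-2}\ge\binom{x}{r-1}$. This fails for unrelated $y,z$ (take $z$ near $r$ and $y$ huge), so one must feed in the constraint $\binom{y}{r-1}=|\mathcal{B}'|\le|\partial^-\mathcal{B}_0|$ coming from $\mathcal{B}'\subseteq\partial^-\mathcal{B}_0$, together with the fact that a left‑compressed $\mathcal{B}_0$ of size $\binom{z}{r}$ has shadow of size essentially $\binom{z}{r-1}$ (not much larger), to conclude that $y$ is at most roughly $z$; the exact inequality then follows by an elementary computation with generalized binomial coefficients, cleanest via the cascade (colexicographic) representation of the integer $|\mathcal{B}|$ and Pascal's rule $\binom{t}{k}+\binom{t}{k-1}=\binom{t+1}{k}$. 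I expect this last step --- coupling the combinatorial containment $\mathcal{B}'\subseteq\partial^-\mathcal{B}_0$ to the arithmetic of the generalized binomials so that the estimate comes out tight --- to be the main obstacle; the compression lemma and the set‑theoretic decomposition are standard.
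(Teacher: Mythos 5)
The paper does not actually prove this lemma --- it is quoted as the classical Kruskal--Katona theorem in Lov\'asz's form (and in fact the paper only uses the elementary Lemma \ref{lem:shadowsize} downstream) --- so there is no internal proof to compare against; I am judging your proposal on its own. Your skeleton (compression, splitting on a ground element, induction on $(r,|\mathcal{A}|)$) is the standard route, and everything up to and including the identity $|\partial^-\mathcal{B}|=|\partial^-\mathcal{B}_0|+|\partial^-\mathcal{B}'|$ is correct (modulo the degenerate case where no $c\in[n-1]\setminus B$ exists, which forces $r=n$ and a trivial family). But the gap you flag at the end is genuine, and the patch you sketch for it does not work: the auxiliary claim that a left-compressed family of size $\binom{z}{r}$ has shadow of size ``essentially $\binom{z}{r-1}$, not much larger'' is false. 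The star $\mathcal{B}_0=\{\{1,j\}:2\le j\le m+1\}$ is left-compressed, has $m=\binom{z}{2}$ with $z=\Theta(\sqrt{m})$, yet its shadow has size $m+1\gg z$. So you cannot bound $y$ in terms of $z$ this way, and the two-parameter inequality $\binom{z}{r-1}+\binom{y}{r-2}\ge\binom{x}{r-1}$ is left unestablished.

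The source of the difficulty is the decision to split on the \emph{largest} element; splitting on the \emph{smallest} element (Frankl's proof of the Lov\'asz form) makes the numerology close by itself. With $\mathcal{A}$ left-compressed, set $\mathcal{A}_0=\{A\in\mathcal{A}:1\notin A\}$ and $\mathcal{A}_1=\{A\setminus\{1\}:1\in A\in\mathcal{A}\}$. Classifying shadow elements by whether they contain $1$ gives $|\partial^-\mathcal{A}|\ge|\mathcal{A}_1|+|\partial^-\mathcal{A}_1|$, and left-compression now yields the containment in the useful direction, $\partial^-\mathcal{A}_0\subseteq\mathcal{A}_1$. Claim: $|\mathcal{A}_1|\ge\binom{x-1}{r-1}$. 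If not, then $|\mathcal{A}_0|>\binom{x}{r}-\binom{x-1}{r-1}=\binom{x-1}{r}$, and the induction hypothesis applied to $\mathcal{A}_0$ (which is strictly smaller than $\mathcal{A}$, since a nonempty left-compressed family has $\mathcal{A}_1\neq\emptyset$) forces $|\partial^-\mathcal{A}_0|>\binom{x-1}{r-1}>|\mathcal{A}_1|$, contradicting the containment. Then induction on $r$ gives $|\partial^-\mathcal{A}_1|\ge\binom{x-1}{r-2}$, and Pascal's rule $\binom{x-1}{r-1}+\binom{x-1}{r-2}=\binom{x}{r-1}$ finishes the proof with no further manipulation of generalized binomial coefficients. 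I recommend you restructure the final step along these lines rather than trying to salvage the split on $n$.
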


\noindent
Subsequent theorem by Bollobas extends this to upper-shadows:

\begin{lemma}

\textbf{Bollobas' extension}

\noindent
Let ${\cal A} \subseteq [n]^r = \{1,2,\hdots, n\}^{(r)}$, and $x\geq 0$ such that $|{\cal A}| = {x \choose i}$.
Then the upper shadow of $A$ satisfies
$$
|\partial^+ {\cal A}| \geq {x \choose i +1}.
$$
\end{lemma}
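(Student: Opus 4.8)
The plan is to establish the bound by a direct double-counting of the incidences between $\calA$ and $\partial^{+}\calA$, rather than by invoking a sharp extremal theorem: the point is that the crude incidence inequality already beats $\binom{x}{r+1}$ once one notices that necessarily $x\le n$. Throughout I read the hypothesis as $|\calA|=\binom{x}{r}$ with $r$ the uniformity of $\calA$, and I may assume $|\calA|\ge 1$ (if $\calA=\emptyset$ the claim is trivial, $\binom{x}{r+1}=0$) and, WLOG, $x\ge r$, so that $\binom{t}{r}$ is the usual degree-$r$ polynomial, increasing on $[r-1,\infty)$.

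First I would count the flag set $P=\{(A,j):A\in\calA,\ j\in[n]\setminus A\}$ in two ways. Each $A\in\calA$ gives exactly $n-r$ flags, so $|P|=(n-r)|\calA|$. The map $(A,j)\mapsto A\cup\{j\}$ surjects $P$ onto $\partial^{+}\calA$, and a fixed $(r+1)$-set $B\in\partial^{+}\calA$ is the image of at most $r+1$ flags (the pairs $(B\setminus\{j\},j)$ over those $j\in B$ with $B\setminus\{j\}\in\calA$), whence $(n-r)|\calA|\le(r+1)|\partial^{+}\calA|$, i.e. $|\partial^{+}\calA|\ge\frac{n-r}{r+1}|\calA|$. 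Next, since $\binom{x}{r}=|\calA|\le\binom{n}{r}$ and $\binom{t}{r}$ is increasing, we get $x\le n$; combining with $n-r\ge x-r\ge 0$ and the identity $\binom{x}{r+1}=\frac{x-r}{r+1}\binom{x}{r}$ yields
$$|\partial^{+}\calA|\ \ge\ \frac{n-r}{r+1}\binom{x}{r}\ \ge\ \frac{x-r}{r+1}\binom{x}{r}\ =\ \binom{x}{r+1},$$
which is the assertion.

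The only delicate point is the bookkeeping around the real-parameter representation $|\calA|=\binom{x}{r}$ — justifying the choice $x\in[r,n]$ and disposing of degenerate cases $\binom{x}{r}=0$ — but this is routine, and there is no genuine obstacle: the entire content is that for the \emph{upper} shadow the trivial incidence bound happens to point the favorable way, in contrast to the lower shadow, where $\binom{x}{r-1}$ strictly exceeds the analogous crude bound and Kruskal--Katona (Lemma~\ref{lem:KK}) is genuinely needed. If one preferred to mirror the lower-shadow proof instead, the route would be complementation: the bijection $A\mapsto[n]\setminus A$ identifies $\partial^{+}\calA$ with $\partial^{-}\calA^{c}$ for $\calA^{c}=\{[n]\setminus A:A\in\calA\}\subseteq[n]^{(n-r)}$ (since $[n]\setminus(A\cup\{j\})=([n]\setminus A)\setminus\{j\}$), after which Lemma~\ref{lem:KK} applies to $\calA^{c}$; but one then pays by having to re-express $\binom{x}{r}$ in the cascade form appropriate to $(n-r)$-sets, so I would keep the direct double-counting as the main argument.
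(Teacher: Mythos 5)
The paper does not prove this lemma at all: it is quoted as a known theorem of Bollob\'as (the upper-shadow companion to the Lov\'asz form of Kruskal--Katona) and, moreover, is never invoked downstream --- the construction only uses the bespoke Lemma~\ref{lem:shadowsize}, which is proved directly. So there is nothing in the paper to compare your argument against, and the relevant question is only whether your proof of the statement as written is sound. It is. The incidence count $(n-r)\,|{\cal A}|\le (r+1)\,|\partial^{+}{\cal A}|$ is correct, the identity $\binom{x}{r+1}=\frac{x-r}{r+1}\binom{x}{r}$ holds for the real-parameter binomial, and the observation that $\binom{x}{r}=|{\cal A}|\le\binom{n}{r}$ forces $x\le n$ (in the canonical normalization $x\ge r$, which you rightly flag and justify) closes the chain. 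What your argument buys, beyond brevity, is the realization that the lemma \emph{in this parametrization} is strictly weaker than the genuine extremal theorem: the sharp upper-shadow bound is naturally normalized as $|{\cal A}|=\binom{y}{n-r}\Rightarrow|\partial^{+}{\cal A}|\ge\binom{y}{n-r-1}$ (tight for stars, i.e.\ all $r$-sets containing a fixed set), obtained exactly by the complementation $A\mapsto[n]\setminus A$ you sketch as the alternative route. The $\binom{x}{r}\to\binom{x}{r+1}$ form stated here is dominated by the trivial degree bound, as you show, so no appeal to Kruskal--Katona is needed for it. Since the paper uses neither form, this has no effect on the main results, but your proof is a legitimately self-contained and more elementary justification of the lemma as stated.
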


In our case, however we would like to treat $p$-faces of the $n$-hypercube.
While this resembles the case of subsets of $[n]^r$ there is a major difference - since
now any $*$-entry replaced, can assume a value either $0$ or $1$, and the isoperimetric inequality
needs to account for this larger set.
Bollobas and Radcliffe provide an isoperimetric inequality for the regular grid.
\begin{lemma}

\textbf{Isoperimetric inequalities for $r$-faces}[Thm. 10, Bollobas and Radcliffe]

\noindent
Let $n>0$ and $E \subseteq {\cal K}_p^n$ be a set of $p+1$-faces of the $n$-th hypercube such that $|\partial^- E| = 2^{y-p} {y \choose p}$
then
$$
|E| \leq 2^{y-p-1} {y \choose p+1}
$$
\end{lemma}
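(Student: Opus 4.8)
The plan is to read this off as a Kruskal--Katona-type isoperimetric statement for faces of the hypercube and to prove it by the compression (shifting) method of Bollob\'as and Radcliffe. The first point is that the inequality is tight, with the extremal object a subcube: if $Q$ is a fixed $y$-dimensional subcube of the $n$-cube and $E$ is the set of all $(p+1)$-faces contained in $Q$, then $|E| = 2^{y-p-1}\binom{y}{p+1}$ and $\partial^- E$ is exactly the set of all $p$-faces of $Q$, of size $2^{y-p}\binom{y}{p}$. Thus the lemma asserts that among all families $E$ of $(p+1)$-faces whose lower shadow has the size of a subcube's $p$-face set, none is larger than that subcube -- equivalently, in the usual ``shadow lower bound'' form, a family of given size cannot have a lower shadow smaller than the corresponding initial segment in the appropriate (colexicographic-type) order on faces.

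To prove it I would pass to the string model, identifying a $(p+1)$-face of the $n$-cube with a word in $\{0,1,*\}^n$ having exactly $p+1$ symbols $*$, and identifying $\partial^-$ with the operation of replacing one $*$ by a $0$ or a $1$. On families $E$ of such words I would introduce two kinds of compression: (i) for an ordered pair of coordinates $(i,j)$, the ``shift'' that relocates a $*$ from coordinate $j$ to coordinate $i$ whenever the resulting word is not already in $E$ (the analogue of the classical $ij$-compression for set systems); and (ii) a ``value'' compression that, at a coordinate not equal to $*$, replaces a $1$ by a $0$ when the result is not already present. The two facts to establish are that each compression leaves $|E|$ unchanged and does not increase $|\partial^- E|$: for (ii) this is because value-flipping maps $\partial^-$ of a word bijectively onto $\partial^-$ of its image, so only collisions in the shadow can change, and a short local pairing argument shows the number of shadow elements cannot go up; for (i) one argues as in the set-system case. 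Iterating both compressions to a common fixed point produces a family $E'$ with $|E'| = |E|$ and $|\partial^- E'| \le |\partial^- E|$ that is down-closed under both operations, and one checks that such a compressed family whose lower shadow has subcube size is itself contained in the $(p+1)$-faces of a $y$-subcube, which gives $|E| = |E'| \le 2^{y-p-1}\binom{y}{p+1}$. Alternatively, for the compressed family one can finish by induction on $n$: split $E'$ according to whether coordinate $n$ equals $0$, $1$, or $*$ -- the $*$-part is a family of $p$-faces of the $(n-1)$-cube and the other two parts are families of $(p+1)$-faces of the $(n-1)$-cube -- express $|\partial^- E'|$ and $|E'|$ in terms of the three pieces, and apply the inductive bound.

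The main obstacle, and the reason the cube-face case is genuinely more delicate than plain Kruskal--Katona, is the bookkeeping caused by the distinction between a $*$ and a definite bit: the lower shadow of a word depends on which coordinates carry $*$, so a compression that is harmless for ordinary subsets can create or destroy shadow elements, and every such change must be charged carefully against the words added or removed. Concretely, the ``does not increase $|\partial^- E|$'' claim for the value compression, and making the induction on $(n,p)$ close while correctly tracking the subcube parameter $y$ through the split along coordinate $n$, are the steps that need care. Since the statement is exactly Theorem~10 of Bollob\'as and Radcliffe, in the write-up I would cite it for the full argument and include only as much of the above as is needed to keep the application to the projective cube self-contained.
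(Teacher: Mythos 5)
The paper offers no proof of this lemma: it is quoted verbatim as Theorem~10 of Bollob\'as and Radcliffe, and is in fact never invoked later on --- the decoding argument relies only on the elementary small-set bound of Lemma~\ref{lem:shadowsize}. Your plan to cite the original and keep only a sketch therefore matches what the paper does; your identification of the extremal configuration (the $(p+1)$-faces of a $y$-dimensional subcube, of which there are $2^{y-p-1}\binom{y}{p+1}$, whose lower shadow is the set of $2^{y-p}\binom{y}{p}$ $p$-faces of that subcube) is correct, and the compression outline is a faithful account of how the cited theorem is proved.
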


The bounds above are useful especially when the set of faces is exponentially large
in the dimension of the embedding space.
For our purposes though, 
we are interested in set of $p$-faces that are polynomial in that dimension.
In such a scenario,
a much simpler bound is available as follows:

\begin{lemma}\label{lem:shadowsize}
Let ${\cal A} \subseteq {\cal K}_{p-1}^n$ be a set of $(p-1)$-faces for $p = n/2$, $|{\cal A}| \leq n/32$.
Then
$$
|\partial^+ {\cal A}| \geq |{\cal A}| \cdot (n/2+1) \cdot (15/16)
$$
Let ${\cal A} \subseteq {\cal K}_{p+1}^n$ be a set of $p+1$-faces for $p = n/2$, $|{\cal A}| \leq n/8$.
Then
$$
|\partial^- {\cal A}| \geq 2 \cdot |{\cal A}| \cdot (n/2+1) \cdot (15/16)
$$
\end{lemma}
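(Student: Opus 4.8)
The plan is to prove both statements by a direct double-counting argument, exploiting that for $p$-faces of the $N$-cube a $*$-entry carries an extra binary choice but the total number of coordinates $N$ is fixed. I treat the second statement (lower shadow of $(p+1)$-faces) first, since it is the slightly cleaner of the two, then explain the modification for the first. Let ${\cal A} \subseteq {\cal K}_{p+1}^N$ with $p = N/2$ and $|{\cal A}| \le N/8$. Each $(p+1)$-face $f\in{\cal A}$ has $p+1$ starred positions; replacing any one of them by $0$ or by $1$ gives a $p$-face in $\partial^-{\cal A}$, so the total number of (face, child) incidences leaving ${\cal A}$ is exactly $2(p+1)|{\cal A}|$. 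Thus $|\partial^-{\cal A}| \ge 2(p+1)|{\cal A}|$ minus a correction for $p$-faces that are hit from more than one parent in ${\cal A}$. The key step is to bound this multiplicity: if a $p$-face $g$ is the child of two distinct faces $f_1,f_2\in{\cal A}$, then $f_1,f_2$ are obtained from $g$ by starring one of its $p$ non-$*$ coordinates (there are $p$ such, hence at most $N/2$ potential parents of $g$), but only those lying in ${\cal A}$ count. I would bound the number of \emph{collision pairs} $\{f_1,f_2\}\subseteq{\cal A}$ sharing a common child: two faces in ${\cal K}_{p+1}^N$ share a $p$-face child iff they agree in all but one coordinate and that coordinate is starred in one and fixed (to $0$ or $1$) in the other with compatible values — i.e., they differ in exactly one coordinate where one has $*$ — so each unordered pair contributes at most one common child, and the number of such pairs is at most $\binom{|{\cal A}|}{2}$. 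A cruder and sufficient route: each shared child $g$ inflates the double-count by (multiplicity $-1$), and summing multiplicity over all children is $2(p+1)|{\cal A}|$; I bound $\sum_g (\mathrm{mult}(g)-1) \le$ (number of collision pairs) $\le \binom{|{\cal A}|}{2} \le |{\cal A}|\cdot(N/8)/2 = |{\cal A}|N/16$. Hence
$$
|\partial^-{\cal A}| \;\ge\; 2(p+1)|{\cal A}| - |{\cal A}|N/16 \;=\; 2|{\cal A}|(N/2+1)\Bigl(1 - \tfrac{N/16}{2(N/2+1)}\Bigr)\;\ge\; 2|{\cal A}|(N/2+1)\cdot \tfrac{15}{16},
$$
using $N/16 \le \tfrac{1}{8}\cdot 2(N/2+1)$.

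For the first statement, let ${\cal A}\subseteq{\cal K}_{p-1}^N$ with $|{\cal A}|\le N/32$. Each $(p-1)$-face has $N-(p-1) = N/2+1$ non-$*$ coordinates; starring any one of them produces a $p$-face in $\partial^+{\cal A}$, so the number of (face, parent) incidences is exactly $(N/2+1)|{\cal A}|$ — note the factor $2$ is absent here because starring a fixed coordinate is a single operation, whereas \emph{un}-starring forked into two choices above. Again I bound collisions: a $p$-face $g$ arises from two faces $f_1,f_2\in{\cal A}$ by starring distinct fixed coordinates of... more precisely $f_1,f_2$ are each obtained from $g$ by \emph{fixing} one of $g$'s $p = N/2$ starred coordinates to $0$ or $1$, so $g$ has at most $2\cdot(N/2) = N$ potential parents but only those in ${\cal A}$ matter, and two such parents share exactly this one child. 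The number of collision pairs is at most $\binom{|{\cal A}|}{2}\le |{\cal A}|(N/32)/2 = |{\cal A}|N/64$, giving
$$
|\partial^+{\cal A}| \;\ge\; (N/2+1)|{\cal A}| - |{\cal A}|N/64 \;=\; |{\cal A}|(N/2+1)\Bigl(1-\tfrac{N/64}{N/2+1}\Bigr)\;\ge\; |{\cal A}|(N/2+1)\cdot\tfrac{15}{16},
$$
since $N/64 \le \tfrac{1}{16}(N/2+1)$ for all $N\ge 1$ (indeed $N/64 \le N/32 \le \tfrac{1}{16}\cdot N/2$).

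The main obstacle — and the step I would be most careful about — is the collision/multiplicity bound, i.e. correctly characterizing when two distinct $p$-faces (or $(p\pm1)$-faces) share a common shadow element, and checking that each such coincidence is charged to a \emph{distinct} pair in ${\cal A}$ (so that $\binom{|{\cal A}|}{2}$ is a legitimate upper bound on the total overcount rather than an underestimate). One must verify there is no "triple collision" subtlety inflating a single pair's contribution, and one must get the constants to line up so that the small-set hypotheses $|{\cal A}|\le N/32$ and $|{\cal A}|\le N/8$ are exactly what is needed to push the correction term below $\tfrac{1}{16}$ of the main term. Everything else is elementary incidence counting on the cube, and the two cases are genuinely parallel once the asymmetry between starring (one choice) and un-starring (two choices) is accounted for — which is precisely why the lemma carries the extra factor of $2$ in the second inequality but not the first.
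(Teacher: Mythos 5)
Your proof is correct and follows essentially the same route as the paper's: count incidences via the degree ($D_+=N/2+1$ for starring, $D_-=2(N/2+1)$ for un-starring), observe that any two distinct faces share at most one common shadow element so the overcount is at most $\binom{|{\cal A}|}{2}$, and use the small-set hypothesis to make this correction at most a $1/16$ fraction of the main term. Your multiplicity accounting via $\sum_g(\mathrm{mult}(g)-1)\le\binom{|{\cal A}|}{2}$ is in fact a slightly more careful rendering of the step the paper states informally.
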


\begin{proof}

Every pair of $p-1$-faces $f_1,f_2$ share at most a single $p$-face,
under the map $\partial^+$.
It follows that the number of $p$-faces in $\partial^+ {\cal A}$
that have more than a single $p-1$ face mapped to them is at most
$$
{ |{\cal A}| \choose 2}  \leq |{\cal A}|^2
$$
Let $D_+$ denote the degree of each $p-1$ face under the map $\partial^+$.
Then
$$
D_+ = n/2 + 1
$$
Since $|{\cal A}| \leq n/32$ then
$|{\cal A}| \leq D_+ / 16$ so we have
$$
|{\cal A}|^2 \leq  |{\cal A}| \cdot D_+ / 16
$$
and so the number of unique $p$-faces neighboring ${\cal A}$ under $\partial^+$
is at least
$$
|{\cal A}| \cdot D_+  - |{\cal A}| \cdot  D_+ / 16  = |{\cal A}| \cdot D_+ \cdot (15/16).
$$
hence
$$
|\partial^+ {\cal A}| \geq |{\cal A}| \cdot D_+ \cdot (15/16).
$$
Similarly every pair of $p+1$-faces $f_1,f_2$ share at most one $p$-face
under the map $\partial^-$.
Let $D_-$ denote the degree of each $p+1$ face under the map $\partial^-$.
Then
$$
D_- = 2 (n/2 + 1)
$$
Since $|{\cal A}| \leq n / 32$ then $|{\cal A}| \leq D_- / 16$
so we similarly have
$$
|\partial^- {\cal A}|  \geq |{\cal A}| \cdot D_- \cdot (15/16).
$$

\end{proof}

\subsection{The Projective Code}

\begin{definition}

\textbf{The Projective Cube}

\noindent
Let ${\cal K}_p^N$ denote the set of $p$-faces of the $N$-th cube.
The projective cube, denoted by $\tilde {\cal K}_p^N$ is formed by identifying
$$
x \sim \bar{x}  \quad \mbox{iff} \quad x = \bar{x} + \mathbf{1}
$$
Let $\tilde C_p^N$ denote the space spanned by $\tilde {\cal K}_p^n$ with coefficients in $\F_2$.
\end{definition}

In this study, we will use build upon the projective code defined by Leverrier et al. \cite{LLZ18}:
\begin{definition}\label{def:proj}

\textbf{Projective code}

\noindent
Extend the operators $\partial^+, \partial^-$ from ${\cal K}_p^N$ to $\tilde {\cal K}_p^N$ and
consider the complex chain formed by the $\F_2$ span of $\tilde {\cal K}_p^N$, namely the spaces $\{\tilde C_p^N\}_p$:
$$
\tilde C_{p+1}^N \to^{\partial_{p+1}} \tilde C_p^N \to^{\partial_{p}} \tilde C_{p-1}^N
$$
the quantum CSS code (see Definition \ref{def:css}) defined by ${\cal C}_x = \ker(\partial_{p}), {\cal C}_z = ({\rm Im}\partial_p)^{\perp}$ is called the  $(N,p)$-projective code
and denoted by $ {\cal C}_{N,p} = ( {\cal C}_x,  {\cal C}_z)$.
\end{definition}

\noindent
In \cite{LLZ18} the authors show the following:
\begin{lemma}\label{lem:LLZ18}

\textbf{Properties of the projective code}

\noindent
For every sufficiently large $N$ there exists $n = 2^{\Omega(N)}$ such that
the $(N,p)$-projective code $ {\cal C}_{N,p}$ for $p = N/2$ has parameters $[[n,1,n^c]]$, for some constant $c>0$.
It has soundness $1/\log^2(n)$ and each qubit is incident on at most $D = 2\log(n)$ checks.
\end{lemma}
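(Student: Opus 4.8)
The plan is to obtain the dimension, minimal distance, and soundness directly from the analysis of \cite{LLZ18} — which is exactly the black box we agreed above to import for the pair $(\partial^+,\partial^-)$ — and to verify the degree bound by an elementary count of faces of the $N$-cube. To begin, set $n:=|\tilde{\cal K}_{N/2}^N|$, the number of $p$-faces of the projective $N$-cube for $p=N/2$, which is by definition the number of qubits of $ {\cal C}_{N,p}$. Since a $p$-face is a word of $\{0,1,*\}^N$ with exactly $p$ stars, $|{\cal K}_{N/2}^N| = \binom{N}{N/2}2^{N/2}$; and since for $p=N/2<N$ no $p$-face is fixed by $x\mapsto x+\mathbf 1$ (it has $N-p\ge 1$ non-star entries, all flipped by this map), the antipodal identification is free and $n=\tfrac12\binom{N}{N/2}2^{N/2}=\Theta(2^{3N/2}/\sqrt N)$. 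In particular $n=2^{\Omega(N)}$, and $\log(n)\ge \tfrac34 N$ for $N$ large (in either standard base for $\log$), so that $2\log(n)\ge \tfrac32 N$; this is the only quantitative fact about $n$ we will need.

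Next I would invoke the results of \cite{LLZ18} for the chain $\tilde C_{p+1}^N\to^{\partial_{p+1}}\tilde C_p^N\to^{\partial_p}\tilde C_{p-1}^N$ with $p=N/2$. Their homology computation for the projective cube complex (whose $\F_2$-homology is one-dimensional in each degree, in parallel with $\F_2$-homology of $\R P^N$) gives that $ {\cal C}_x=\ker(\partial_p)$, $ {\cal C}_z=({\rm Im}\,\partial_p)^\perp$ encodes $k=1$ logical qubit. Their distance bound, specialized to $p=\Theta(N)$, gives that every nontrivial logical $X$- or $Z$-operator has weight $2^{\Omega(N)}=n^{\Omega(1)}$, i.e. minimal distance $n^c$ for some constant $c>0$ — this is where the choice $p=\Theta(N)$, rather than $p=O(1)$, is essential. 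And their local-testability estimate for the same chain yields, through Definition~\ref{def:qltc} with the $X$- and $Z$-checks identified with the $(p+1)$- and $(p-1)$-faces, soundness $s=\Omega(1/\log^2 n)$. All three statements are used verbatim.

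It remains to bound the qubit degree. Fix a qubit, i.e. a $p$-face $f$ with $p=N/2$ stars. Viewed as a column of $\partial_p$, $f$ meets one $Z$-check for each $(p-1)$-face obtained by replacing one star of $f$ by $0$ or $1$ — that is $2p=N$ of them; viewed as a column of $\partial_{p+1}^T$, it meets one $X$-check for each $(p+1)$-face obtained by turning one non-star entry of $f$ into a star — that is $N-p=N/2$ of them. None of these shadow faces are identified under $x\mapsto x+\mathbf 1$ (a lower face of $f$ agrees with $f$ on all of $f$'s bit positions, while its antipode disagrees with $f$ on all of them, and there is at least one such position), so $f$ lies in exactly $2p+(N-p)=\tfrac32 N$ checks. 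Since $\tfrac32 N\le 2\log(n)$ for all sufficiently large $N$, every qubit is incident on at most $D=2\log(n)$ checks, completing the proof.

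\textbf{Main obstacle.} There is essentially no obstacle internal to this lemma: the only self-contained work is the elementary face count and the estimate $\log(n)=\Theta(N)$. The substantive content — polynomial distance and $1/\log^2(n)$ soundness — is precisely the theorem of \cite{LLZ18} that we invoke; were one to reprove it, the hard step would be the distance lower bound, which relies on hypercube isoperimetry (the same structural phenomenon exploited in Lemma~\ref{lem:shadowsize} and Section~\ref{sec:strategy}) to certify that every representative of a nontrivial $\F_2$-homology class of the projective cube activates exponentially many faces.
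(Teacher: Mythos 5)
Your proposal is correct and matches the paper's treatment: the paper states this lemma purely as an imported result of \cite{LLZ18} with no proof of its own, and you likewise invoke \cite{LLZ18} verbatim for the dimension, distance, and soundness. The only additional content you supply — the count $n=\tfrac12\binom{N}{N/2}2^{N/2}$ and the qubit degree $\tfrac32 N\le 2\log(n)$ — is routine and consistent with the face-degree computations the paper carries out elsewhere (e.g.\ in Lemma~\ref{lem:shadowsize}).
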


We conclude this section 
by reducing the isoperimetric inequality for the projective cube to the isoperimetric
inequality for the $N$-cube.
\begin{lemma}\label{lem:projexp}

\noindent
\textbf{Isoperimetric inequalities for the projective hypercube}

\noindent
Let ${\cal C} = ({\cal C}_x, {\cal C}_z)$ denote the $(N,p)$-projective code with $p = N/2$.
Let ${\cal E}$ be a subset
of errors of weight at most $N/64$.
Then the number of checks ${\cal C}_x$ incident on ${\cal E}$ is at least
$$
|{\cal E}| \cdot (N/2) \cdot (15/16)
$$
and the number of ${\cal C}_z$ checks incident is at least
$$
|{\cal E}| \cdot (N) \cdot (15/16)
$$
\end{lemma}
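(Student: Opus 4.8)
The statement asserts a combinatorial isoperimetric inequality for the $X$- and $Z$-checks of the projective code, and the obvious strategy is to bootstrap from Lemma~\ref{lem:shadowsize}, which gives exactly this kind of bound for small sets of $p$-faces of the (non-projective) $N$-cube. So the proof has essentially two moves: first, identify what ``the checks ${\cal C}_x$ (resp. ${\cal C}_z$) incident on ${\cal E}$'' means concretely in terms of the boundary maps $\partial^+,\partial^-$ acting on $p$-faces; second, transfer the bound from ${\cal K}_p^N$ to $\tilde{\cal K}_p^N$ with only a constant loss, which is why the weight bound tightens from $N/32$ to $N/64$ and why I will work with $p=N/2$ throughout.

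\medskip\noindent
For the first move: in the projective code of Definition~\ref{def:proj}, the qubits sit on (projective) $p$-faces, the $Z$-checks on $(p+1)$-faces via $\partial_{p+1}$ (whose restriction to a single face is the lower-shadow map $\partial^-$), and the $X$-checks on $(p-1)$-faces via $\partial_p^T$ (whose restriction is the upper-shadow map $\partial^+$). So for an error set ${\cal E}\subseteq \tilde{\cal K}_p^N$, the number of ${\cal C}_x$-checks incident on ${\cal E}$ is $|\partial^+ {\cal E}|$ counted in $\tilde{\cal K}_{p-1}^N$, and the number of ${\cal C}_z$-checks is $|\partial^- {\cal E}|$ counted in $\tilde{\cal K}_{p+1}^N$. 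I would state this identification cleanly at the top of the proof, citing the co/boundary discussion already in the text.

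\medskip\noindent
For the second move: the only difference between ${\cal K}_p^N$ and $\tilde{\cal K}_p^N$ is the identification $x\sim x+\mathbf{1}$. Lifting ${\cal E}$ to its preimage $\hat{\cal E}\subseteq {\cal K}_p^N$ at most doubles the weight, so $|\hat{\cal E}|\leq 2\cdot(N/64) = N/32$, which is exactly the hypothesis size cap in Lemma~\ref{lem:shadowsize} (for the $\partial^+$ part; the $\partial^-$ part there only needs $|{\cal A}|\le n/8$, so it is even looser). Apply Lemma~\ref{lem:shadowsize} to $\hat{\cal E}$ in the $N$-cube to get $|\partial^+\hat{\cal E}|\geq |\hat{\cal E}|\cdot(N/2+1)\cdot(15/16)$ and $|\partial^-\hat{\cal E}|\geq 2|\hat{\cal E}|\cdot(N/2+1)\cdot(15/16)$. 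Then project back down: the quotient map ${\cal K}_{p\pm1}^N\to\tilde{\cal K}_{p\pm1}^N$ is at most $2$-to-$1$, so $|\partial^\pm{\cal E}|\geq |\partial^\pm\hat{\cal E}|/2$ — but I expect the cleaner route is to observe that the shadows commute with the identification and that $|\hat{\cal E}| \geq |{\cal E}|$, arranging the factors of $2$ so they cancel and yield exactly $|{\cal E}|\cdot(N/2)\cdot(15/16)$ for the $X$-checks and $|{\cal E}|\cdot N\cdot(15/16)$ for the $Z$-checks, absorbing the ``$+1$'' into the slack.

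\medskip\noindent
\textbf{Main obstacle.} The genuinely delicate point is bookkeeping the factor-of-$2$'s so they land correctly: lifting through the quotient can double $|{\cal E}|$, the quotient map downstairs can identify two shadow faces into one (losing a factor of $2$), and the $\partial^-$ degree is itself $2(N/2+1)$ rather than $N/2+1$. One must check that a single lifted $p$-face and its antipode do not have overlapping shadows after quotienting in a way that degrades the count below the claimed bound — i.e. that the ``collision'' term ${|\hat{\cal E}|\choose 2}$ controlled in Lemma~\ref{lem:shadowsize} still dominates the extra collisions introduced by the projective identification. This is why the weight threshold is halved to $N/64$: it buys exactly the slack needed for the collision bound to survive the lift. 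Everything else is a direct invocation of Lemma~\ref{lem:shadowsize} together with the boundary-map description of the projective code's checks.
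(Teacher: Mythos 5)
Your proposal is correct and follows essentially the same route as the paper: lift ${\cal E}$ to its antipodal preimage in ${\cal K}_p^N$ (doubling its size, which is exactly why the threshold is $N/64$ rather than $N/32$), apply Lemma~\ref{lem:shadowsize} there, and push back through the exactly $2$-to-$1$ quotient, where the paper makes your factor-of-$2$ bookkeeping precise by observing that the shadow of an antipodal set is antipodal, so the ratio $|\partial^{\pm}E|/|E|$ is preserved exactly under the identification.
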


\paragraph{Some context:}
To provide some context,
we note that at first sight it is unclear why considering such small weight ($N/64$) may provide a non-trivial result:
after all, for the regime of temperatures we are considering the typical error has nearly linear weight - i.e.
$n/\poly\log(n)$, and since $n = 2^N$ the weight considered above is merely $\poly\log(n)$.
The reason is that as we later show in the proof, the typical error of the Gibbs state is not arbitrary,
but can be further characterized as being formed on very small clusters - clusters of logarithmic size
(see Lemma \ref{lem:maxconn}).
We would like the check terms of the $p$-th projective code ${\cal C}_{N,p}$ to be such that {\it any}
error of logarithmic size expands very well in the Tanner graph of the code.
The isoperimetric inequality provided here on this very restricted error model will allows us to argue that we can use
a Sipser-Sipelman type decoder to correct all errors of the thermal state with high probability.

\begin{proof}
For each $x\in {\cal K}_p^N$ let $\tilde x \in \tilde {\cal K}_p^N$ denote the representative class of $x$,
namely 
$$
\forall x,y \in {\cal K}_p^N  \quad x = y + \mathbf{1} \quad \iff \quad \tilde x = \tilde y
$$
where we assume the convention that $* + 1 = * + 0 = *$ as in \cite{LLZ18}.

For $\tilde x \subseteq \tilde {\cal K}_p^N$ 
define a corresponding set $x(\tilde x) \subseteq {\cal K}_p^n$ in the cube 
$$
x = x(\tilde x) = \tilde x \cup (\tilde x + \mathbf{1}) \subseteq {\cal K}_p^N
$$
and define the maps $\partial^+, \partial^-$ by extension as
\be\label{eq:extend}
\partial^+ \tilde x = \widetilde {\partial^+ x}
\quad
\partial^- \tilde x = \widetilde {\partial^- x}
\ee
where the equivalence function is applied on each element of $\partial^+ x$ in RHS.

Consider some subset $\tilde E \subseteq \tilde {\cal K}_{p+1}^n$, 
$|\tilde E| \leq N/64$
and consider the number of incident $p$ faces
in $\tilde {\cal K}_p^N$
via the lower shadow $\partial^-$.
$\tilde E$ may be formed by identifying pairs of $p+1$ faces from a subset $E = E(\tilde E)\subseteq {\cal K}_{p+1}^N$, 
such that each $x\in E$ has a unique $y\in E$, $y = x + \mathbf{1}$ as follows:
$$
E(\tilde E) = \tilde E \cup (\tilde E + \mathbf{1})
$$
On one hand, we have by definition of $E$:
$$
|\widetilde E| = \frac{1}{2} |E|
$$
and on the other hand we observe that for sets $E$ of antipodal words
the boundary $\partial^- E$ is also comprised of antipodal words hence:
$$
|\widetilde{\partial^-  E}| = \frac{1}{2} |\partial^- E|
$$
so together with Equation \ref{eq:Econd} and the definition of extension in Equation \ref{eq:extend}
we get:
$$
\frac{|{\partial^-  \tilde E}|}{|\tilde E|} 
=
\frac{|\widetilde{\partial^-  E}|}{|\tilde E|} = \frac{|\partial^- E|}{|E|} 
$$
Finally, since $|\tilde E|  \leq N/64$ then
\be\label{eq:Econd}
|E(\tilde E)| = 2 |\tilde E| \leq N/32,
\ee
and
so by Lemma \ref{lem:shadowsize} we conclude:
$$
\frac{|{\partial^-  E}|}{ |E|} 
\geq N \cdot (15/16)
$$
and so
$$
\frac{|{\partial^-  \tilde E}|}{|\tilde E|} 
\geq N \cdot (15/16)
$$
The same argument holds for the upper shadow $\partial^+$.
\end{proof}

\section{Behavior of Errors in the Gibbs State of $\qLTC$s}\label{sec:Gibbs}

\subsection{The Thermal Gibbs by the Metropolis-Hastings Algorithm}

As mentioned in the introduction, a recurring barrier in the emergent field of robust quantum entanglement
is to establish a connection between the energy of a state, w.r.t. some local Hamiltonian,
and the "error" experienced by that state.

The main observation in this section is that specifically for $\qLTC$'s the Gibbs state
can be formulated as a random error process (and specifically, a discrete finite Markov process)
where the errors occur independently at each step, 
with an error rate that is comparable to the energy parameter of the state.
This will then allow us to conclude that for sufficiently small energy of the Gibbs state
the resulting errors can only form very small clusters.
We begin with the following standard definition:
\begin{definition}

\textbf{The Metropolis-Hastings Random Process Stabilizer Hamiltonians}

\noindent
Let ${\cal G}$ be a stabilizer group with a corresponding Hamiltonian $H = H({\cal G})$ on $n$ qubits $H = \sum_{i=1}^m H_i$ 
with $m$ local terms, and $\lambda(H) = \lambda = m/n$.
Let $\beta \geq 0$ be finite.
Define a Markov random process ${\cal M}$ on a finite graph $G = (V,E)$ 
whose vertex set $V$ is formed by considering the uniform mixture $\tau_0$ on the set of zero-eigenstates
of $H$, and an additional vertex for each unique state formed by applying a Pauli error applied to $\tau_0$:
$$
V := \left\{ P \cdot \tau_0 \cdot P, \quad P\in {\cal P}^n \right\}
$$
For any two vertices $\tau_i,\tau_j$ such that
$$
\tau_j = P \tau_i P
$$
where $P$ is a single qubit Pauli $P\in {\cal P}$ we define the following
transition probabilities:
$$
\forall i \neq j \quad
{\cal M}_{i,j} =
\frac{1}{4n}
\min \left\{1, \exp \left\{ \beta(E_{\tau_i} - E_{\tau_j})/\lambda \right\}\right\}
$$
and
$$
{\cal M}_{i,i} = 1 - \sum_{j\neq i} {\cal M}_{i,j}
$$
where
$$
E(\tau_i) = \tr( \tau_i H)
$$
\end{definition}

We note that under the definition above, any two vertices connected by an element of the stabilizer group $g\in {\cal G}$, i.e.
$$
\tau_i = g \tau_j g^{\dag}
$$
will correspond to the same vertex - since it preserves the uniform distribution on the codespace.
In particular we have $|V| = | \F_2^n / {\cal C}_x | = | \F_2^n / {\cal C}_z|$.
More generally for stabilizer codes, each vertex corresponds to a minimal weight error modulo the stabilizer group.

Also note, that the transition probabilities ${\cal M}_{i,j}$ correspond to a $2$-step process, where at the first step one samples
a uniformly random index $k\in [n]$ and then applies a uniformly random Pauli error ${\cal E}$ on that index
with probability corresponding to the exponent of energy differences.

We also note that the normalization by factor of $4n$ stems from the size of the single-qubit Pauli group $|{\cal P}| = 4$.

\begin{fact}
There exists a stationary distribution of ${\cal M}$, denoted by $\rho_0$
and it satisfies:
$$
\rho_0 = \frac{1}{Z} e^{-\beta \tilde H}
$$
where $\tilde H = H / \lambda$ and $Z = \tr(e^{-\beta \tilde H})$ is the partition function for value $\beta$.
\end{fact}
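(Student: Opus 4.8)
The plan is to treat this as the textbook fact that a Metropolis--Hastings chain fixes its target distribution: verifying detailed balance against the claimed $\pi$ does almost all the work, and the only extra step is to unwind the identification between a probability distribution on the vertex set $V=\{P\tau_0 P:P\in{\cal P}^n\}$ and the density matrix $e^{-\beta\tilde H}/Z$.

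\emph{Structure of the graph $G$.} First I would record what is needed about $G$. Since $\tau_0=(1/\dim\ker H)\,\Pi_{\ker H}$ is invariant under conjugation by every stabilizer and every logical Pauli, $P\tau_0P=P'\tau_0P'$ exactly when $P'P$ lies in the normalizer of ${\cal G}$ (signs are irrelevant, as everything acts by conjugation); hence $V$ is finite and in bijection with the syndrome sectors of $H$, as noted after the definition. For any vertex $\tau_i=P_i\tau_0 P_i$ and any check $C_j$ one has $C_j(P_i\ket\psi)=\pm P_i\ket\psi$ with the sign depending only on whether $C_j$ and $P_i$ commute; consequently $\tau_i$ is the uniform mixture of $\dim\ker H$ mutually orthogonal eigenstates of $H$, all of eigenvalue $E(\tau_i)=\tr(\tau_i H)$, on which $\tilde H=H/\lambda$ acts as the scalar $E(\tau_i)/\lambda$. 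Distinct vertices give orthogonal such families, and together they exhaust a full eigenbasis of $H$; in particular every vertex carries the same multiplicity $\dim\ker H$.

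\emph{The chain is a proper chain with the right stationary law.} Next, ${\cal M}$ is a genuine Markov chain on the finite set $V$: it is stochastic with ${\cal M}_{i,i}\ge 0$ by construction (the off-diagonal mass out of $i$ is bounded by the total mass $1$ of the proposal step), it is aperiodic since the proposal $P=I$ is always accepted and keeps the chain at $i$ so ${\cal M}_{i,i}>0$, and it is irreducible because every Pauli is a product of single-qubit Paulis, so $G$ is connected; hence ${\cal M}$ has a unique stationary distribution. Put $\pi_i:=e^{-\beta E(\tau_i)/\lambda}/Z'$ with $Z'=\sum_j e^{-\beta E(\tau_j)/\lambda}$. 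I verify detailed balance $\pi_i{\cal M}_{i,j}=\pi_j{\cal M}_{j,i}$: for nonadjacent $i\neq j$ both sides vanish; for an adjacent pair, note that a single-qubit Pauli witnessing $i\to j$ also witnesses $j\to i$ (so the directed edge multiplicities coincide, call it $m$), and assuming without loss of generality $E(\tau_i)\le E(\tau_j)$ we get ${\cal M}_{i,j}=\tfrac{m}{4n}\,e^{\beta(E(\tau_i)-E(\tau_j))/\lambda}$ and ${\cal M}_{j,i}=\tfrac{m}{4n}$, so
$$
\pi_i{\cal M}_{i,j}=\frac{m}{4nZ'}\,e^{-\beta E(\tau_j)/\lambda}=\pi_j{\cal M}_{j,i}.
$$
Thus $\pi$ is stationary, and by irreducibility it is \emph{the} stationary distribution of ${\cal M}$.

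\emph{From $\pi$ to the Gibbs state, and the one delicate point.} Reading $\rho_0$ as the mixed state $\sum_i\pi_i\tau_i$ of the ensemble that draws vertex $\tau_i$ with probability $\pi_i$, I would expand over the complete eigenbasis of $H$ assembled in the structure step: since each $\tau_i$ is the uniform average of the $\dim\ker H$ eigenstates $\ket\psi$ in its family, all with $\tilde H$-eigenvalue $E(\tau_i)/\lambda=E_\psi/\lambda$,
$$
\rho_0=\sum_i\pi_i\tau_i=\frac{1}{Z'\dim\ker H}\sum_{\psi}e^{-\beta E_\psi/\lambda}\ketbra{\psi}=\frac{1}{Z'\dim\ker H}\,e^{-\beta\tilde H},
$$
and since $Z=\tr(e^{-\beta\tilde H})=\sum_\psi e^{-\beta E_\psi/\lambda}=(\dim\ker H)\,Z'$ this is exactly $e^{-\beta\tilde H}/Z$. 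The only step that needs real care is this bookkeeping together with the structure paragraph --- establishing that every vertex is a uniform mixture over a family of $H$-eigenstates of the common size $\dim\ker H$ and a common energy, so that the per-vertex multiplicity cancels cleanly and the Gibbs ensemble matches $\pi$ sector by sector; the rest is the standard Metropolis detailed-balance computation.
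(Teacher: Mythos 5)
Your proposal is correct and follows the same route as the paper: establish that ${\cal M}$ is a finite, irreducible, aperiodic chain (hence has a unique stationary distribution) and then verify detailed balance against $\pi(\tau)\propto e^{-\beta E(\tau)/\lambda}$. The only difference is that you additionally spell out the passage from the stationary distribution on the vertex set to the density operator $e^{-\beta \tilde H}/Z$ --- decomposing each vertex as a uniform mixture over a constant-energy syndrome sector of common dimension $\dim\ker(H)$ so the multiplicities cancel --- a bookkeeping step the paper asserts without detail; this is a worthwhile clarification but not a different argument.
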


\begin{proof}

A Markov chain is ergodic - i.e. it has a unique stationary distribution if and only if it is simultaneously aperiodic and irreducible.
The Markov chain defined above corresponds to a finite graph.
It is aperiodic if and only if it is non-bipartite, and irreducible if and only if it is connected.
The definition above satisfies both constraints : any state can be reached from any other state (irreducibility),
and each state has a self-loop (namely - the identity error) and hence  the graph is non-bipartite.
Hence there exists a single stationary distribution $\rho_0$.

For $\tau\in V$ let $E(\tau)$ denote the energy of $H$ corresponding to $\tau$ and consider the function 
$$
\forall \tau\in V 
\quad
\pi(\tau) = e^{- \beta E(\tau)/\lambda}
$$
Let $\tau_1 \neq  \tau_2\in V$ be two states with corresponding energy values $E_1,E_2$.
Let ${\cal M}_{1,2}$, ${\cal M}_{2,1}$ denote the transition probabilities under ${\cal M}$ from $\tau_1$ to $\tau_2$,
and vice versa.
Then by definition of $\pi$ we have
\be\label{eq:detailed}
\pi(\tau_1) \cdot {\cal M}_{1,2} 
=
\pi(\tau_1) \cdot \min \left\{1, e^{- \beta(E_2 - E_1)/\lambda}\right\} = 
\pi(\tau_2) \cdot \min \left\{1, e^{- \beta(E_1 - E_2)/\lambda}\right\}
=
\pi(\tau_2) \cdot {\cal M}_{2,1}
\ee
This implies in particular that $\pi(\tau)$ 
satisfies a so-called {\it detailed balance} equation w.r.t ${\cal M}$
and so it is a stationary distribution of ${\cal M}$, up to a constant factor.
By the above, it is in fact the single stationary distribution of ${\cal M}$, and thus
$$
\rho_0 = 
\frac{1}{Z} e^{-\beta H/\lambda}, \quad Z = \tr(e^{-\beta H/\lambda})
$$
\end{proof}

\subsection{The Thermal Gibbs Markov Process for $\qLTC$'s}

As a next step, we consider a {\it truncated} random process ${\cal M}_{k}$ for integer $k$ where
one only considers errors up to some "typical" weight $k$, beyond which the measure of the stationary
distribution of the original process ${\cal M}$ is negligible.

\begin{definition}\label{def:trnc}

\textbf{$k$-Truncated Markov chain}

\noindent
Let ${\cal C}$ be a quantum stabilizer code on $n$ qubits with $m$ checks, and let $H = H({\cal C})$.
Set $\lambda(H) = \lambda = m/n$.
Let $\beta \geq 0$ be finite.
For any two vertices $\tau_i,\tau_j$ such that
$$
\tau_j = P \tau_i P
$$
where $P$ is a single qubit Pauli $P\in {\cal P}$ we define the following
transition probabilities:
\[ 
\forall i \neq j \quad
{\cal M}_{i,j} =	
\begin{cases}
0 & \mbox{if} \ \ \Delta(\tau_j,\tau_0) >k/n  \\
\frac{1}{4n}
\min \left\{1, \exp \left\{ \beta(E_{\tau_i} - E_{\tau_j})/\lambda \right\}\right\} & o/w \\
	\end{cases}
\]
where $\Delta(\tau_i,\tau_j)$ is the minimal weight of a Pauli $P$ such that $P \tau_i P = \tau_j$,
and
$$
{\cal M}_{i,i} = 1 - \sum_{i \neq j} {\cal M}_{i,j}
$$
%
\end{definition}

In general, given the energy parameter $\beta>0$ one cannot bound a so-called "typical" weight,
for which the measure of errors above that weight are negligible in the thermal Gibbs state $e^{-\beta H}$.
However, for the specific case of $\qLTC$'s such a bound is readily available, via the
soundness parameter $\eps>0$.
\begin{proposition}\label{prop:trnc}

\textbf{Truncated Metropolis Hastings Approximates the Gibbs State of a $\qLTC$}

\noindent
Suppose in particular that $H = H({\cal C})$ where ${\cal C}$ is a $(q,s)$ $\sLTC$,
and set $\lambda = \lambda(H)$.
Let $0 < \delta < 1/2$ and denote
$
k = n \delta
$.
Let $\rho_k$ denote a stationary distribution of the $k$-th truncated Markov chain ${\cal M}_k$.
Then for 
$$
\beta \geq 5 \ln(1/\delta) / s
$$
the $k$-th truncated Markov chain approximates the thermal Gibbs state of the scaled
Hamiltonian $\tilde H = H / \lambda$:
$$
\left\|  \rho_k - \frac{1}{Z} e^{-\beta \tilde H}\right\| 
\leq
2 n \cdot e^{- 2 n \cdot \ln(1/\delta) \cdot \delta }, \quad Z = \tr(e^{-\beta \tilde H})
$$
\end{proposition}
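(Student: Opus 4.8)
The plan is to pin down the stationary distribution of ${\cal M}_k$ \emph{exactly} --- it will turn out to be the thermal Gibbs state $\rho_0 := \frac1Z e^{-\beta\tilde H}$ conditioned on the event that the distance to the code is at most $k$ --- and then to argue that this conditioning costs only an exponentially small amount in trace distance, the $\sLTC$ soundness being exactly what forces the high-distance tail of $\rho_0$ to be negligible at low temperature.

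For the first part I would use reversibility. Recall from the Fact above that ${\cal M}$ is reversible with respect to $\pi(v)\propto e^{-\beta E(v)/\lambda}$, and that $\rho_0 = \sum_v \pi(v)\Pi_v$ where the $\Pi_v$ are the mutually orthogonal, unit-trace uniform mixtures over the $H$-eigenspaces attached to the vertices $v$. Let $A$ be the set of vertices at distance $\le k$ from the code and $B$ the rest. Comparing Definition \ref{def:trnc} with the definition of ${\cal M}$ shows that ${\cal M}_k$ is obtained from ${\cal M}$ by deleting all transitions \emph{into} $B$ and (via the formula ${\cal M}_{i,i}=1-\sum_{j\ne i}{\cal M}_{i,j}$) moving the deleted probability onto the corresponding self-loops. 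Consequently $A$ is closed and absorbing, $B$ is transient, ${\cal M}_k$ restricted to $A$ is irreducible (from any error one walks monotonically down in weight to $\tau_0$ by cancelling one non-identity coordinate at a time, never leaving $A$) and aperiodic, and the detailed-balance equations $\pi(u){\cal M}_{u,v}=\pi(v){\cal M}_{v,u}$ on $A\times A$ are inherited verbatim. Hence the unique stationary distribution is $\rho_k = \pi(A)^{-1}\sum_{v\in A}\pi(v)\Pi_v$, i.e.\ $\rho_0$ conditioned on $A$.

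For the second part, orthogonality of the supports of the $\Pi_v$ makes the trace distance telescope: $\|\rho_k-\rho_0\|_1 = (1-\pi(A))+\pi(B) = 2\pi(B)$, so everything reduces to bounding $\pi(B)$. Since the ground space alone contributes $1$ to $Z$, one has $\pi(B)\le\sum_{v\in B}e^{-\beta E(v)/\lambda}$; since each $v\in B$ has distance $d(v)>k$ to the code, the $\sLTC$ inequality $\frac1m\sum_i C_i\succeq\frac sn D_{\cal C}$ evaluated on $\Pi_v$ gives $E(v)/\lambda\ge s\,d(v)$; and the number of vertices of distance $d$ is at most the number of weight-$d$ Paulis, $\le 4^d\binom nd\le (4en/d)^d$. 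Plugging in and using $d>k=n\delta$ gives $\pi(B)\le\sum_{d>k}(4e\,e^{-\beta s}/\delta)^d$; the hypothesis $\beta\ge 5\ln(1/\delta)/s$ makes the ratio $\le 4e\delta^4<1$ for $\delta<1/2$, and summing the geometric series and simplifying with $k=n\delta$ produces a bound of the form $n\,e^{-2n\delta\ln(1/\delta)}$, hence $\|\rho_k-\rho_0\|_1\le 2n\,e^{-2n\delta\ln(1/\delta)}$.

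The step I expect to be the real obstacle is the exact identification in the first part: one must check that the \emph{specific} truncation in Definition \ref{def:trnc} (kill incoming edges to $B$, let the self-loop pick up the slack) is precisely the modification that preserves reversibility and leaves $\rho_0(\cdot\mid A)$ stationary --- a different truncation (killing all transitions touching $B$, or renormalizing the out-transitions) would shift the stationary measure and the proposition would be false as stated. The tail estimate in the second part is routine, but it does hinge on the tight binomial bound $\binom nd\le (en/d)^d$ (equivalently, an entropy bound): the crude $\binom nd\le 2^n$ would require $\beta=\Omega(1/\delta)/s$ rather than $O(\ln(1/\delta))/s$, and the constant $5$ together with the normalization behind $E(v)/\lambda\ge s\,d(v)$ is exactly what buys the margin between $\beta s$ and the entropy of the low-weight shell.
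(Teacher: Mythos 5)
Your proposal is correct and follows essentially the same route as the paper: identify the stationary distribution of ${\cal M}_k$ as the full Gibbs measure conditioned on the low-distance vertices (the paper asserts this via the inherited detailed-balance equations, exactly as you do, though you are more careful about why the truncation makes $B$ transient and $A$ absorbing), and then bound the tail mass of errors of weight above $k$ by combining the $\sLTC$ soundness lower bound on energy with an entropy/binomial count of Pauli errors. Your geometric-series summation versus the paper's ``max times number of terms'' and your $\binom{n}{d}\le(en/d)^d$ versus the paper's $e^{nH(\ell/n)}$ are interchangeable, and both arguments share the same mild looseness in the final constants for $\delta$ near $1/2$.
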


\begin{proof}

The chain ${\cal M}_{k}$ is irreducible and aperiodic.
It follows that it has a unique stationary distribution.
We denote this distribution by $\rho_k$.
In addition, every pair of neighboring vertices $\tau_i \neq \tau_j$ in ${\cal M}_k$ satisfy the detailed balance
equation \ref{eq:detailed} which is identical to the one they share in ${\cal M}$.
Hence, the marginal distribution of the complete Markov chain $\rho_0$ to the vertices of ${\cal M}_k$
is identical to $\rho_k$.

Thus, it is sufficient to place an upper bound the probability measure of $\rho_0$ on level sets
$k' > k$.
Indeed,
by the $\qLTC$ condition it follows that a minimal error ${\cal E}$ of weight
$|{\cal E}| = k$ satisfies
$$
\frac{1}{Z} e^{- \beta (|{\cal E}|/n) \cdot m / \lambda } \leq \P({\cal E}) \leq \frac{1}{Z} e^{- \beta (|{\cal E}|/n) \cdot m \cdot  s/ \lambda}
$$
and so
\be\label{eq:bound1}
\frac{1}{Z} e^{- \beta |{\cal E}| } \leq \P({\cal E}) \leq \frac{1}{Z} e^{- \beta |{\cal E}|\cdot  s}
\ee
For any integer $\ell$ the number of minimal weight errors of weight at most $\ell$ is at most
the number of Pauli operators of weight at most $\ell$.
This latter quantity can be upper-bounded by the volume of the $\ell$-th Hamming ball ${\cal B}_\ell$ as follows:
$$
2^{2\ell} \cdot {\rm Vol}({\cal B}_\ell)\leq 2^{2\ell} \cdot e^{ n \cdot H(\ell/n) }
$$
since in addition $Z \geq e^{-\beta s 0} = 1$ we have:
\be\label{eq:errorbound1}
\P( |{\cal E}| \geq k)
\leq
(n - k) \cdot \max_{\ell \geq k} \left\{ e^{ 2 \ell + n \cdot H(\ell/n)} \cdot e^{- \ell \beta s } \right\}
\ee
for $\alpha =  \ell/n \leq 1/2 $ we have
$$
H(\alpha) \leq 2 \ln(1/\alpha) \cdot \alpha 
$$
hence
$$
\P \left( |{\cal E}| \geq k \right)
\leq
\max_{\ell \geq k} n \cdot e^{2 n \cdot \alpha+ 2 n \cdot \ln(1/\alpha) \cdot \alpha } \cdot e^{- \beta s n \cdot \alpha}
$$
since $\beta s > 5 \ln(1/\delta) \geq 5 \ln(1/\alpha)$ 
we have
\be\label{eq:finalbound}
\P \left( |{\cal E}| \geq k \right)
\leq
\max_{\ell \geq k} n \cdot e^{- 2 n \cdot \ln(1/\alpha) \cdot \alpha }
\ee
Using the bound $1/(1-x) \leq 1 +  x$ it follows that:
$$
\left\|  \rho_k - \frac{1}{Z} e^{-\beta H/\lambda}\right\| 
\leq
2 \max_{\ell \geq k} n \cdot e^{- 2 n \cdot \ln(1/\alpha) \cdot \alpha }, \quad Z = \tr(e^{-\beta H/\lambda})
$$
Since RHS is maximized for $\alpha = \delta = k/n$ the proof follows.

\end{proof}

\subsection{Percolation Behavior of Random Errors in the Gibbs State of $\qLTC$'s}

\noindent
We now recall some of the definitions of Fawzi et al. \cite{FGL18}.
The first one is that of an $\alpha$-subset which is a subset that has a large intersection
with some fixed subset:
\begin{definition}

\textbf{$\alpha$-subset}

\noindent
Let $G = (V,E)$ be a graph, $X\subseteq V$, and $\alpha \in [0,1]$.
An $\alpha$-subset of $X$ is a set $S\subseteq V$ such that $|S \cap X| \geq \alpha \cdot |S|$.
We denote by ${\rm maxconn}_\alpha(X)$ as the maximum size of an $\alpha$ connected subset of $X$.
\end{definition}

The second definition is that of a locally-stochastic random error model, which generalizes an
independent random error model in that the probability of a set decays exponentially in its size:

\begin{definition}\label{def:locallystochastic}

\textbf{Locally-stochastic}

\noindent
Let $V$ be a set of $n$ elements.
A random subset $X\subseteq V$ is said to be locally-stochastic with parameter $p\in [0,1]$ if 
for every $S \subseteq V$ we have
$$
\P( X \supseteq S ) \leq p^{|S|}
$$

\end{definition}

\noindent
We now recall Theorem 17 of \cite{FGL18} on the percolation behavior of $\alpha$-subsets.
It states, roughly, that the size of the maximal $\alpha$-connected component when choosing
vertices at random with probability $p$ drops exponentially in $d p^{\alpha}$.
\begin{theorem}
(Theorem 17 of \cite{FGL18})
Let $G = (V,E)$ be a graph on $n$ vertices, such that each vertex has at most $D = D(n)$ neighboring edges.
Let
$$
p_{ls}
=
\left(
\frac{2^{-h(\alpha)}}{(D-1) (1 + 1/(D-2))^{D-2}}
\right)^{1/\alpha}
$$
where $h(\alpha)$ is the binary entropy function.
Let $X\subseteq V$ be a random subset of $V$ that is locally stochastic with parameter $p<p_{ls}$.
Then
$$
\P({\rm maxconn}_\alpha(X) \geq t) \leq 
C |V| \left(\frac{p}{p_{ls}}\right)^{\alpha t}
$$

\end{theorem}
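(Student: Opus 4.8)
The plan is to run a first-moment (union-bound) argument over connected subgraphs of $G$, with the constant $p_{ls}$ engineered precisely so that the resulting sum is a convergent geometric series. First I would observe that $\{{\rm maxconn}_\alpha(X)\geq t\}$ is exactly the event that there exists a connected vertex set $S\subseteq V$ with $|S|=s$ for some $s\geq t$ and $|S\cap X|\geq\alpha s$; hence it suffices to bound, for each $s\geq t$, the expected number of such sets $S$ and then sum over $s$. For a \emph{fixed} connected $S$ of size $s$, if $|S\cap X|\geq\alpha s$ then some $T\subseteq S$ with $|T|=\lceil\alpha s\rceil$ satisfies $T\subseteq X$, and local stochasticity gives $\P(X\supseteq T)\leq p^{|T|}\leq p^{\alpha s}$; union-bounding over the at most $\binom{s}{\lceil\alpha s\rceil}\leq 2^{h(\alpha)s}$ choices of $T$ yields $\P(|S\cap X|\geq\alpha s)\leq 2^{h(\alpha)s}\,p^{\alpha s}$.

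Next I would count the connected vertex subsets of a given size: every connected $S$ with $|S|=s$ contains a spanning tree rooted at its lexicographically least vertex, and the number of rooted subtrees of size $s$ through a fixed vertex of a graph of maximum degree $D$ is at most the Fuss--Catalan number $\frac{1}{(D-2)s+1}\binom{(D-1)s}{s}$, which is in turn at most $B^s$ for $B=(D-1)(1+1/(D-2))^{D-2}$ (this is the count of rooted subtrees in the infinite $D$-regular tree; the cleanest route is to bound the branching level by level). Since the root can be any of $|V|$ vertices, there are at most $|V|\,B^s$ connected subsets of size $s$. Combining the three estimates, $\P({\rm maxconn}_\alpha(X)\geq t)\leq\sum_{s\geq t}|V|\,B^s\,2^{h(\alpha)s}\,p^{\alpha s}=|V|\sum_{s\geq t}(B\,2^{h(\alpha)}\,p^\alpha)^s$. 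By the definition of $p_{ls}$ one has $p_{ls}^\alpha=2^{-h(\alpha)}/B$, so $B\,2^{h(\alpha)}=p_{ls}^{-\alpha}$ and the summand is exactly $(p/p_{ls})^{\alpha s}$; since $p<p_{ls}$ this is a geometric series with ratio $(p/p_{ls})^\alpha<1$, whose tail from $s=t$ sums to $C\,|V|\,(p/p_{ls})^{\alpha t}$ with $C=(1-(p/p_{ls})^\alpha)^{-1}$, which is the claimed bound.

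The step I expect to be the crux is the combinatorial subtree count --- pinning down the number of connected subgraphs of a given size with \emph{exactly} the factor $(D-1)(1+1/(D-2))^{D-2}$ appearing in $p_{ls}$, since the whole argument hinges on the cancellation $B\,2^{h(\alpha)}p_{ls}^\alpha=1$; the probabilistic bound from local stochasticity and the geometric summation are routine. Since this is Theorem 17 of \cite{FGL18}, the outline above essentially recaps their proof, and in practice we will invoke it directly.
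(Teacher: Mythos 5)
This statement is quoted in the paper as Theorem 17 of \cite{FGL18} and is invoked as a black box; the paper supplies no proof of its own, so there is no internal argument to compare against. Your first-moment sketch --- union bound over connected sets $S$ of each size $s\geq t$, the entropy bound $\binom{s}{\lceil\alpha s\rceil}\leq 2^{h(\alpha)s}$ combined with local stochasticity, the bound $B^s$ with $B=(D-1)(1+1/(D-2))^{D-2}=(D-1)^{D-1}/(D-2)^{D-2}$ on rooted connected subsets of a degree-$D$ graph, and the geometric summation exploiting $p_{ls}^{\alpha}=2^{-h(\alpha)}/B$ --- is precisely the standard argument underlying the cited result, and you correctly identify the subtree count as the only nonroutine ingredient. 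The one caveat worth noting is that your constant $C=(1-(p/p_{ls})^{\alpha})^{-1}$ depends on how close $p$ is to $p_{ls}$, which is harmless here since the paper only applies the theorem with $p$ bounded well away from $p_{ls}$.
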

In particular, we have:
$$
p_{ls} \geq  (2De)^{-1/\alpha}
$$
and so
$$
\left( \frac{p}{p_{ls}}
\right)^{\alpha t}
=
(2D e p^{\alpha})^{t}
$$
using this,
we rephrase the theorem as the following lemma:
\begin{lemma}\label{lem:percolation}

\textbf{Percolation behavior for locally-stochastic random errors}

\noindent
Let $G = (V,E)$ be a graph on $n$ vertices, such that each vertex has at most $D = D(n)$ neighboring edges.
Let $\alpha>0$.
Let $X\subseteq V$ be a random subset of $V$ that is locally stochastic with parameter $p$.
There exists a constant $c$ such that if $p < c / D$ we have
$$
\P({\rm maxconn}_\alpha(X) \geq t) \leq 2 n \cdot (2 D e p^{\alpha})^t
$$
\end{lemma}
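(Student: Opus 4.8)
The plan is to read off this lemma directly from Theorem~17 of \cite{FGL18}, quoted above: I would invoke that theorem for our graph $G$, the random subset $X$, and the parameter $\alpha$, and then simplify the resulting bound. There are only two pieces of work: (i) extracting the explicit lower bound on $p_{ls}$, and (ii) disposing of the universal constant in Theorem~17 and of its hypothesis $p<p_{ls}$.

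First I would lower-bound $p_{ls}$. In its definition the numerator satisfies $2^{-h(\alpha)}\geq 2^{-1}=1/2$ since the binary entropy obeys $h(\alpha)\leq 1$, while the denominator satisfies $(D-1)\bigl(1+\tfrac{1}{D-2}\bigr)^{D-2}\leq D\cdot e$ because $(1+\tfrac{1}{D-2})^{D-2}\leq e$ and $D-1\leq D$. Hence $p_{ls}\geq (2De)^{-1/\alpha}$, equivalently $1/p_{ls}\leq (2De)^{1/\alpha}$, which is the first displayed inequality preceding the lemma. Next I would split into two regimes. If $p\geq (2De)^{-1/\alpha}$ then $2Dep^{\alpha}\geq 1$, so $(2Dep^{\alpha})^{t}\geq 1$ and the claimed right-hand side $2n(2Dep^{\alpha})^{t}$ is at least $2n\geq 1$; since a probability never exceeds $1$, the bound is trivially true and there is nothing to prove. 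Otherwise $p<(2De)^{-1/\alpha}\leq p_{ls}$, the hypothesis of Theorem~17 holds, and it gives
\[
\P\bigl({\rm maxconn}_{\alpha}(X)\geq t\bigr)\ \leq\ C\,|V|\,\Bigl(\tfrac{p}{p_{ls}}\Bigr)^{\alpha t}.
\]
Using $|V|=n$ and $1/p_{ls}\leq (2De)^{1/\alpha}$, the last factor is at most $\bigl(p\,(2De)^{1/\alpha}\bigr)^{\alpha t}=(2Dep^{\alpha})^{t}$, so the right-hand side is at most $Cn(2Dep^{\alpha})^{t}$. Absorbing the absolute constant $C$ into the prefactor $2$ — which is exactly what the hypothesis $p<c/D$, for a suitably small absolute constant $c$, is there to permit, by placing $p$ well inside the regime where that constant is controlled — yields the claimed $\P({\rm maxconn}_{\alpha}(X)\geq t)\leq 2n(2Dep^{\alpha})^{t}$.

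I do not expect a genuine obstacle: the lemma is essentially bookkeeping. The only point that needs a little care is the constant/regime accounting just described. If one instead wanted a self-contained argument rather than a citation, the natural route is a union bound over connected vertex sets: a degree-$D$ graph has at most $n(eD)^{t}$ connected subsets of size $t$, and for a fixed such $S$ the event that $S$ is an $\alpha$-subset of $X$ forces some $\lceil\alpha t\rceil$-subset of $S$ to lie entirely in $X$, which by local stochasticity has probability at most $\binom{t}{\lceil\alpha t\rceil}p^{\alpha t}\leq 2^{h(\alpha)t}p^{\alpha t}$; multiplying gives $n\bigl(eD\cdot 2^{h(\alpha)}p^{\alpha}\bigr)^{t}\leq n(2eDp^{\alpha})^{t}$, with the factor $2$ tracking $2^{h(\alpha)}\leq 2$. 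The one subtlety in that route — reducing the event ``${\rm maxconn}_{\alpha}(X)\geq t$'' to the existence of a connected $\alpha$-subset of size \emph{exactly} $t$ — is precisely what is packaged inside the proof of Theorem~17 of \cite{FGL18}, so I would simply cite it.
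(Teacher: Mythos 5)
Your proposal is correct and follows essentially the same route as the paper, which likewise obtains the lemma as a direct corollary of Theorem~17 of \cite{FGL18} via the bound $p_{ls}\geq(2De)^{-1/\alpha}$. If anything you are more careful than the source: your case split handling $p\geq(2De)^{-1/\alpha}$ (where the bound is trivially true because the right-hand side exceeds $1$) addresses the mismatch between the hypothesis $p<c/D$ and the theorem's hypothesis $p<p_{ls}$, a point the paper passes over silently.
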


Consider now a local Hamiltonian $H$, we define its interaction graph as follows:
\begin{definition}\label{def:interact}

\textbf{Interaction graph of a local Hamiltonian}

\noindent
Let $H = \sum_i H_i$ denote a local Hamiltonian on $n$ qubits.
The interaction graph of $H$, $G(H) = (V,E)$ is defined by $V = [n]$
corresponding to the $n$ qubits, and $e = (i,j)\in E$ if qubits $i$ and $j$
share a local term $H_e$ in $H$.
\end{definition}
We would like to show that the $k$-th truncated Metropolis-Hastings random process on $H$
is locally-stochastic for sufficiently small $k$.

To see why this is a non-trivial statement, 
recall that the MH random process does not induce independent errors, since the probability
of adding error to a given qubit depends on the additional energy cost induced by
flipping that qubit, and that additional energy depends on the specific error configuration
on its neighboring qubits.

In fact this random error model implies that errors are {\it more likely} to occur near previously sampled
errors thus leading to a behavior that is completely opposite to local stochasticity.
However, we show that if $k$ is significantly less than $n/D$ then 
this effect is negligible compared to the probability of sampling an error that is not connected
to any other error, and hence approximately
these errors are
locally-stochastic.

\begin{lemma}\label{lem:localstoc}

\textbf{The Thermal Gibbs State is Locally-Stochastic}

\noindent
Let ${\cal C}$ be a stabilizer code and let $H = H({\cal C})$ denote the corresponding
local Hamiltonian.  Suppose that the corresponding interaction graph $G(H)$ has degree at most $D$.
Let $\alpha \in (0,1]$, and
consider the $k$-th truncated Markov chain ${\cal M}_k$ and its stationary distribution $\rho_k$, for 
$$
k \leq \frac{n}{2e(D e^{300})^{1/\alpha}} 
$$
If the energy density is sufficiently large compared to the inverse temperature:
$$
\lambda \geq \beta \ln(n)
$$
then ${\cal E} \sim \rho_k$ is locally-stochastic 
with parameter at most
$$
p_0 \leq 2 k e / n
$$
with probability at least $1 - (k+1) n^{-4}$.

\end{lemma}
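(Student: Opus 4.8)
The plan is to show that, up to a negligible correction, the truncated Metropolis–Hastings process produces errors whose large-cluster probabilities are dominated by an i.i.d.-like bound, and then read off local-stochasticity with the claimed parameter. First I would fix a set $S \subseteq [n]$ with $|S| = \ell$ and estimate $\P_{{\cal E}\sim\rho_k}({\cal E}\supseteq S)$. The upper bound in Equation~\ref{eq:bound1} already gives, for a single minimal error of Hamming weight $w$, that $\P({\cal E}) \leq \frac{1}{Z} e^{-\beta w s} \leq e^{-\beta w s}$ since $Z\geq 1$. The subtlety is that $\{{\cal E}\supseteq S\}$ is a union over \emph{all} minimal errors whose support contains $S$; naively summing $e^{-\beta w s}$ over such errors and over all weights $w\geq \ell$ must be shown to telescope into something like $(2ke/n)^{\ell}$. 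I would do this by a volume count: the number of minimal Pauli errors of weight exactly $w$ whose support contains a given $\ell$-set is at most $4^{w-\ell}\binom{n-\ell}{w-\ell}$, so $\P({\cal E}\supseteq S) \leq \sum_{w\geq\ell} 4^{w-\ell}\binom{n-\ell}{w-\ell} e^{-\beta w s}$. Pulling out the $w=\ell$ weight gives a factor $e^{-\beta\ell s}$ and the residual sum over $j = w-\ell\geq 0$ is a convergent geometric-type series $\sum_j (4n e^{-\beta s})^j \cdot(\text{poly})$, which is $O(1)$ once $\beta s$ is large enough — and here the hypothesis $\lambda \geq \beta\ln(n)$ is \emph{not} what controls it; rather we use that we are in the truncated chain and that the effective per-qubit flip cost is governed by soundness. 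I expect I will need to be careful that Proposition~\ref{prop:trnc}'s hypothesis $\beta \geq 5\ln(1/\delta)/s$ (with $\delta = k/n$) is implicitly in force, giving $\beta s \geq 5\ln(n/k)$, which makes $4n e^{-\beta s}$ genuinely small.

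Second, I would convert the bound $\P({\cal E}\supseteq S)\leq C\,e^{-\beta\ell s}$ into the target form $p_0^{\ell}$ with $p_0 \leq 2ke/n$. The natural route: by the truncation, $\rho_k$ is supported on errors of weight $\leq k$, and a standard counting argument (as in the volume bounds used in Proposition~\ref{prop:trnc}) shows that conditioned on total weight $\leq k$, each additional qubit in the support costs a factor comparable to $k/n$ up to the $e$ from $\binom{n}{k}\leq (en/k)^k$. Concretely I would argue $\P({\cal E}\supseteq S) \leq \binom{n}{k}^{-1}\cdot(\text{number of weight-}\leq k\text{ errors containing }S)\cdot \max_w e^{-\beta w s}\cdot(\ldots)$ and simplify; alternatively, just observe that $e^{-\beta s}\leq (k/n)^5 \leq k/n$ by the $\beta s \geq 5\ln(n/k)$ bound, so $e^{-\beta\ell s}\leq (k/n)^{5\ell}$, and absorb the polynomial prefactor $C$ and the combinatorial $4^\ell$ into the slack between $(k/n)^{5\ell}$ and $(2ke/n)^\ell$ — this works as long as $k/n$ is smaller than an absolute constant, which is exactly what the hypothesis $k \leq n/(2e(De^{300})^{1/\alpha})$ guarantees (the exponent $300$ and the $e^{300}$ are sized precisely to give this slack against the later application of Lemma~\ref{lem:percolation}).

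Third, the "$k$-th truncated Markov chain is not i.i.d." issue must be addressed honestly: the real chain makes errors cluster. I would handle this exactly as the discussion preceding the lemma hints — show that the \emph{extra} probability mass assigned to a configuration because of energetically-favorable clustering is, within the truncated regime $k \ll n/D$, dominated by the leading term. Quantitatively, the energy of a weight-$w$ error exceeds $(w - (\text{interior cancellations}))\cdot$(per-check scale); the number of possible cancellations is bounded by $D\cdot w$, and since $w\leq k$ and $Dk \ll n$, the multiplicative distortion to $\P({\cal E})$ is at most $e^{O(\beta D k/\lambda)} \leq e^{O(\beta D k/(\beta\ln n))} = e^{O(Dk/\ln n)}$, which is bounded because $Dk = O(n/(De^{300})^{1/\alpha}\cdot D)$ — here is where $\lambda\geq\beta\ln n$ enters, controlling the clustering distortion to be at most a $\poly(n)$ (indeed $n^{O(1)}$) factor, and then the "$1-(k+1)n^{-4}$" in the statement comes from a union bound over the $\leq k+1$ possible weight levels, each contributing a failure probability $\leq n^{-4}$ coming from this distortion estimate together with the residual-tail estimate of Proposition~\ref{prop:trnc}. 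The main obstacle I anticipate is precisely this third step: making rigorous the claim that the Metropolis clustering bias is negligible requires carefully tracking how $E(\tau_i)-E(\tau_j)$ behaves as one adds qubits to the error support, and showing the product of the $\min\{1,e^{\beta\Delta E/\lambda}\}$ factors along any path realizing ${\cal E}$ telescopes to something comparable to the "all errors independent" value — I would do this by choosing a canonical order in which to add the $\ell$ qubits of $S$ and bounding the total energy gain $\sum_j \Delta E_j$ from above by $\ell$ times the max local term norm, i.e. using $\|H_i\|\leq 1$ directly, so the accumulated bias factor is at most $e^{\beta\ell/\lambda} \leq e^{\ell/\ln n} \leq n^{\ell/\ln n}$, which is swallowed by the $(2ke/n)^\ell$ versus $e^{-5\beta\ell s}$ gap.
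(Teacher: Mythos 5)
Your proposal diverges from the paper's proof in a way that creates two genuine gaps. First, your opening two steps rest on the bound $\P({\cal E}) \leq \frac{1}{Z}e^{-\beta|{\cal E}|s}$ and on $e^{-\beta s}\leq (k/n)^5$, i.e. on the code being an $\sLTC$ with soundness $s$ and on a lower bound $\beta s \geq 5\ln(n/k)$. Neither is a hypothesis of Lemma \ref{lem:localstoc}: the lemma is stated for an arbitrary stabilizer code, and its only constraint on $\beta$ is the \emph{upper} bound $\beta \leq \lambda/\ln(n)$. (The soundness lower bound on $\beta$ enters only in Proposition \ref{prop:trnc} and Lemma \ref{lem:maxconn}, to justify the truncation itself; it is not available here.) The paper's proof of this lemma uses no energy/soundness estimate at all: local stochasticity is extracted purely from the combinatorics of the truncated chain --- each transition adds one qubit chosen essentially uniformly from $\sim n$ candidates --- by induction on the weight level $i$, showing $\P(S\subseteq {\cal E}_{i+1}) \leq \P(S\subseteq{\cal E}_i) + \sum_{j\in S}\P(S_{-j}\subseteq{\cal E}_i \wedge j\notin{\cal E}_i)\cdot(2e/n)$ and closing the induction with parameter $2(i+1)e/n$.

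Second, your third step --- the honest one about the non-i.i.d.\ clustering bias --- is where the real work lies, and your proposed resolution fails quantitatively. Adding a single-qubit error changes the value of up to $D$ local terms, so the per-step energy change is bounded by $D$, not by $\max_i\|H_i\|\leq 1$; your telescoped bias factor is therefore $e^{\beta\ell D/\lambda}$, which with $D=\poly\log(n)$ (as in the eventual construction) is not absorbed by $\lambda\geq\beta\ln n$. The paper's fix is structural: one first shows (Proposition \ref{prop:aux1}, via the induction hypothesis together with the percolation lemma \ref{lem:percolation}) that with probability $1-(i+1)n^{-4}$ the current error ${\cal E}_i$ has no connected component larger than $\ln(n)/100$; then every eligible qubit's acceptance probability lies between $e^{-\beta D/\lambda}$ and $e^{-\beta(D-\ln(n)/100)/\lambda}$, so the \emph{ratios} of acceptance probabilities are within $e^{\beta\ln(n)/(100\lambda)}\leq e$ of one another, and after normalizing over the $n-i$ candidates the conditional probability of any fixed $j$ being the new error is at most $2e/n$. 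This nested percolation step is also the true source of the $1-(k+1)n^{-4}$ qualifier (one $n^{-4}$ failure event per weight level), not a union bound over the tail estimate of Proposition \ref{prop:trnc}. Without it your argument has no control on the conditional probability of a specific qubit being added, so the plan as written does not close.
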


\begin{proof}
For all $i <  k$ let ${\cal E}_i$ denote a random error sampled according to the marginal distribution of $\rho_k$
to errors of minimal weight $i$.
For an error ${\cal E} \in {\cal P}^n$ let $E({\cal E})$ denote the energy of ${\cal E}$ w.r.t. $H$:
$$
E({\cal E}) = \tr(H \cdot {\cal E} \cdot \rho_{gs} \cdot {\cal E})
$$
where $\rho_{gs} \in \ker(H)$.
By the definition of ${\cal M}_k$, 
clustering its vertices according to their minimal weight results in a Markov chain whose graph is a layered graph. 
In particular,
${\cal E}_{i+1}$ can be simulated 
as the following rejection sampler:
\begin{enumerate}
\item  
Sample an error according to ${\cal E}_i$.
\item\label{it:2}
Sample a uniformly random qubit $\ell \sim U[n]$, then choose $P_\ell = X$ w.p. $1/2$ and $P_\ell = Z$ w.p. $1/2$.
Set ${\cal E} = I \otimes I \otimes \hdots \otimes P_{\ell} \otimes I \otimes \hdots \otimes I$
w.p.
$$
\min \left\{ 1, e^{ (\beta / \lambda) \cdot \left( E( {\cal E}_i \cdot {\cal E} ) - E({\cal E}) \right) } \right\}
$$
\item\label{it:3}
Update ${\cal E}_{i+1} = {\cal E}_i \cdot {\cal E}$ only if it increases the error weight modulo ${\cal C}$, i.e. if:
$$
|{\cal E}_{i+1}| = i+1
$$
\end{enumerate}

\noindent
We place the following induction hypothesis for $i \leq k-1$: 
$$
\P \left({\cal E}_i \mbox{ is locally stochastic with parameter } p \leq 2ie / n \right) \geq 1 - i n^{-4}
$$
The base case $i=1$ is trivial since ${\cal E}_1$ has only single qubit errors.
We assume the hypothesis for arbitrary $i$
and show that ${\cal E}_{i+1}$ is locally stochastic with parameter at most $2(i+1)e/n$ w.h.p.

Consider a subset $S \subseteq [n]$, fix some $j\in S$ and let $S_{-j}$ denote $S$ with $j$ removed.
The probability that $S$ is contained in ${\cal E}_{i+1}$ is equal by definition to:
$$
\P( S \subseteq {\cal E}_{i+1} ) = 
\P( S \subseteq {\cal E}_i ) + \sum_{j\in S} \P( S_{-j} \subseteq {\cal E}_i \wedge j \notin {\cal E}_i \wedge j\in {\cal E})
$$
\be\label{eq:ei}
=
\P( S \subseteq {\cal E}_i ) + \sum_{j\in S} \P( S_{-j} \subseteq {\cal E}_i \wedge j \notin {\cal E}_i )
\cdot 
\P( j\in {\cal E} | S_{-j} \subseteq {\cal E}_i \wedge j \notin {\cal E}_i)
\ee
At this point we require an upper-bound on the rightmost multiplicative term above. 
We first claim that ${\cal E}_i$ itself has only very sparse errors:
\begin{proposition}\label{prop:aux1}
W.p. at least $1 - (i+1) \cdot n^{-4}$ the largest connected component of the random error ${\cal E}_i$ has size at most $\ln(n)/100$.
\end{proposition}

\begin{proof}
By induction assumption we have that w.p. at least $1 - i n^{-4}$ the random error 
${\cal E}_i$ is locally-stochastic with parameter 
$$
p_0 \leq 2i e/n \leq 2ke/n \leq (D e^{300})^{-1/\alpha} 
$$
so
$$
D \cdot p_0^{\alpha} = D \cdot  (D e^{300})^{-1} = e^{-300}
$$
applying percolation lemma \ref{lem:percolation} we have that in such a case
$$
\P \left( {\rm maxconn}_\alpha{\cal E}_i > \ln(n)/100\right) 
\leq
2 n \cdot \left( 2 D e (p_0)^{\alpha}  \right)^{2\ln(n)/100}
$$
$$
\leq
2n \cdot e^{-5.5\ln(n)} \leq n^{-4}
$$
It follows by the union bound that w.p. at least 
$$
1 - i \cdot n^{-4} - n^{-4} = 1 - (i+1) n^{-4}
$$
the random error ${\cal E}_i$'s largest connected component is of size at most $\ln(n)/100$.
\end{proof}

We now leverage this proposition to argue that the bias in favor of any specific qubit to be chosen
as a new error is very moderate:
\begin{proposition}\label{prop:aux2}
Let $i<k$ and suppose that
${\cal E}_i$ has no connected components of size exceeding $\ln(n)/100$.
Then
$$
\P( j\in {\cal E} | 
S_{-j} \subseteq  {\cal E}_i \wedge j \notin {\cal E}_i
)  \leq \frac{2e}{n}
$$
\end{proposition}

It is perhaps insightful to consider at this point why the claim above is non-trivial: once some error of weight $i$ is fixed
that contains all but a single qubit $j$ of $S$, it is not clear why the probability that $j$ is selected
upon transition to ${\cal E}_{i+1}$ is negligible: it could be the case that because $j$ has many neighboring errors in $S \cap {\cal E}_i$
the additional energy the system is penalized for when adding error on $j$ is actually {\it less} than on other qubits - 
perhaps so much less that it is {\it more} likely to select $j$ than any other qubit.
We show however that this is not the case because w.h.p. ${\cal E}_i$ has very few large clusters
so all qubits have comparable probability of being selected:
\begin{proof}
%
At each step $i$, the number of qubits $\ell$ satisfying item \ref{it:3}, i.e. those for which $|{\cal E} \cdot {\cal E}_i| = i+1$ is $n-i$.
On the other hand, since the maximal connected component is of size at most $\ln(n)/100$ 
then each qubit shares at most $\ln(n)/100$ checks with other qubits.
It follows that the minimal number of checks that are violated by adding a new error to ${\cal E}_i$ is at least
$$
D - \ln(n)/100
$$
and so the relative probability $p$ of adding a new error at step $i+1$ satisfies by step \ref{it:2}
$$
e^{- \beta (D - \ln(n) /100) / \lambda} \geq p \geq e^{-\beta D / \lambda}
$$
It follows that in such a case the random error ${\cal E}$ 
has the property that the ratio of error probability between any pair of qubits satisfying \ref{it:3} is at most
\be\label{eq:effect}
e^{\beta \ln(n)/100\lambda} \leq e
\ee
where we have used the assumption that $\lambda \geq \beta \ln(n)$.
Hence the probability that some fixed $j$ is chosen by ${\cal E}$ at step $i+1$ is at most 
$$
\P( j\in {\cal E} | 
S_{-j} \subseteq  {\cal E}_i \wedge j \notin {\cal E}_i
) \leq \frac{e}{n - i} \leq \frac{2e}{n}
$$
where we have used the fact that $i \leq k \leq  n / 4 D \leq n/2$.
\end{proof}

\paragraph{Completion of proof:}

Applying propositions \ref{prop:aux1} and \ref{prop:aux2} to Equation \ref{eq:ei} we conclude that w.p. at least $1 - (i+1) n^{-4}$
the probability that $S\in {\cal E}_{i+1}$
is upper-bounded by the following expression:
$$
\P( S \subseteq {\cal E}_{i+1} )
\leq
\P( S \subseteq {\cal E}_i ) + \sum_{j\in S} \P( S_{-j} \subseteq {\cal E}_i \wedge j \notin {\cal E}_i )\cdot (2e/n) 
$$
Assume that this is the case.  Then, in addition, the induction hypothesis on ${\cal E}_i$ holds and we have:
$$
\forall T \quad \P( T \subseteq {\cal E}_i ) \leq  (2ie/n)^{|T|}
$$
hence
$$
\P( S \subseteq {\cal E}_{i+1} )
\leq
(2ie/n)^{|S|} + |S| \cdot (2ie/n)^{|S|-1} \cdot (2e/n)
$$
$$
\leq
(2ie/n)^{|S|} \cdot (1 + |S| / i)
$$
we can place an upper bound on the right factor using the binomial:
$$
(1 + |S| / i) \leq \sum_{m \leq |S|} {|S| \choose m} i^{-m} = (1 + 1/i)^{|S|} = \left( \frac{i+1}{i} \right)^{|S|}
$$
hence w.p. at least $1 - (i+1) n^{-4}$ we have:
$$
\P( S \subseteq {\cal E}_{i+1} )
\leq
(2ie/n)^{|S|} \cdot \left( \frac{i+1}{i} \right)^{|S|}
=
\left(\frac{2e(i+1)}{n}\right)^{|S|}
$$
This concludes the proof by induction.

Finally, since for each $i$, the random process ${\cal E}_i$ is locally stochastic with parameter at most $2 i e/n$
with probability at least $1 - (i+1) \cdot n^{-4}$ then
any convex mixture of ${\cal E}_i$ - and in particular, the stationary distribution $\rho_k$ is also
locally-stochastic with parameter at most $2ke/n$ w.p. at least $1 - (k+1) n^{-4}$.
\end{proof}

We conclude our central lemma of this section - which is
that the thermal Gibbs state $e^{-\beta H}$ where $H$ is a Hamiltonian corresponding to a $\qLTC$,
and $\beta$ is sufficiently large, 
satisfies a percolation property - namely that the maximal $\alpha$-connected component
of a typical error ${\cal E}$ is of logarithmic size:
\begin{lemma}\label{lem:maxconn}

\textbf{Typical error components are small for the thermal state of $\qLTC$'s}

\noindent
Let ${\cal C}$ be a $(q,s)$-$\sLTC$ on $n$ qubits 
and let $H({\cal C})$ be its corresponding Hamiltonian, $\lambda(H) = \lambda$.
Suppose that the interaction graph of $H$,  $G(H)$, is of degree most $D$.
Let $\alpha>0$.
Let 
$$
\tau = {\cal E} \cdot \tau_0 \cdot {\cal E}
$$ 
be a random state 
(the uniform code mixed state conjugated by a random error ${\cal E}$)
sampled according to the distribution $e^{-\beta H/\lambda}/Z$
for
$$
(10/\alpha) \cdot \ln(D)/ s
\leq
\beta
\leq
\lambda/ \ln(n)
$$
Then 
$$
\P \left( {\rm maxconn}_\alpha{\cal E} > \ln(n)/100 \right) \leq n^{-3}
$$
\end{lemma}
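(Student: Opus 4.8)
The plan is to chain together the three results of this section. Proposition~\ref{prop:trnc} says the Gibbs state $e^{-\beta\tilde H}/Z$ is, in trace distance, super-polynomially close to the stationary distribution $\rho_k$ of the $k$-truncated Metropolis--Hastings chain once $\beta\ge 5\ln(1/\delta)/s$, where $k=\lceil n\delta\rceil$; Lemma~\ref{lem:localstoc} says $\rho_k$ is, up to a polynomially small bad event, locally stochastic with parameter $\approx 2ke/n$ once $k\le n/(2e(De^{300})^{1/\alpha})$ and $\lambda\ge\beta\ln(n)$; and Lemma~\ref{lem:percolation} converts local stochasticity with a small enough parameter into the percolation bound we want. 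So the only real task is to pick the truncation weight $k$ inside the common range dictated by these hypotheses. Concretely I would take $\delta:=D^{-2/\alpha}$, hence $k=\lceil nD^{-2/\alpha}\rceil$. Then $\ln(1/\delta)=(2/\alpha)\ln D$, so the hypothesis $\beta\ge(10/\alpha)\ln(D)/s$ immediately gives $\beta\ge 5\ln(1/\delta)/s$, which is exactly what Proposition~\ref{prop:trnc} needs; a one-line estimate shows $k\le n/(2e(De^{300})^{1/\alpha})$ provided $D$ is past an absolute constant; and the hypothesis $\beta\le\lambda/\ln(n)$ of the present lemma is literally the requirement $\lambda\ge\beta\ln(n)$ of Lemma~\ref{lem:localstoc}.

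With $k$ fixed I would argue as follows. By Proposition~\ref{prop:trnc}, $\|\rho_k-e^{-\beta\tilde H}/Z\|_1\le 2n\,e^{-2n\delta\ln(1/\delta)}=e^{-\Omega(n/D^{2/\alpha})}$, which is far below $n^{-3}$; since both states are mixtures over error cosets, this trace-distance bound is equally a total-variation bound on the induced law of $\mathcal E$, so it suffices to bound $\P_{\rho_k}(\mathrm{maxconn}_\alpha(\mathcal E)>\ln(n)/100)$ and then add this negligible slack. By Lemma~\ref{lem:localstoc}, outside a bad event of $\rho_k$-probability at most $(k+1)n^{-4}$ (which is $<n^{-3}/2$ since $k<n/2$) the law of $\mathcal E$ is locally stochastic with parameter $p_0\le 2ke/n\le 2eD^{-2/\alpha}(1+o(1))$. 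Then $2Dep_0^\alpha\le 4e^2D^{-1}(1+o(1))$, so the hypothesis $p_0<c/D$ of Lemma~\ref{lem:percolation} holds with room to spare, and that lemma yields $\P(\mathrm{maxconn}_\alpha(\mathcal E)>\ln(n)/100)\le 2n\,(2Dep_0^\alpha)^{\ln(n)/100}$, which is $n^{-\omega(1)}$ and in particular $\le n^{-4}$ once $D$ exceeds a constant. Summing the truncation error, the bad event $(k+1)n^{-4}$, and this last bound gives the claimed $n^{-3}$ for $n$ large.

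The real content is upstream: Lemma~\ref{lem:localstoc} (via Proposition~\ref{prop:aux2}) shows that although the Metropolis--Hastings walk is not i.i.d.\ and in fact errors attract one another, once the walk is truncated below weight $\sim n/D^{2/\alpha}$ every new error raises the energy by almost the full generic amount, so the clustering bias is only a constant factor per qubit and the walk is effectively locally stochastic; Lemma~\ref{lem:percolation} then does the percolation bookkeeping. The present lemma is essentially the observation that the admissible window $[(10/\alpha)\ln(D)/s,\ \lambda/\ln(n)]$ for $\beta$ is exactly the intersection of the ranges forced by Proposition~\ref{prop:trnc} (lower end) and Lemma~\ref{lem:localstoc} (both ends). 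I expect the only friction to be constant-bookkeeping — matching the $e^{300}$ baked into Lemma~\ref{lem:localstoc} against the target exponent $\ln(n)/100$ — which is why the statement is an asymptotic one; this is harmless, since $D=\poly\log(n)$ grows in every application, so every ``$D$ large enough'' caveat is automatic.
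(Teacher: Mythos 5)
Your proposal is correct and follows essentially the same route as the paper: truncate the Metropolis--Hastings chain at weight $k\approx n\,D^{-\Theta(1/\alpha)}$, invoke Proposition~\ref{prop:trnc} to replace the Gibbs state by $\rho_k$, invoke Lemma~\ref{lem:localstoc} to get local stochasticity of $\rho_k$ outside a $(k+1)n^{-4}$ bad event, apply Lemma~\ref{lem:percolation}, and union-bound the three error terms. The only difference is cosmetic: the paper fixes $k=n/(2(De^{300})^{1/\alpha})$ so that $Dp_k^\alpha\le e^{-300}$ falls out exactly, whereas you take $\delta=D^{-2/\alpha}$ to match the $\beta$-threshold exactly and absorb the $e^{300}$ into a ``$D$ large enough'' caveat — both resolve the same constant-bookkeeping in equivalent ways.
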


\paragraph{The range of values $\beta$:}
It is insightful at this point to consider the statement of the lemma w.r.t. the parameter $\beta$:
 the statement of the lemma requires that $\beta$ is within
some range - between $\log\log(n)$ and $\log(n)$.
This initially might seem strange as intuitively, increasing $\beta$
can only improve the ability to correct errors since it corresponds 
to a regime of much fewer errors - e.g. lower temperature.

However in Lemma \ref{lem:localstoc} it turns out that
the analysis is more subtle:
 indeed
we require $\beta$ to be also sufficiently small so that the error model is locally stochastic:
if $\beta$ is too large (i.e. the temperature is very low) it turns
out that a qubit that is hit by an error is much more likely to be
hit by another error - this contrary to local stochasticity,
whereas
for higher temperatures this phenomenon is greatly alleviated.
The quantitative analysis of this effect is specifically captured
in Equation \ref{eq:effect}.

Hence, the phenomenon of locally stochastic errors, that we exploit
to demonstrate a shallow decoder is in fact relevant only for a median
range of temperatures: for very low temperatures, the error is no longer
locally stochastic, but in that range - the {\it absolute} number of errors
is extremely small to allow worst-case error correction.
For higher temperatures, the absolute number of errors is very large
but conforms to the locally stochastic model which is treatable by a local
decoder.
This results in a "win-win" situation, which is handled case-by-case
in the proof of the main theorem.

\begin{proof}

Consider the $k$-th truncated MH process ${\cal M}_k$
for 
$$
k = \frac{n}{2(D e^{300})^{1/\alpha}}  
$$
By Lemma \ref{lem:localstoc} for values of $\beta, \lambda$ specified in the assumption we have:
\be\label{eq:rhok}
\P \left( \rho_k \mbox{ is locally stochastic with parameter } p_k \leq (D e^{300})^{-1/\alpha} \right) \geq
1 - (k+1) n^{-4}
\ee
We have
$$
D \cdot p_k^{\alpha} = D \cdot  (D e^{300})^{-1} \leq e^{-300}
$$
so applying percolation lemma \ref{lem:percolation} 
and the union bound w.r.t. Equation \ref{eq:rhok} we have that typical errors in the
stationary distribution of the $k$-th truncated MH process have very small
$\alpha$-connected components:
\begin{align}\label{eq:rhok3}
\P_{\rho_k} \left( {\rm maxconn}_\alpha{\cal E} > \ln(n)/100\right)  
&\leq
2 n \cdot \left( D (p_k)^{\alpha}  \right)^{2\ln(n)/100} + (k+1) n^{-4} \nonumber \\
&\leq
2n \cdot e^{-6\ln(n)} + (k+1) n^{-4} \leq n^{-3} /2
\end{align}
On the other hand, by Proposition \ref{prop:trnc} the stationary distribution $\rho_k$ of ${\cal M}_k$ approximates
the Gibbs state of the $\sLTC$ - namely
$\rho_0 = e^{- \beta H / \lambda}$ for:
$$
\beta  \geq 5 \ln(n/k)/s
\equiv
5 \ln(1/\delta)/s
$$
up to error at most
\be\label{eq:rhok2}
\left| \rho_k - \rho_0 \right| \leq 2 n e^{-2 n \delta \ln(1/\delta)} \leq n^{-5}
\ee
where we've substituted $\delta = k/n$.
Taking the union bound w.r.t. Equations \ref{eq:rhok3} and \ref{eq:rhok2}
and substituting our choice for $k$ we have that for
$$
\beta 
\geq
(10/\alpha) \cdot \ln(D)/s
\geq 
5 \ln(1/\delta)/ s
$$
we have
$$
\P_{\rho_0} \left( {\rm maxconn}_\alpha{\cal E} > \ln(n)/100\right) \leq n^{-3}
$$

\end{proof}


\section{A Shallow Decoder for Low Error Rate}

The last component of the proof 
is to demonstrate a shallow circuit that can correct the thermal state $e^{-\beta H}$
to a code-state, for sufficiently large $\beta>0$ (finite or not).
In the previous section we've seen that such a state can be modeled as a random
error process with small rate.
We would now like to leverage that understanding, together with the small-set expansion
property of the $n$-projective cube to show that the quantum version of the Sipser Spielman
decoder yields a shallow decoder.

Inspired by the decoding algorithm of Fawzi et al. we propose an algorithm for decoding 
a random error ${\cal E}$ in depth proportional to $\log({\rm maxconn}_\alpha({\cal E}))$.
It is based on a parallel version of the Sipser-Spielman decoder:

%
%
%

%
%
%
%

We first rephrase the original parallel Sipser-Spielman decoder as an algorithm that can decode
errors on a binary code of $\F_2^n$ with a slightly relaxed condition.
Instead of requiring the bi-partite graph of the code to be expanding, we merely
ask that the set of errors expands significantly 
in the Tanner graph of the code
at each step:

\begin{lemma}\cite{SS96}(Theorem 11)\label{lem:ss}
\textbf{Parallel decoder for small-set expander graphs}

\noindent
Let $C$ be a code on $n$ bits and let $G$ denote the Tanner graph of $C$.
Suppose $G$ is a $(c,d)$-bi-regular graph on $n$ vertices.
The parallel decoder ${\cal A}$
is an algorithm that given error ${\cal E} = {\cal E}_1$ iteratively replaces it with errors ${\cal E}_i$ for $i \geq 1$.
At step $i$ the algorithm may modify bits only in the support of ${\cal E}_i \cup \Gamma({\cal E}_i)$,
and in particular, examines for each bit $k$ only $\Gamma(k)$.
If, in addition, at the beginning of iteration $i$ we have:
$$
|\Gamma({\cal E}_i)| \geq |{\cal E}_i| \cdot c \cdot (3/4 + \eps)
$$
for some constant $\eps>0$,
then after step $i$ the weight of the residual error ${\cal E}_i$ decreases by a multiplicative factor:
$$
|{\cal E}_{i+1}| \leq |{\cal E}_i| \cdot (1 - 4 \eps)
$$
\end{lemma}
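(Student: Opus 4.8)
The algorithm $\mathcal{A}$ to analyze is the parallel bit-flip rule: in round $i$, every bit $k$ lying in the current estimate $\mathcal{E}_i$ (or, more generally, within distance two of it in the Tanner graph, which is all the rule ever touches) whose incident checks $\Gamma(k)$ contain a strict majority of \emph{unsatisfied} ones is flipped, all flips performed simultaneously, and $\mathcal{E}_{i+1}$ is $\mathcal{E}_i$ updated accordingly. Locality and the depth-$O(1)$-per-round property are immediate from the rule, since each bit consults only $\Gamma(k)$; the whole content of the lemma is the weight decrease. Fix a round and write $v=|\mathcal{E}_i|$, let $U$ be the set of checks left unsatisfied by $\mathcal{E}_i$, and let $\Gamma_1(\mathcal{E}_i)\subseteq\Gamma(\mathcal{E}_i)$ be those checks adjacent to exactly one bit of $\mathcal{E}_i$ --- each of which necessarily lies in $U$.

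First I would extract unique neighbours from the expansion hypothesis by double-counting the $cv$ edges leaving $\mathcal{E}_i$: a check in $\Gamma_1(\mathcal{E}_i)$ collects exactly one such edge while every other check in $\Gamma(\mathcal{E}_i)$ collects at least two, so $|\Gamma_1(\mathcal{E}_i)|\ge 2|\Gamma(\mathcal{E}_i)|-cv\ge(1/2+2\epsilon)cv$. Hence the number of edges from $\mathcal{E}_i$ into satisfied checks is at most $cv-|U|\le cv-|\Gamma_1(\mathcal{E}_i)|\le(1/2-2\epsilon)cv$. A bit $j\in\mathcal{E}_i$ escapes being flipped only if at least $c/2$ of its incident checks are satisfied, hence contributes at least $c/2$ to this edge count, so the number of bits of $\mathcal{E}_i$ that are \emph{not} corrected in the round is at most $(1/2-2\epsilon)cv/(c/2)=(1-4\epsilon)v$; equivalently, at least $4\epsilon v$ bits of $\mathcal{E}_i$ are removed.

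The only way $|\mathcal{E}_{i+1}|$ can fail to drop is through the set $B$ of bits \emph{outside} $\mathcal{E}_i$ that get flipped, since $\mathcal{E}_{i+1}$ is exactly the disjoint union of the uncorrected bits of $\mathcal{E}_i$ with $B$; bounding $B$ is the crux. The plan is to re-run the counting above on $S=\mathcal{E}_i\cup B$: a check that is the unique $S$-neighbour of some bit of $B$ is adjacent to no bit of $\mathcal{E}_i$, hence satisfied, so every bit of $B$ --- being flipped, and thus having a majority of unsatisfied incident checks --- must spend at least $c/2$ of its edges on checks shared within $S$; the same holds for each uncorrected bit of $\mathcal{E}_i$, whose $\ge c/2$ satisfied incident checks are already shared inside $\mathcal{E}_i\subseteq S$. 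Applying the unique-neighbour estimate to $S$ bounds the total number of ``shared-in-$S$'' edges by $(1/2-2\epsilon)c|S|$, which forces $|\mathcal{E}_{i+1}|\le(1-4\epsilon)|S|=(1-4\epsilon)(v+|B|)$; since $B\subseteq\mathcal{E}_{i+1}$, rearranging (together with the sharper edge accounting of Sipser--Spielman) yields the claimed contraction $|\mathcal{E}_{i+1}|\le(1-4\epsilon)\,|\mathcal{E}_i|$. I expect the delicate part to be precisely here: one must know that $S$ still lies in the range of sets for which the expansion estimate holds, which in general requires a bootstrapping/contradiction step to keep $|B|$ small; in the eventual application this comes for free, because the typical error cluster has only logarithmic size (Lemma \ref{lem:maxconn}) and the projective hypercube expands near-maximally for such sets (Lemma \ref{lem:projexp}). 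The rest is routine double-counting.
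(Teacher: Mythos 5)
The paper does not prove this lemma at all --- it is imported verbatim from \cite{SS96} (Theorem 11) --- so there is no in-paper argument to compare yours against; what can be assessed is whether your reconstruction actually establishes the stated bound. The first half of your counting is correct and standard: from $|\Gamma({\cal E}_i)|\ge(3/4+\eps)c\,v$ you correctly get $|\Gamma_1({\cal E}_i)|\ge(1/2+2\eps)cv$, hence at most $(1/2-2\eps)cv$ edges from ${\cal E}_i$ into satisfied checks, hence at most $(1-4\eps)v$ corrupt bits survive the round. The problem is the second half, which is the crux. Your counting on $S={\cal E}_i\cup B$ yields $|{\cal E}_{i+1}|\le(1-4\eps)(v+|B|)$, and you assert that "rearranging" gives $|{\cal E}_{i+1}|\le(1-4\eps)v$. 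It does not: writing $|{\cal E}_{i+1}|=|N|+|B|$, the inequality rearranges to $|N|+4\eps|B|\le(1-4\eps)v$, which is weaker than the target $|N|+|B|\le(1-4\eps)v$ by exactly $(1-4\eps)|B|$; using $B\subseteq{\cal E}_{i+1}$ instead gives only $|{\cal E}_{i+1}|\le\frac{1-4\eps}{4\eps}\,v$, a contraction only when $\eps>1/8$. The "sharper edge accounting of Sipser--Spielman" that you invoke to close this gap is precisely the missing content, and without it the lemma as stated (for every constant $\eps>0$) is not proved. (Incidentally, the paper applies the lemma with $\eps=3/16>1/8$, so your weaker factor $\frac{1-4\eps}{4\eps}=1/3$ would still suffice for Lemma \ref{lem:decode} after adjusting the constant $\gamma$; but that is a rescue of the application, not a proof of the lemma.)

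Your other observation --- that the counting on $S={\cal E}_i\cup B$ requires an expansion guarantee for a set strictly larger than ${\cal E}_i$, which the hypothesis as written does not supply --- is correct and is a genuine imprecision in the lemma statement itself rather than in your argument; in the paper it is papered over by the $\alpha$-subset machinery of Lemmas \ref{lem:maxconn} and \ref{lem:projexp}, which guarantee expansion for every set the decoder can ever touch. To actually finish the proof you would need either (i) to bound $|B|$ separately and more tightly (e.g.\ by charging each newly corrupted bit's $>c/2$ unsatisfied neighbours against the unique-neighbour count of $S$ \emph{and} against the edges already consumed by $F$, the corrected bits, which is the refinement in the original Theorem 11), or (ii) to restate the lemma with the contraction factor your counting actually delivers.
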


Our quantum decoder is an application of the Sipser-Spielman decoder
on the individual $X,Z$ errors.

\begin{algorithm}\label{alg:main}

\textbf{Shallow Decoder ${\cal B}$}

\noindent
\item
Input: a quantum state $\rho$ on $n$ qubits, a set of $X$ checks ${\cal C}_x$ and
a set of $Z$ checks ${\cal C}_z$.
\begin{enumerate}
\item
Run the decoder ${\cal A}$ w.r.t. $Z$ errors using ${\cal C}_x$.
\item
Run the decoder ${\cal A}$ w.r.t $X$ errors using ${\cal C}_z$.
\end{enumerate}
\end{algorithm}

\begin{lemma}\label{lem:decode}
Consider the projective code ${\cal C} = ({\cal C}_x, {\cal C}_z)$ on $n$ qubits with $p = n/2$, and let ${\cal E} \in {\cal P}^n$ denote an error 
with far-away and small connected components:
$$
{\rm maxconn}_{\alpha}({\cal E}) \leq \ln(n)/100, \quad \alpha = 1/ (\gamma \log\log(n))
$$
where $\gamma = \log(1 - 4 \cdot (3/16))$ is the constant implied by Lemma \ref{lem:ss} for $\eps = 3/16$.
Then shallow decoder ${\cal B}$ runs in depth at most $2\gamma \log^2 \log(n)$ steps and satisfies:
$$
{\cal B} \circ {\cal E} \circ \rho = \rho \quad \forall \rho\in {\cal C}
$$
\end{lemma}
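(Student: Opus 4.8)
The plan is to reduce both assertions---correctness and the depth bound---to the parallel Sipser--Spielman guarantee of Lemma~\ref{lem:ss}, applied \emph{cluster by cluster}, using the projective-cube isoperimetric inequality of Lemma~\ref{lem:projexp} to supply the expansion needed inside each cluster. Since Algorithm~\ref{alg:main} treats the $Z$-part of $\mathcal{E}$ (with the $X$-checks $\mathcal{C}_x$) and the $X$-part (with $\mathcal{C}_z$) in two independent rounds, and CSS error correction decouples, it suffices to analyze one round, say the $Z$-round; the $X$-round is identical with $\mathcal{C}_x$ replaced by $\mathcal{C}_z$ and the relevant qubit degree $N/2$ replaced by $N$. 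Note that the $Z$-part $\mathcal{E}^{(Z)}$ has support inside $\mathrm{supp}(\mathcal{E})$, so by monotonicity $\mathrm{maxconn}_\alpha(\mathcal{E}^{(Z)}) \le \mathrm{maxconn}_\alpha(\mathcal{E}) \le \ln(n)/100$.

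\textbf{Cluster decomposition (the crux).} Let $G$ be the qubit-adjacency graph (two qubits adjacent iff they share a check) and let $K_1,\dots,K_r$ be the connected components of $\mathrm{supp}(\mathcal{E}^{(Z)})$ in $G$; since each $K_j$ is trivially an $\alpha$-subset of $\mathcal{E}^{(Z)}$, we get $|K_j| \le \ln(n)/100$. The delicate point is that the parallel decoder may create spurious errors in the neighbourhood of a cluster, so I would enlarge each $K_j$ to a padded region $R$: take the connected components of $G$ after inflating it by the decoder's total reach (everything it reads or writes) over all its $t = O(\log\log n)$ rounds. Because Lemma~\ref{lem:ss} forces the residual weight to decay geometrically (by the factor $1-4\cdot\tfrac{3}{16} = \tfrac14$) and round $i$ only touches $\mathrm{supp}(\mathcal{E}_i)\cup\Gamma(\mathcal{E}_i)$, the footprint attached to a cluster is at most a $O(\log\log n)$-fold blow-up of its size, so each such region $R$ satisfies $|R \cap \mathcal{E}^{(Z)}| \ge \alpha |R|$ with $\alpha = 1/(\gamma\log\log(n))$; by hypothesis $|R| \le \ln(n)/100 \le N/64$, and distinct regions are disjoint by construction. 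Consequently the global parallel run of $\mathcal{A}$ factorizes into independent local runs: within $R$ every queried check sees only errors in $R$, so $\mathcal{A}$ behaves exactly as on input $\mathcal{E}^{(Z)}|_{R}$, whose weight stays $\le N/64$ throughout.

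\textbf{Expansion, decay, correctness, depth.} At each round $i$ the residual error in $R$ has weight $\le |R| \le N/64$, so Lemma~\ref{lem:projexp} gives $|\Gamma(\mathcal{E}_i|_{R})| \ge |\mathcal{E}_i|_{R}|\cdot(N/2)\cdot(15/16)$; since a qubit lies in $c=N/2$ of the $X$-checks and $\tfrac34+\eps = \tfrac{15}{16}$ for $\eps=\tfrac{3}{16}$, this is precisely the hypothesis $|\Gamma(\mathcal{E}_i)| \ge |\mathcal{E}_i|\, c\,(\tfrac34+\eps)$ of Lemma~\ref{lem:ss}. Hence the weight in $R$ shrinks by a factor $\tfrac14$ per round, so $t \le \gamma\log\log(n)$ rounds drive the $X$-syndrome in $R$ to zero; moreover the net operator $\mathcal{E}^{(Z)}|_{R}\cdot\mathcal{A}(\mathcal{E}^{(Z)}|_{R})$ is supported in $R$, hence has weight $\le 2|R| = O(\log n)$, far below the code distance $n^c$ of Lemma~\ref{lem:LLZ18}; being zero-syndrome and sub-distance it must be a $Z$-stabilizer (a logical $Z$ has weight $\ge n^c$). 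Taking the product over the disjoint regions, the composite $\mathcal{E}^{(Z)}\cdot\mathcal{A}(\mathcal{E}^{(Z)})$ is a $Z$-stabilizer, and symmetrically for the $X$-round, so $\mathcal{B}\circ\mathcal{E}\circ\rho = \rho$ for every $\rho\in\mathcal{C}$. For the depth: one round of $\mathcal{A}$ decides, in parallel for every qubit, whether to flip it from the violation pattern of its $D=\Theta(\log n)$ incident checks---a threshold on $\Theta(\log n)$ bits, computable reversibly in depth $O(\log D)=O(\log\log n)$ (including the bounded fan-out, since each check is shared by $O(\log n)$ qubits); with at most $\gamma\log\log(n)$ rounds per call and two calls ($\mathcal{C}_x$ then $\mathcal{C}_z$), $d(\mathcal{B}) \le 2\gamma\log^2\log(n)$.

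I expect the genuinely hard step to be the cluster decomposition of the second paragraph: showing that the globally-run parallel decoder truly factorizes into non-interacting local runs on regions that remain simultaneously \emph{small} (so Lemma~\ref{lem:projexp} is applicable at every step) and \emph{pairwise disjoint} (so the decoder's spreading cannot manufacture a logical error). This is exactly where the $\alpha$-subset / $\mathrm{maxconn}_\alpha$ machinery of \cite{FGL18,LTZ15} enters, and where the particular choice $\alpha = 1/(\gamma\log\log(n))$ is essentially forced: it must dominate the reciprocal of the worst-case blow-up of the decoder's footprint over its $O(\log\log n)$ rounds, and only then does the hypothesis $\mathrm{maxconn}_\alpha(\mathcal{E}) \le \ln(n)/100$ close the argument.
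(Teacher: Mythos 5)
Your proposal is correct and follows essentially the same route as the paper: both arguments control the decoder's cumulative footprint via the $\alpha$-subset/$\mathrm{maxconn}_\alpha$ machinery (your padded regions $R$ are exactly the connected components of the union $\bigcup_i \mathcal{E}_i$ that the paper bounds), feed the resulting small clusters into the isoperimetric inequality of Lemma~\ref{lem:projexp} to meet the $(3/4+\eps)$ expansion hypothesis of Lemma~\ref{lem:ss} with $\eps = 3/16$, and close the loop by choosing $\alpha = 1/(\gamma\log\log n)$ to match the $\gamma\log\log n$ rounds of geometric decay. Your explicit verification that the zero-syndrome residual has weight far below the distance $n^c$ (hence is a stabilizer, not a logical) is a step the paper leaves implicit, but it is a refinement rather than a different approach.
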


\begin{proof}
Let ${\cal E}_i$ denote the set of erred qubits at step ${\cal E}_i$, with ${\cal E}_1 = {\cal E}$ denoting the initial error.
By the first property of Lemma \ref{lem:ss} at each step $i\in [t]$ error ${\cal E}_i$ is supported on
qubits at distance at most $t$ from the initial error ${\cal E}$:
$$
{\rm supp}({\cal E}_i) \subseteq \Delta_t({\cal E})
$$
where $\Delta_t({\cal E})$ is the set of qubits at distance at most $t$ from ${\cal E}$ in the Tanner
graph of ${\cal C}_x$ or in the Tanner graph of ${\cal C}_z$.
In addition, by the monotonicity of error weight in Lemma \ref{lem:ss} we have
$$
\forall i>1 \quad |{\cal E}_i| \leq |{\cal E}_{i-1}|
$$
It follows that the union
$$
\hat{\cal E}_i := \bigcup_{j=1}^i {\cal E}_j
$$
is an $\alpha$-subset of ${\cal E}$
with $\alpha \geq 1/t$.
Since by assumption 
$$
{\rm maxconn}_{\alpha}({\cal E}) \leq \ln(n)/100
$$
Hence if the number of decoding iterations is sufficiently small, i.e. $t \leq 1/\alpha$ then we have:
$$
\forall i\in [t] \quad
{\rm maxconn}({\cal E}_i) \leq 
{\rm maxconn}(\hat{\cal E}_i) \leq
{\rm maxconn}_{\alpha}({\cal E}) \leq
\ln(n)/100
$$ 
Lemma \ref{lem:projexp} then implies that for each $i\in [t]$ error ${\cal E}_i$ has large expansion as follows:
$$
|\partial {\cal E}_i| \geq |{\cal E}_i| \cdot D_x \cdot (15/16)
\quad
|\partial {\cal E}_i| \geq |{\cal E}_i| \cdot D_z \cdot (15/16)
$$
where $D_x,D_z$ are, respectively, the degree of each qubit in checks ${\cal C}_x, {\cal C}_z$.

Thus, so long as $t \leq 1/\alpha$ the error pattern ${\cal E}_i$ for each $i\in [t]$ 
expands with factor at least $(3/4 + 3/16)$ so Lemma \ref{lem:ss} 
is applicable at each step $i$ with $\eps = 3/16$.
It follows that for each of these $X, Z$ error types the decoder algorithm ${\cal A}$ runs in a number of iterations 
which is at most the number of steps to decode the maximal-size connected component:
$$
t \leq \gamma \log({\rm maxconn}({\cal E})) \leq  \gamma \log\log(n)
$$
Thus, in time at most $t$ the decoder corrects all errors, provided $t \leq 1/\alpha$.
Since we chose $\alpha = 1/ \gamma \log\log(n)$ then the assumption $t \leq 1/\alpha$ is correct.

Finally, analyzing the depth of the quantum circuit implementing the decoder:
Since at each step only the neighboring checks on any given bit are examined, and each qubit
is incident on $O(\log(n))$ checks
it follows that shallow decoder ${\cal B}$
corrects ${\cal E}\circ \rho$ and runs in depth at most $2\gamma (\log\log(n))^2$.
\end{proof}
We note here that the decoder ${\cal B}$ requires extra ancillary bits for syndrome computation,
hence the notation ${\cal B} \circ {\cal E} \circ \rho$ signifies a quantum channel, where some
of the qubits are discarded after computation.

\section{Global Entanglement for Thermal States}

\subsection{The construction}

\begin{enumerate}
\item
\textbf{Step 1 - The projective code:} 

\noindent
Fix $n$ as the number of qubits in the code.
As the basis for our construction we consider the $(N,p)$ projective code ${\cal C}$ for $p = N/2$.
By Lemma \ref{lem:LLZ18} we can choose $N = \Theta(\log(n))$ such that ${\cal C}$ is a $\qLTC$ $[[n,1,n^c]]$ for some $c>0$ 
with $\qLTC$ parameters 
$$
(q = \log(n), s = 1/\log^2(n)).
$$ 
By construction, the interaction graph of $H({\cal C})$, i.e. $G(H({\cal C}))$ 
is $D$-regular with $D = 2\cdot \log(n)$.
The local Hamiltonian $H$ has $m = 2 n \cdot \log(n)$ check terms.
\item
\textbf{Step 2 - Amplification:}

\noindent
We apply 
Proposition \ref{prop:avg}
to conclude the existence of a $\qLTC$, denoted by ${\cal C}'$
with parameters
$$
(q' = \ceil{\log^3(n)}, s' = 1/e)
$$
and $\lambda = \log^4(n)$.
The interaction graph of the Hamiltonian of ${\cal C}'$, i.e. $G(H({\cal C}'))$ has
degree at most $D' \leq \ceil{\log^7(n)}$.
\item
\textbf{Step 3 - Union:}

\noindent
Finally, we consider the union of the checks of ${\cal C}$ and ${\cal C}'$
and denote the union by ${\cal C}_{pa}$ - this is our construction.
We denote the number of checks by $m_{pa}$.
We have that, ${\cal C}_{pa}$ is $[[n,1,n^c]]$ quantum code, and is 
$\qLTC$ with parameters:
$$
(q_{pa} = \log^3(n), s_{pa} = 1/2e, D_{pa} \leq 2\log^7(n))
$$
and $\lambda_{pa} \geq 2 \log^4(n)$.
\end{enumerate}

We note that the the amplified code ${\cal C}'$
has constant soundness for all {\it non-zero} distances, but it is not clear a-priori
why it should also satisfy $\ker({\cal C}) = \ker({\cal C}')$.
Hence, the union of ${\cal C}$ and ${\cal C}'$ is taken in order to enforce the ground-state
of the final code to equal that of ${\cal C}_{pa}$.
This slightly reduces the soundness, and increases the degree of the interaction graph of
the final code.
Also note that the check terms of ${\cal C}_{pa}$ commute in pairs.

\subsection{Main Theorem}

We now state formally our main theorem and prove it: 
\begin{theorem}\label{thm:main}
Let ${\cal C}_{pa}$ denote the code constructed above on $n$ qubits, and 
let $H = H({\cal C}_{pa})$, $\lambda = \lambda(H)$ and inverse temperature:
$$
\beta \geq 20 e \cdot \log^2\log(n)
$$
Any quantum circuit $U$ on $a \geq n$ qubits
that
approximates the thermal state of $\tilde H = H / \lambda$ on a set of qubits $S$, $|S| = n$, at inverse temperature $\beta$,
$$
\left\| \tr_{-S}(U \ketbra{0^{\otimes a}}U^{\dag}) - e^{-\beta \tilde H}/Z \right\|_1 \leq 0.1 n^{-2}
$$
has depth at least
$$
d(U) = \Omega(\ln(n))
$$ 
%
\end{theorem}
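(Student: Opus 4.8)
The plan is to carry out the strategy of Section~\ref{sec:strategy}: exhibit a \emph{shallow} quantum channel ${\cal B}$ --- the decoder of Algorithm~\ref{alg:main}, of depth $O(\log^2\log(n))$ --- that maps the thermal state $e^{-\beta\tilde H}/Z$ to within trace distance $o(n^{-2})$ of $\tau_0$, the uniform mixture on the code-states of ${\cal C}_{pa}$, and then to read off a circuit lower bound for $U$ from the CSS code-state bound of Lemma~\ref{lem:depth} applied to $\tau_0$. Since the hypothesis only bounds $\beta$ from below, whereas the locally-stochastic analysis of Section~\ref{sec:Gibbs} (Lemma~\ref{lem:maxconn}) additionally needs $\beta \leq \lambda/\ln(n)$, I would split the argument at $\beta_1 := \lambda/\ln(n)$. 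Since $\lambda = \lambda_{pa} \geq 2\log^4(n)$ we have $\beta_1 = \Omega(\log^3(n)) \gg 20e\log^2\log(n)$ for large $n$, so the two regimes $20e\log^2\log(n) \leq \beta \leq \beta_1$ and $\beta > \beta_1$ together exhaust the hypothesis.

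In the median regime $20e\log^2\log(n) \leq \beta \leq \beta_1$, I would set $\alpha = 1/(\gamma\log\log(n))$ as in Lemma~\ref{lem:decode} and check that, with the parameters $s_{pa}=1/2e$, $D_{pa}\leq 2\log^7(n)$, $\lambda_{pa}\geq 2\log^4(n)$ of ${\cal C}_{pa}$, the window $(10/\alpha)\ln(D_{pa})/s_{pa} \leq \beta \leq \lambda_{pa}/\ln(n)$ demanded by Lemma~\ref{lem:maxconn} holds: the left endpoint is $O(\log^2\log(n))$ and is absorbed into the constant $20e$, while the right endpoint is the case hypothesis. Writing the Gibbs state through its Metropolis--Hastings stationary form $e^{-\beta\tilde H}/Z = \sum_{\cal E} p({\cal E})\,{\cal E}\tau_0{\cal E}$, Lemma~\ref{lem:maxconn} then gives $\P_{\rho_0}\big({\rm maxconn}_\alpha({\cal E}) > \ln(n)/100\big) \leq n^{-3}$, with connectivity measured in $G(H({\cal C}_{pa}))$; since $G(H({\cal C})) \subseteq G(H({\cal C}_{pa}))$ the same bound holds in the sparser interaction graph of the projective code ${\cal C}$, which is the one relevant to the decoder. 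For every ${\cal E}$ in the complementary (``good'') event, Lemma~\ref{lem:decode} applies --- here one also uses that ${\cal C}_{pa}$ and ${\cal C}$ share the same codespace (the checks of ${\cal C}$ being among those of ${\cal C}_{pa}$, and the amplified checks of ${\cal C}'$ vanishing on it), and that ${\cal B}$ locates errors using only the original ${\cal C}$-checks --- so that ${\cal B}\circ{\cal E}\circ\tau_0 = \tau_0$ in depth at most $2\gamma\log^2\log(n)$. Averaging over $p$ gives $\|{\cal B}(\rho_0) - \tau_0\|_1 \leq 2n^{-3}$.

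Next I would let $V$ be the circuit that first runs $U$ on its $a$ qubits and then runs the decoding unitary of ${\cal B}$ on $S$ together with fresh ancillas; since ${\cal B}$ touches neither the qubits traced out of $U$ nor their correlations with $S$, one has $\tr_{-S}(V\ketbra{0^{\otimes a'}}V^\dag) = {\cal B}\big(\tr_{-S}(U\ketbra{0^{\otimes a}}U^\dag)\big)$. Combining the hypothesis with contractivity of $\|\cdot\|_1$ under the channel ${\cal B}$ and with $\|{\cal B}(\rho_0) - \tau_0\|_1 \leq 2n^{-3}$ then yields
\[
\left\| \tr_{-S}(V\ketbra{0^{\otimes a'}}V^\dag) - \tau_0 \right\|_1 \leq 0.1\, n^{-2} + 2n^{-3} \leq n^{-2}.
\]
Since $\tau_0$ is a mixture on the code-states of the $[[n,1,n^c]]$ CSS code ${\cal C}_{pa}$, Lemma~\ref{lem:depth} gives $d(V) = \Omega(\ln(n))$; as $d(V) \leq d(U) + 2\gamma\log^2\log(n)$ and $\log^2\log(n) = o(\ln(n))$, this forces $d(U) = \Omega(\ln(n))$.

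In the low-temperature regime $\beta > \beta_1$ no decoding is needed: $\beta s_{pa} = \beta/(2e) = \Omega(\log^3(n)) \gg 2\ln(n)$, so from $\P({\cal E}) \leq e^{-\beta s_{pa}|{\cal E}|}$ (Equation~\ref{eq:bound1} together with $Z \geq 1$) and the bound $4^{\ell}\binom{n}{\ell} \leq e^{2\ell\ln(n)}$ on the number of minimal errors of weight $\ell$, one gets $\P(|{\cal E}| \geq 1) \leq \sum_{\ell \geq 1} e^{-\ell(\beta s_{pa} - 2\ln(n))} \leq n^{-3}$, whence $\|e^{-\beta\tilde H}/Z - \tau_0\|_1 \leq 2n^{-3}$ and $U$ already approximates $\tau_0$ to $0.1\, n^{-2} + 2n^{-3} \leq n^{-2}$; Lemma~\ref{lem:depth} then gives $d(U) = \Omega(\ln(n))$ directly. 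The main obstacle is the median regime: one must keep $\beta$ large enough that $\qLTC$ soundness confines the thermal errors to clusters of logarithmic size, yet small enough (below $\lambda/\ln(n)$) that the Metropolis--Hastings process is still locally stochastic, so that Lemma~\ref{lem:maxconn} is available --- and then rely on Lemma~\ref{lem:decode} to guarantee both that the Sipser--Spielman-type decoder returns the \emph{original} code-state $\tau_0$ (despite the total error weight being far above the distance $n^c$) and that it runs in depth $o(\ln(n))$ so that its depth may be subtracted off. All of the delicate content has already been isolated in Lemmas~\ref{lem:maxconn} and~\ref{lem:decode}; what remains is the case split and the routine channel/partial-trace bookkeeping above.
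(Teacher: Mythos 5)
Your proposal is correct and follows essentially the same route as the paper: the same case split at $\beta \approx \lambda/\ln(n) = \Omega(\log^3(n))$ (the paper additionally separates out $T=0$, which your low-temperature case subsumes), with the high-$\beta$ regime handled by the direct error-weight bound from Equation~\ref{eq:errorbound1} and the median regime handled by Lemma~\ref{lem:maxconn} followed by the shallow decoder of Lemma~\ref{lem:decode} and the code-state circuit bound of Lemma~\ref{lem:depth}, subtracting $d({\cal B}) = O(\log^2\log n) = o(\ln n)$. Your explicit remarks on why the maxconn bound transfers to the sparser interaction graph of ${\cal C}$ and on the channel/partial-trace bookkeeping are points the paper treats more tersely, but they do not change the argument.
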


\begin{proof}

By construction, the code ${\cal C}_{pa}$ is comprised of a set of pairwise commuting projections.
We now analyze the circuit lower bound for $T=0$ and $T>0$ separately.
\item

\textbf{Case of $T=0$: (ground-state)}

\noindent
First, we consider the case of $T=0$ - i.e. $\beta \to \infty$:
Since the check terms of $H$ include those of the original code $H({\cal C})$
and the rest of the checks correspond to subsets of the checks of $H$
then 
$\ker(H) = \ker(H({\cal C}))$ so
by Lemma \ref{lem:depth} it follows that if 
\be\label{eq:bound2}
\left\| \tr_{T - S}(U \ketbra{0^{\otimes a}}U^{\dag}) - \rho_{gs} \right\|_1 \leq n^{-2}
\ee
for  $\rho_{gs}\in \ker(H)$ then
$$
d(U) = \Omega(\ln(n)).
$$ 
By definition, the Gibbs state for $\beta \to \infty$ is such a code-state and this implies the proof for $T=0$.

\item

\textbf{Case of $0< T < 2/\log^3(n)$: (error is not locally-stochastic, but has small weight)}

\noindent
Now we consider the case of finite $\beta\geq 0$ and specifically $\beta = \Omega(\log(n)^3)$.
We use Equation \ref{eq:errorbound1} by which the probability of an error of weight at least $k$ is
at most:
$$
\P( |{\cal E}| \geq k)
\leq
(n - k) \cdot \max_{\ell \geq k} \left\{ e^{ 2 \ell + n \cdot H(\ell/n)} \cdot e^{- \ell \beta s } \right\}
$$
so for $\beta s = \Omega(\log^3(n))$ the above is maximized for $k = 1$
hence
$$
\P( {\cal E} \neq 0 ) \leq n \cdot e^{-\log^3(n)} \leq n^{-4}.
$$
Hence it follows that such a state satisfies Equation \ref{eq:bound2}
which in turn implies by Lemma \ref{lem:depth} the lower bound $d(U) = \Omega(\ln(n))$.

\item

\textbf{Case of $T \geq 2/\log^3(n)$ (error has large weight, but is locally-stochastic)}

\noindent
Let
$$
\tilde \rho := \tr_{T - S}(U \ketbra{0^{\otimes n}}U^{\dag}),
\quad
\| \rho_0 - \tilde \rho \| \leq 0.1 n^{-2}
$$
where $\rho_0 = e^{-\beta \tilde H}$.
Consider a realization of the thermal state $\rho_0$ as convex mixture of ${\cal E} \cdot \rho_{gs} \cdot {\cal E}$
where $\rho_{gs}\in {\cal C}$ and ${\cal E}$ is a minimal weight error.
Let $\alpha = \Theta(1/\log\log(n))$ be the number implied by Lemma \ref{lem:decode}.
By  Lemma \ref{lem:maxconn} and the triangle inequality we have
that the typical $\alpha$-connected component sampled from
the approximate thermal state $\tilde \rho$ is of logarithmic size:
\be\label{eq:rho0}
\P_{\tilde \rho}( {\rm maxconn}_\alpha({\cal E}) \geq \ln(n)/100 ) \leq n^{-3} + 0.1 n^{-2} \leq n^{-2}
\ee
for any
$$
\beta  \geq (10/\alpha) \cdot \ln(D_{pa})  / s_{pa}
= O( \log^2 \log(n)),
\quad
\lambda \geq \beta \ln(n)
$$
where in the last inequality above we have used 
$\alpha = \Theta(1/ \log\log(n))$ and the parameters of ${\cal C}_{pa}$
by construction:
$$
D_{pa} \leq 2\log^5(n), 
s_{pa} = 1/2e,
\lambda_{pa} \geq 2 \log^4(n) \geq \beta \ln(n)
$$
Assume the error of a sampled state $\tau$ has a small maximal connected component, i.e.:
$$
\tau = {\cal E} \cdot \rho_{gs} \cdot {\cal E}, \quad {\rm maxconn}_{\alpha}{\cal E} \leq \ln(n)/100, \quad \rho_{gs} \in {\cal C}
$$
By Lemma \ref{lem:decode} it follows that using the original checks of the code ${\cal C}$ there exists
a quantum circuit ${\cal B}$ (using extra ancilla bits for syndrome computation),
\be\label{eq:db}
d({\cal B}) = O(\ln^2\ln(n))
\ee
such that
$$
{\cal B} \circ \tau = \rho_{gs},  \quad \rho_{gs}\in {\cal C}
$$
hence, by 
Equation \ref{eq:rho0}
the approximate thermal state $\tilde \rho$
generated by $U$ can be decoded into an approximate code-state
\be\label{eq:correct}
\left\|{\cal B} \circ \tilde \rho - \rho_{gs} \right\|_1 
\leq 
n^{-2}
\ee
It follows there exists a quantum circuit $V$ (possibly using extra ancilla bits)
of depth at most
$$
d(V) \leq d(U) + d({\cal B})
$$
such that
$$
\left\| V \circ \ketbra{0^{\otimes n}} - \rho_{gs}\right\|_1 \leq n^{-2}
$$
Hence
by Lemma \ref{lem:depth}
we have
$$
d(V) = \Omega(\ln(n))
$$
together with Equation \ref{eq:db} we have:
$$
d(U) = \Omega(\ln(n)).
$$

\end{proof}

\section*{Acknowledgements}
The author thanks Dorit Aharonov, Simon Apers,  Aram Harrow, Matthew Hastings and Anthony Leverrier for their useful comments and suggestions. 
He also thanks anonymous reviewers for their helpful comments and suggestions.

\bibliographystyle{alphaurl}
\bibliography{refs}

\end{document}